\newtheorem{theorem}{Theorem}
\newtheorem{example}{Example}
\newtheorem{lemma}[theorem]{Lemma}
\newtheorem{remark}{Remark}
\newenvironment{proof}[1][Proof]{{\em #1:} }{\ \rule{0.5em}{0.5em}}
\newcommand{\sinc}{{\rm sinc}}
\newcommand{\E}[1]{{\rm E}\left[{#1}\right]}
\DeclareMathOperator{\sech}{sech}
\begin{document}

\author{
\IEEEauthorblockN{Gerhard Kramer}
\thanks{Date of current version \today.

G. Kramer was supported in part by an Alexander von Humboldt
Professorship endowed by the German Federal Ministry of Education and
Research, and in part by the DFG under Grant KR 3517/8-1.
This paper was presented in part at the 2017
Conference on Lasers and Electro-Optics Pacific Rim.

G. Kramer is with the Institute for Communications Engineering,
Department of Electrical and Computer Engineering,
Technical University of Munich, 80333 Munich, Germany (e-mail:
gerhard.kramer@tum.de).

\vspace{1cm}
}
}

\title{Autocorrelation Function for Dispersion-Free
Fiber Channels with Distributed Amplification}

\maketitle

\begin{abstract}
Optical fiber signals with high power exhibit spectral broadening that
seems to limit capacity.
To study spectral broadening, the autocorrelation function
of the output signal given the input signal is derived for a simplified fiber model
that has zero dispersion, distributed optical amplification (OA), and
idealized spatial noise processes.
The autocorrelation function is used to upper bound the output power
of bandlimited or time-resolution limited receivers, and thereby to
bound spectral broadening and the capacity of receivers with thermal noise.
The output power scales at most as the square-root of the launch
power, and thus capacity scales at most as one-half the logarithm
of the launch power. The propagating signal bandwidth
scales at least as the square-root of the launch power.
However, in practice the OA bandwidth should exceed the
signal bandwidth to compensate attenuation. Hence, there is a
launch power threshold beyond which the fiber model loses practical relevance. 
Nevertheless, for the mathematical model an upper bound on capacity
is developed when the OA bandwidth scales as the square-root of the
launch power, in which case capacity scales
at most as the inverse fourth root of the launch power.
\end{abstract}

\begin{keywords}
Autocorrelation, Channel capacity, Dispersion, Kerr effect, Noise
\end{keywords}

\section{Introduction}
\label{sec:intro}
Optical fiber is a medium that exhibits frequency-dependent dispersion, nonlinearity,
and noise, where one source of noise is distributed optical amplification (OA)~\cite{ekwfg-JLT10}.
One obstacle to understand the capacity of optical fiber is that combining nonlinearity and
OA causes \emph{spectral broadening} that seems difficult to characterize. To make progress,
we study a simplified fiber model that hopefully retains the essential features of spectral
broadening. In particular, we neglect dispersion and the frequency dependence of
nonlinearity, and we consider distributed noise processes with idealized statistics.

There are two existing approaches to analyze dispersion-free fiber with OA.
The first is by Mecozzi~\cite{Mecozzi-94} who derived the \emph{per-sample} statistics
of the channel, including the channel conditional probability distribution.
Turitsyn et al.~\cite{Turitsyn-Derevyanko-Yurkevich-Turitsyn-PRL03}
and Yousefi and Kschischang~\cite{Yousefi-Kschischang-IT11}
rederive this distribution with other methods. They further argue that, for
large launch power $P$, the \emph{per-sample} capacity is the
same as the capacity of an additive white Gaussian noise (AWGN) channel
with intensity modulation and a direct detection receiver, i.e.,
capacity grows as $\frac{1}{2} \log P$ for large $P$.
Refined results appear in~\cite{Mecozzi-04,Terekhov-etal-A16,Fahs-etal-A17}.

A second approach considers the entire received \emph{waveform}.
Tang~\cite{Tang-JLT01a,Tang-JLT01b} studied the auto- and crosscorrelation
functions of the channel input and output signals when the input signals are
Gaussian and stationary, and in particular when the input signals are sinc pulses
with complex and circularly symmetric Gaussian modulation. The autocorrelation
function defines the signal power spectral density (PSD) that lets one study
spectral broadening. Tang used the PSD to evaluate Pinsker's  capacity lower
bound~\cite{Pinsker-64} for wavelength division multiplexing (WDM) and
per-channel receivers without cooperation.

\subsection{Limitations of the Per-Sample Model}
\label{subsec:per-sample-vs-multi-sample}
The per-sample model is attractive because one has closed-form expressions
for the statistics. Furthermore, one might suspect that the per-sample capacity
predicts what is ultimately possible with high-speed receivers.
However, we argue that  the model has several limitations and pitfalls.

First, the per-sample statistics do not capture spectral broadening,
and this tempts one to consider only the \emph{launch} signal bandwidth rather than
the \emph{propagating}
signal bandwidth\footnote{There are many reasonable definitions for bandwidth. We
use a common one, namely the length of the frequency range centered at the
carrier frequency that contains a specified fraction of the signal power.}.
The propagating signal bandwidth $W$ grows with the launch power
$P$ and a practical requirement is that the OA bandwidth $B$ exceed $W$ to
compensate attenuation, i.e., one requires $B \ge W$.
However, we show that there is a $P$ beyond which $B$ does not
exceed $W$ and the model loses practical relevance.\footnote{The short
article~\cite{Wei-Plant-A06} also argues that the model of~\cite{Mecozzi-94}
may be impractical for large $P$. The arguments are based on
empirical observations concerning spectral broadening and signal-noise mixing.}
The growth of $W$ is due to signal-noise mixing that cannot be controlled by
waveform design. 

Second, a per-sample receiver has infinite bandwidth while practical receivers
are bandlimited. In other words, a per-sample analysis takes limits in a particular order:\ first
the receiver bandwidth is made infinite and then $P$ is made large. However, for a
given system the receiver bandwidth is fixed, and changing the order of limits
(first $P$ is made large) can change the capacity scaling. 

Third, the per-sample model ignores correlations in the received waveform,
and this can lead to suboptimal receivers. In fact, we show that a \emph{three-sample}
receiver achieves unbounded capacity for \emph{any} $P$ for the model studied
in~\cite{Mecozzi-94,Turitsyn-Derevyanko-Yurkevich-Turitsyn-PRL03,Yousefi-Kschischang-IT11}.
The per-sample rate $\frac{1}{2} \log P$ thus underestimates capacity.\footnote{
The potential for capacity increase was noted in~\cite[Sec.~VIII]{Yousefi-Kschischang-IT11}
but without recognizing the extent of the effect, i.e., that the noise model is unreasonable.
Hence, the main conclusions in~\cite[Sec.~VIII]{Yousefi-Kschischang-IT11}
should be treated with caution, namely that the capacity of dispersion-free
fiber grows as $\frac{1}{2}\log P$, and that a potential peak of
spectral efficiency curves is due to deterministic effects only, and not due
to signal-noise mixing.}
This issue will also appear for the nonlinear Schr\"odinger equation
(NLSE) with dispersion, nonlinearity, and distributed noise.

The result may be understood as follows:\ the noise in the model of~\cite{Mecozzi-94}
has limited bandwidth while a per-sample receiver has infinite bandwidth.
Thus, by sending signal energy in the noise-free spectrum one
achieves large rate, cf.~\cite[Thm.~5]{Wyner-BSTJ66}. The reader may expect
that an obvious fix is to add white (thermal or electronic) noise to the channel
or receiver models. However, the per-sample capacity is then zero. This conundrum
shows that reasonable and precise noise models, device models, and spectral
analyses are needed when analyzing capacity, e.g., see~\cite[Sec.~IX.A-B]{ekwfg-JLT10}.

Based on these observations, we conclude that one should study the waveform
model, and not only the per-sample model. More precisely, we study
\emph{filter-and-sample} models where the receiver projects its input waveform
onto orthogonal functions, e.g., time-shifted sinc pulses or time-shifted rectangular pulses. 
The motivation for considering these two sets of pulses is to include the engineering
constraints of finite bandwidth and/or finite time resolution. We further model the
projections as being corrupted by thermal noise. We then proceed to study
\emph{two-sample} statistics to compute autocorrelation functions, PSDs, and receiver
power levels. Finally, we study OA bandwidth that grows with the propagating signal
bandwidth to better understand spectral broadening.

We remark that, to permit analysis, we make several idealizations in
addition to neglecting dispersion and the frequency dependence of nonlinearity.
For example, we idealize the spatial noise statistics at two different time instances
to be jointly Wiener. The resulting model is subtly different than the one studied
in~\cite{Mecozzi-94,Turitsyn-Derevyanko-Yurkevich-Turitsyn-PRL03,Yousefi-Kschischang-IT11,Mecozzi-04,Terekhov-etal-A16,Fahs-etal-A17}, and we discuss these differences in
Section~\ref{subsec:dispersion-free-model} and Appendix~\ref{app:raman}.
The model lets us show that spectral efficiency decreases rapidly
with increasing $P$ for \emph{any} launch signal and for large $P$.
A similar result was shown for WDM in optically-routed networks in~\cite{ekwfg-JLT10}.

\subsection{Organization}
\label{subsec:organization}
This paper is organized as follows. Section~\ref{sec:prelim} describes notation,
second order statistics, AWGN channels and their capacities, and certain
hyperbolic functions. Section~\ref{sec:fiber-models} describes the fiber and OA noise models
under study. Section~\ref{sec:rx-models} reviews several receiver models, including
per-sample models, filter-and-sample models, bandlimited receivers, and time-resolution
limited receivers. Section~\ref{sec:auto} states our main result:\ the autocorrelation
function for a dispersion-free fiber model with distributed OA and idealized noise statistics.
Section~\ref{sec:rectangular} studies the autocorrelation function for rectangular pulses.
Section~\ref{sec:P-E-scaling} develops upper bounds on the output power and
energy of the receivers, as well as lower bounds on the propagating signal bandwidth.
Section~\ref{sec:capacity} uses the power bounds to 
develop capacity upper bounds. Section~\ref{sec:conclusions} concludes the paper.
The appendices provide supporting material, including a review of
theory from \cite{Mecozzi-94}.

\section{Preliminaries}
\label{sec:prelim}
\subsection{Basic Notation}
\label{subsec:notation}
This section describes basic notation that we use for signals and random variables.
For convenience, further selected notation is listed in Table~\ref{table:notation} at the end
of the document.

We study signals $u(z,t)$ where $z$ is a spatial variable and $t$ is a time variable.
The position $z=0$ is where the information-bearing signal $u(0,\cdot)$ is launched.
To shorten notation, we often write $u_z(t)=u(z,t)$, and even drop
$z$ if the position is clear from the context. For example, we often write
$u(t)$ for $u(z,t)=u_z(t)$. We also often drop the time indices for convenience,
e.g., we write $u_0=u_0(t)$ and $u_0'=u_0(t')$.

We write random variables with uppercase letters and realizations
of random variables with the corresponding lowercase letters. For example,
we follow~\cite{Mecozzi-94} and study the statistics of the random variables
$U(z,t)$ for different $t$ when conditioned on the event $U_0(\cdot)=u_0(\cdot)$.
The expectation of $X$ is denoted by $\E{X}$, and the conditional expectation
based on the event $Y=y$ is denoted by $\E{X | Y=y}$.

The notation $y^*$ refers to the complex conjugate of $y$.
$\Re(y)$ and $\Im(y)$ are the respective real and imaginary parts of $y$.
The function $1(\cdot)$ is the indicator function that takes on the value 1 if its
argument is true, and is otherwise 0. The function $\delta(\cdot)$ is the
Dirac-$\delta$ operator, and we write $\sinc(y)=\sin(\pi y)/(\pi y)$ with $\sinc(0) = 1$.
The functions $I_0(\cdot)$ and $I_1(\cdot)$ are the modified Bessel functions of the
first kind of orders 0 and 1, respectively. We write $Q \lesssim P^x$ if
$\lim_{P\rightarrow\infty} \left[\log Q/\log P\right] \le x$, and similarly for $Q \gtrsim P^x$.

\subsection{Autocorrelation Functions and Power Spectral Densities}
\label{subsec:autocorr-psd}
We study the conditional and average autocorrelation functions
\begin{align}
  & A_z(t,t') = \E{ \left. U_z(t) \, U_z(t')^* \right| U_0(\cdot)=u_0(\cdot) } \label{eq:autocorr-cond} \\
  & \bar{A}_z(t,t') = \E{A_z(t,t')} = \E{ U_z(t) \, U_z(t')^*}
\end{align}
where the bar above $A_z(t,t')$ specifies that we have taken the expectation with
respect to the launch signal $U_0(\cdot)$.
As described above, we often drop the subscript $z$ for convenience, e.g., we write
$A(t,t')$ for $A_z(t,t')$. A basic property of the autocorrelation function is
$A(t,t')=A(t',t)^*$.

The PSD is defined as
\begin{align}
\bar{\mathcal{P}}(f) & = \lim_{T\rightarrow\infty} \bar{\mathcal{P}}(f,T)
\label{eq:PSD-def}
\end{align}
assuming the limit exists, where
\begin{align}
\bar{\mathcal{P}}(f,T) & = \frac{1}{T} \, {\rm E} \left[ 
\left| \int_{-T/2}^{T/2} U(t) e^{-j 2 \pi f t} \, dt \right|^2 \right] \nonumber \\
& = \frac{1}{T} \, \int_{-T/2}^{T/2} \int_{-T/2}^{T/2} \bar{A}(t,t') e^{- j 2 \pi f (t-t')} \, dt' \, dt .
\label{eq:PSD-T-def}
\end{align}

\subsection{Pulse Amplitude Modulation}
\label{subsec:cyclostationary}
We will sometimes consider pulse-amplitude modulation (PAM) with time period $T_s$
for which the launch signals are
\begin{align}
   u_0(t) = \sum_{k} x_k \, g(t-kT_s) \label{eq:PAM}
\end{align}
where the $x_k$ are complex-valued modulation symbols and
$g(\cdot)$ is a pulse shape with unit energy.
If the $x_k$ are realizations of a stationary discrete-time process, then
the signals \eqref{eq:PAM} are cyclostationary~\cite[p.~70]{Proakis-Salehi-5}.
That is, for all integers $\ell$, we have
$$\bar{A}(t-\ell T_s,t'-\ell T_s) = \bar{A}(t,t').$$
We may thus focus on the time-averaged autocorrelation function
\begin{align}
\bar{A}(\tau) & = \frac{1}{T_s} \int_{0}^{T_s} \bar{A}(t,t-\tau) dt
\label{eq:autocorr-stationary}
\end{align}
and we have
\begin{align}
\bar{\mathcal{P}}(f) & =  \int_{-\infty}^{\infty} \bar{A}(\tau) e^{- j 2 \pi f \tau} \, d\tau .
\end{align}

\subsection{Additive White Gaussian Noise Channels}
\label{subsec:AWGN}
The classic way of dealing with noise for linear channels is to use the AWGN model
\begin{align}
   u_r(t) = u_0(t) + n_r(t)
   \label{eq:AWGN-noise-model}
\end{align}
where $n_r(\cdot)$ is a realization of the complex, circularly symmetric, white, Gaussian process
$N_r(\cdot)$ with a one-sided PSD of $N_0$ Watts/Hertz across all frequencies.
We will consider {\em thermal} noise with $N_0=k_B T_e$ where
$k_B \approx 1.381 \times 10^{-23}$ Joules/Kelvin is Boltzmann's constant, and where
$T_e$ is the temperature in Kelvin.

The model \eqref{eq:AWGN-noise-model} is artificial because $n_r(\cdot)$ has
infinite bandwidth and infinite power.\footnote{The per-sample capacity of
the AWGN channel is therefore zero.} Of course, noise encountered in practice
has finite bandwidth and power, and the idea is that the noise
PSD is flat for frequencies much larger than those of the processing capabilities
of the transmitter or receiver. An optimal receiver projects its input
signal onto the linear subspace spanned by the transmit signals, see
Sec.~\ref{subsec:receiver-noise}.

Consider next the bandlimited AWGN channel
\begin{align}
   u_r(t) = \left( u_0(t)  + n_r(t) \right)* W \sinc(W t)
   \label{eq:AWGN-noise-model-bandlimited}
\end{align}
where $*$ denotes convolution.
One can convert this channel into a discrete-time channel by
sampling $u_r(\cdot)$ at the Nyquist rate $W$ Hz. The capacity
under the average power constraint 
\begin{align}
   \bar{P}_T = \frac{1}{T} \int_{-T/2}^{T/2} \E{\left| U_0(t) \right|^2} \, dt \le P
   \label{eq:P-constraint}
\end{align}
for large $T$ is achieved by using PAM, sinc pulses, and Gaussian modulation,
and is given by \cite[Sec.~25]{Shannon48}
\begin{align}
   C(W) = W \log_2\left( 1 + \frac{P}{W N_0} \right) \text{ bits/s}.
   \label{eq:CW}
\end{align}
The value $C$ increases with $W$, and we have
\begin{align}
   \lim_{W \rightarrow \infty} C(W) = \frac{P}{N_0} \log_2(e) \text{ bits/s}.
   \label{eq:CWlim}
\end{align}
In other words, capacity scales logarithmically with the signal-to-noise
ratio (SNR) $P/(WN_0)$ \emph{with} a bandwidth limitation,
and linearly with $P/N_0$ \emph{without} a bandwidth limitation.

The spectral efficiency is defined as
\begin{align}
   \eta(W) = \frac{C(W)}{W} = \log_2\left( 1 + \frac{P}{W N_0} \right) \text{ bits/s/Hz}
   \label{eq:etaW}
\end{align}
and we have $\eta(W) \rightarrow 0$ in the limit of large $W$. However,
one usually studies $P=E/T_s$ where $E$ is the average energy of PAM
with sinc pulses that are offset by $T_s=1/W$ seconds. We thus have
\begin{align}
   \eta(W) = \frac{C(W)}{W} = \log_2\left( 1 + \frac{E}{N_0} \right) \text{ bits/s/Hz}
   \label{eq:eta}
\end{align}
which is independent of $W$. Note that this approach has a transmit power
$P=E W$ that grows with $W$.

\begin{remark}
The constraint \eqref{eq:P-constraint} permits {\em peaky} or {\em flash} signals
with arbitrarily large amplitudes if $T\rightarrow\infty$.
In practice, however, the input amplitude is limited, i.e.,
we require $|u_0(t)|\le A_{\rm max}$ for all $t$ and for some positive $A_{\rm max}$.
The capacity under an input amplitude constraint was studied in~\cite[Sec.~26]{Shannon48},
for example.
\end{remark}
\begin{remark}
Suppose $U_0(\cdot)$ has the PSD $\bar{\mathcal{P}}_0(\cdot)$ so that
the power at the output of the channel \eqref{eq:AWGN-noise-model-bandlimited} is
\begin{align}
   \bar{P}_r(W) & = WN_0 + \int_{-W/2}^{W/2} \bar{\mathcal{P}}_0(f) \, df .
   \label{eq:Pr-def}
\end{align}
Suppose further that, instead of the \emph{launch} constraint \eqref{eq:P-constraint},
the \emph{receiver} signal $U_r(\cdot)$ must satisfy $\bar{P}_r(W)  \le P+WN_0$.
The capacity of the channel \eqref{eq:AWGN-noise-model-bandlimited}
is then again (see \cite[Sec.~29]{Shannon48} and \cite{Gastpar-IT07})
\begin{align}
  C(W) = W \log_2\left( 1 + \frac{P}{W N_0}  \right) \text{ bits/s} .
  \label{eq:general-capacity-bound}
\end{align}
\end{remark}

\subsection{Hyperbolic Functions}
\label{subsec:hyperbolic-functions}
We use the following functions with complex arguments:
\begin{align}
& S(c) = \sech\left( \sqrt{2c} \, z\right) \label{eq:S-function} \\
& T(c) = \left. \tanh\left( \sqrt{2c} \, z\right)\right/\sqrt{2c} \label{eq:T-function}
\end{align}
where we suppress the dependence on $z$ on the left-hand side (LHS)
of \eqref{eq:S-function} and \eqref{eq:T-function} for notational simplicity.
As further simplification, we write $S_R(c)=\Re(S(c))$, $S_I(c)=\Im(S(c))$,
$T_R(c)=\Re(T(c))$, and $T_I(c)=\Im(T(c))$.

Consider $c=-j x/z^2$ where $x$ is real and non-negative.
The following bounds are valid numerically, see Fig.~\ref{fig:Splot} and~\ref{fig:Tplot}:
\begin{align}
   \begin{array}{ll}
   |S(c)| \le 1, & |T(c)| \le z \\
   -0.136 \le S_R(c) \le 1, & 0\le T_R(c) \le z  \\
   -0.028 \le S_I(c) \le x, & 0 \le T_I(c) \le (2/3) x z .
   \end{array} \label{eq:STbounds1}
\end{align}
For small $x$, we have
\begin{align}
\begin{array}{ll}
   |S(c)| \ge d, & |T(c)| \ge d\cdot z \\
   S_R(c) \ge d, & T_R(c) \ge d\cdot z \\
   S_I(c) \ge d\cdot x, & T_I(c) \ge d\cdot (2/3) x z
\end{array} \label{eq:STbounds2}
\end{align}
for a constant $d$ that approaches 1 as $x$ approaches 0.
We further have the following bounds, see Figs.~\ref{fig:Splot} and~\ref{fig:Tplot}:
\begin{align}
   & |S(c)| \le \sqrt{5} \, e^{-\sqrt{x}} \label{eq:Smagbound} \\
   & T_I(c) \ge  \frac{z}{3} \min\left( x, \frac{1}{\sqrt{x}} \right) \label{eq:Tbound3} \\
   & \frac{S_I(c)^2}{T_I(c)} \le  \frac{3x}{2z}. \label{eq:SbyTbound}
\end{align}

\begin{figure}[t!]
  \centerline{\includegraphics[scale=0.48]{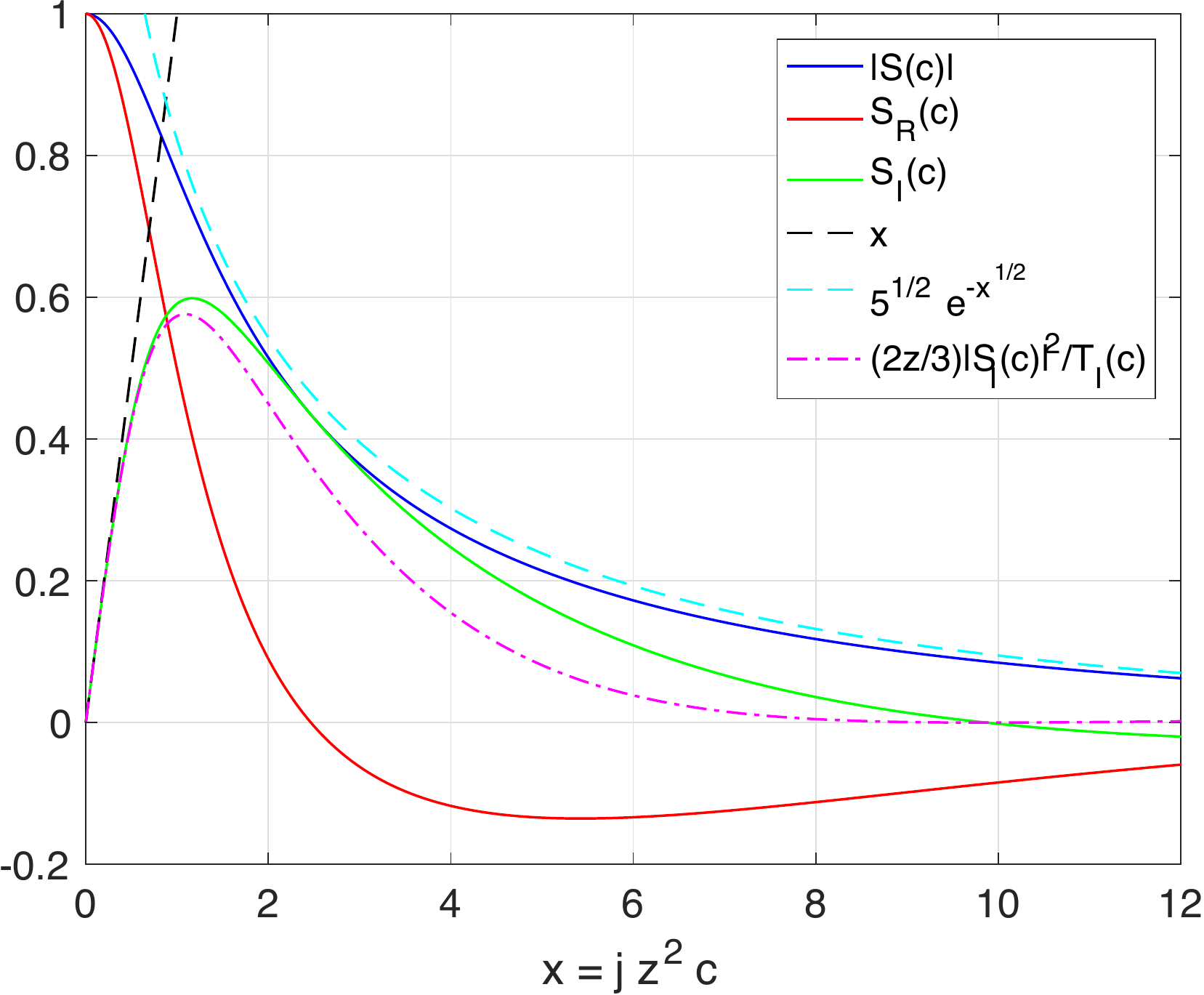}}
  \caption{Plot of $|S(c)|$, $S_R(c)$, $S_I(c)$, and related bounds for $c=-jx/z^2$.}
  \label{fig:Splot}
\end{figure}
%
%
\begin{figure}[t!]
  \centerline{\includegraphics[scale=0.48]{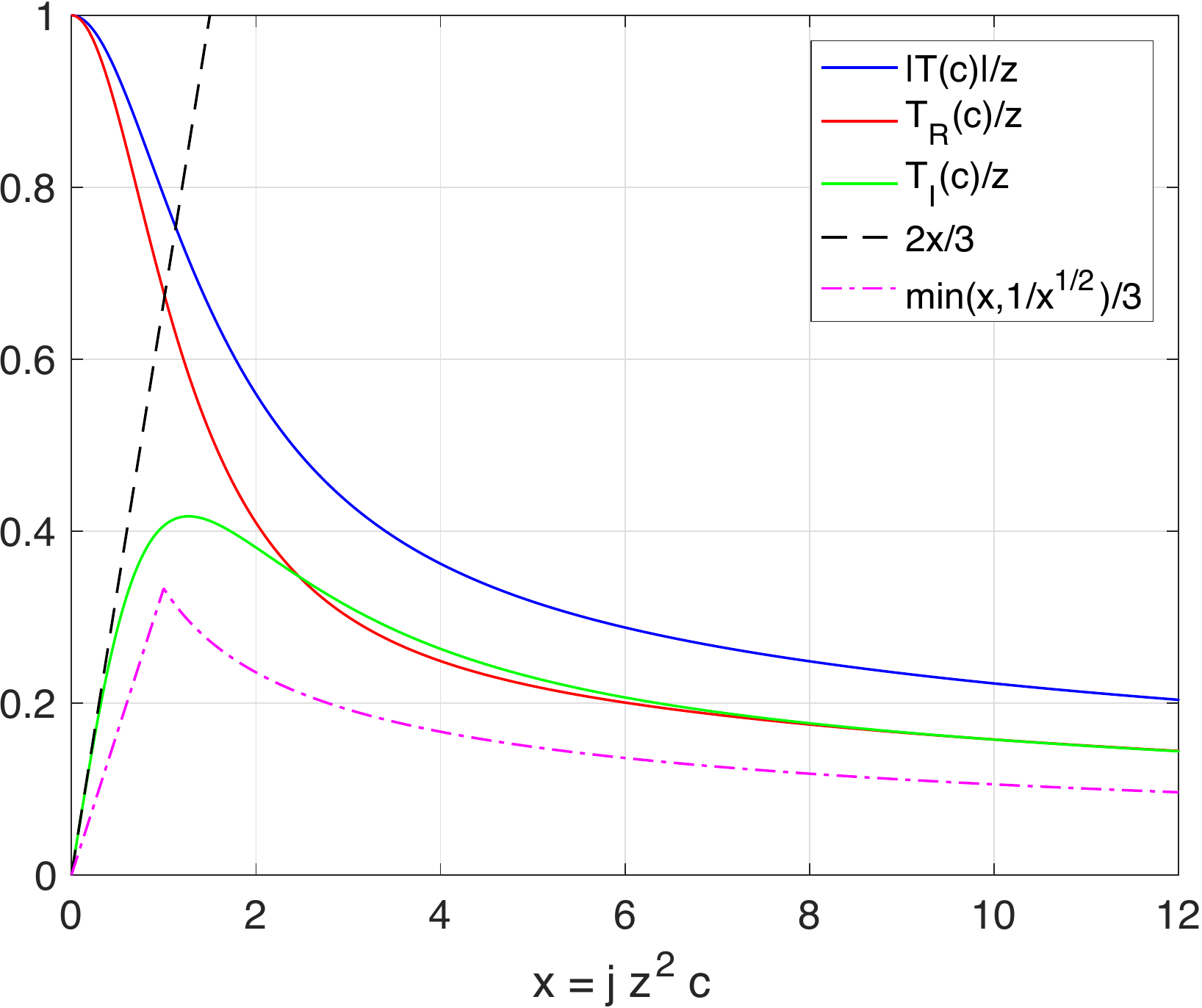}}
  \caption{Plot of $|T(c)|/z$, $T_R(c)/z$, $T_I(c)/z$, and related bounds for $c=-jx/z^2$.}
  \label{fig:Tplot}
\end{figure}

\section{Fiber and OA Noise Models}
\label{sec:fiber-models}
%
\subsection{Nonlinear Schr\"odinger Equation}
\label{subsec:NLSE}
Consider the slowly varying component $a(z,T)$ of a single-mode, linearly-polarized,
electromagnetic wave $a(z,T) e^{j 2 \pi f_0 T}$ along an optical fiber, where $z$ is
the position, $T$ is time, and $f_0$ is the carrier frequency. The propagation of
$a(z,T)$ in the signaling regime of interest is governed by the stochastic NLSE
(see~\cite[Eq. (2.3.27)]{Agrawal-03})
\begin{align}
   \frac{\partial a}{\partial z}
   + \frac{\alpha}{2} a
   + \beta_1 \frac{\partial a}{\partial T}
   + j \frac{\beta_2}{2} \frac{\partial^2 a}{\partial T^2}
   - j \gamma |a|^2 a - n = 0
   \label{eq:NLSE}
\end{align}
where $\alpha$ is the loss coefficient,
$\beta_1$ is the group velocity,
$\beta_2$ is the group velocity dispersion parameter, and
$\gamma$ is the Kerr coefficient.
All of these parameters are frequency dependent in general, but we neglect
this dependence for simplicity. The variables $n(z,T)$ are realizations of noise
random variables $N(z,T)$ whose characteristics we  discuss below in
Sec.~\ref{subsec:OA-noise-model}.
\begin{remark}
The wave propagation is sometimes defined using the complex conjugate of $a(z,T)$.
For example, this approach is used in~\cite{Mecozzi-94}.
\end{remark}
\begin{remark}
For signals with very large power and bandwidth, one should include more effects
such as self-steepening and intra-pulse Raman scattering,
see~\cite[Eq. (2.3.33) and (2.3.39)]{Agrawal-03}. We do not consider these effects here.
\end{remark}

The NLSE is usually expressed using the retarded time $t=T-\beta_1 z$
and the amplified signal $u(z,t)=e^{\alpha z/2} a(z,t)$. Inserting these modifications
into \eqref{eq:NLSE}, we have the simplified equation
\begin{align}
   \frac{\partial u}{\partial z}
   = - j \frac{\beta_2}{2} \frac{\partial^2 u}{\partial t^2}
      + j \gamma |u|^2 u \, e^{-\alpha z} + n \, e^{\alpha z/2}
   \label{eq:NLSE2a}
\end{align}
where we abuse notation and write $n(z,t)$ for $n(z,t + \beta_1 z)$.
A commonly studied version of \eqref{eq:NLSE2a} has $\alpha=0$ so that
\begin{align}
   \frac{\partial u}{\partial z}
   = - j \frac{\beta_2}{2} \frac{\partial^2 u}{\partial t^2}
      + j \gamma |u|^2 u + n.
   \label{eq:NLSE2}
\end{align}
The model \eqref{eq:NLSE2} has many interesting features. For example,
if $\beta_2<0$ and there is no noise, i.e., $n(t)=0$ for all $t$,
then the fiber supports {\em bright solitons},
see ~\cite[Ch.~5]{Agrawal-03}. However, the general model seems to have no
closed-form solution and, to gain insight, one often studies channels without
nonlinearity ($\gamma=0)$ or without dispersion ($\beta_2=0$) or without noise.

\subsection{OA Noise Model}
\label{subsec:OA-noise-model}
The signal $n(z,\cdot)$ in \eqref{eq:NLSE}-\eqref{eq:NLSE2} represents OA noise.
Two common choices for OA are erbium-doped fiber amplifiers (EDFAs) at specified
positions along the fiber and distributed Raman amplification~\cite[Sec.~IX.B]{ekwfg-JLT10}.
We consider Raman amplification, and we observe that the noise statistics are
rather complex, see Appendix~\ref{app:raman}. We consider two noise models;
the first model is described in this section and is used in Sec.~\ref{subsec:linear-model} below.
The second model is developed in Sec.~\ref{subsec:dispersion-free-model}
and is used in the remaining sections to analyze spectral broadening.

The usual approach is to model the OA noise $n(z,\cdot)$ as a bandlimited Gaussian process with
the same bandwidth $B$ as the OA bandwidth. The noise is assumed
independent across positions $z$ so that we have the spatiotemporal
autocorrelation function\footnote{The transformation $t=T-\beta_1 z$
does not change this equation.}
\begin{align}
   \E{N(z,t) N(z',t')^*} = K \, \delta(z-z') \, \sinc(B (t-t'))
   \label{eq:st-autocorr}
\end{align}
where $K=N_A B$  is the noise power distance density (PDD) in W/m, and
where $N_A$ is the OA noise power spectral-distance density (PSDD) in W/Hz/m.

More precisely, the accumulated noise at time $t$ is modeled as a {\em spatial}
Wiener process $\sqrt{K} \, W(\cdot,t)$ with $W(z,t)=(W_R(z,t) + j W_I(z,t))/\sqrt{2}$
where $W_R(\cdot,t)$ and $W_I(\cdot,t)$  are independent standard real Wiener processes.
A useful model is that $W(\cdot,t)$ is the limit for large $\ell$ of the processes
\begin{align}
   W_{\ell}(z,t) = \sum_{i=1}^{\lfloor \ell z \rfloor} \frac{1}{\sqrt{\ell}} N_{i}(t)
   \label{eq:Wiener}
\end{align}
where the time processes in the sequence $\{N_{i}(\cdot)\}_{i=1}^{\infty}$ are
independent and identically distributed (i.i.d.), complex, circularly-symmetric,
bandlimited, Gaussian random processes with mean 0 and autocorrelation
function
\begin{align}
   \E{N_i(t) N_i(t')^*} = \sinc(B (t-t')).
   \label{eq:time-corr}
\end{align}
Thus, $W_{\ell}(z,t)$ and $W_{\ell}(z,t')$ are correlated in general, and from
\eqref{eq:Wiener}-\eqref{eq:time-corr} we have the correlation coefficient
\begin{align}
   \rho_{\ell}(z,t,t') & = \frac{\E{W_\ell(z,t) W_\ell(z,t')^*}}{\sqrt{ \E{|W_\ell(z,t)|^2} \E{|W_\ell(z,t')|^2} } } \nonumber \\
   & = \sinc(B (t-t')). \label{eq:Wcorrelation}
\end{align}
Observe that \eqref{eq:Wcorrelation} is independent of $\ell$, $z$, and the
absolute time $t$, so we use the notation $\rho(t-t')$ instead of $\rho_{\ell}(z,t,t')$.

\subsection{Lossless and Linear Fiber Model}
\label{subsec:linear-model}
Consider \eqref{eq:NLSE2} but with $\gamma=0$. Taking Fourier transforms,
the propagation equation is
\begin{align}
   \frac{\partial \tilde{u}}{\partial z}
   = j \frac{\beta_2}{2} (2 \pi f)^2 \tilde{u} + \tilde{n}
   \label{eq:NLSE-linear}
\end{align}
where $\tilde{u}(z,\cdot)$ and $\tilde{n}(z,\cdot)$ are the respective Fourier transforms of
$u(z,\cdot)$ and $n(z,\cdot)$. The solution of \eqref{eq:NLSE-linear} is
\begin{align}
   \tilde{u}(z,f) = \tilde{u}(0,f) \, e^{j \frac{\beta_2}{2} (2 \pi f)^2 z}
   + \sqrt{K} \tilde{w}(z,f)
   \label{eq:linear-model}
\end{align}
where $\tilde{w}(z,\cdot)$ is the Fourier transform of $w(z,\cdot)$.
Using the noise model described in Sec.~\ref{subsec:OA-noise-model}, the process $\tilde{w}(\cdot,f)$
is a realization of a spatial Wiener process if $|f|<B/2$, and is zero otherwise.
The channel filter is therefore an all-pass filter with phase shifts proportional to $f^2$;
the frequency-dependence of the phase is
called {\em chromatic dispersion}. Furthermore, the channel is noise-free
outside the band $|f|<B/2$. This property is problematic when considering
information theoretic limits of communication, see Sec.~\ref{subsec:large-capacity} below.

\subsection{Lossless and Dispersion-Free Fiber Models}
\label{subsec:dispersion-free-model}
Consider \eqref{eq:NLSE2} but without dispersion. We have
\begin{align}
   \frac{\partial u}{\partial z} = j \gamma |u|^2 u + n.
   \label{eq:NLSE-dispersion-free}
\end{align}
The formal solution of \eqref{eq:NLSE-dispersion-free} is (see \cite[eq. (11)]{Mecozzi-94})
\begin{align}
   u & = \left[ u_0 + \sqrt{K} \, \hat{w} \right]
   \exp\left[ j \gamma \int_0^z \left| u(z',t) \right|^2 dz' \right] 
   \label{eq:dispersion-free-model-formal}
\end{align}
where the accumulated noise is
\begin{align}
   & \hat{w}(z,t) = \int_{0}^{z} n(z',t) \exp\left[ -j \gamma \int_0^{z'} \left| u(z'',t) \right|^2 dz'' \right] dz'.
   \label{eq:dispersion-free-model-noise}
\end{align}
Suppose that $n(z,t)$ is Gaussian with uniform phase and is spatially white, as described
in Sec.~\ref{subsec:OA-noise-model}. This means that
$\hat{w}(\cdot,t)$ is a realization of a spatial Wiener process that is independent of $u_0(t)$,
see \cite[eq. (13)]{Mecozzi-94}. One may thus write the sampled output
\eqref{eq:dispersion-free-model-formal} explicitly as
\begin{align}
   u(z,t) & = \left[ u_0(t) + \sqrt{K} \, w(z,t) \right] \nonumber \\
   & \quad \exp\left[ j \gamma \int_0^z \left| u_0(t) + \sqrt{K} w(z',t) \right|^2 dz' \right]
   \label{eq:dispersion-free-model}
\end{align}
where $w(\cdot,t)$ is a spatial Wiener noise process.

However, the autocorrelation $A_z(t,t')$ involves the statistics of \emph{two}
samples $u(z,t)$ and $u(z,t')$, and the processes $\hat{w}(\cdot,t)$ and
$\hat{w}(\cdot,t')$ that define these respective samples might not be
\emph{jointly} Wiener.\footnote{At times $t$ and $t'$ for which
$\rho(t-t')=0$, the accumulated noise processes $\hat{w}(\cdot,t)$ and $\hat{w}(\cdot,t')$ are
statistically independent and hence jointly Wiener.}
The reason is that the exponential in \eqref{eq:dispersion-free-model-noise}
decorrelates the noise, i.e., the temporal noise process
$\hat{w}(z,\cdot)$ also experiences spectral broadening.
As an additional complication, the noise model of Sec.~\ref{subsec:OA-noise-model}
may not be accurate for Raman scattering because the coupling of the pump and
propagating signals also decorrelates the noise. We address Raman scattering
in Appendix~\ref{app:raman}.

To circumvent the difficulties, and to permit analysis, we study the model
\eqref{eq:dispersion-free-model} where $w(\cdot,t)$ and $w(\cdot,t')$ are
jointly Wiener processes for any $t$ and $t'$.  This model seems reasonable
for small $\gamma$, $K$, $z$, and $P$, but accumulated noise with
exponential terms such as in \eqref{eq:dispersion-free-model-noise} deserve more study.

As a final remark, if we expand the quadratic term of the exponential in
\eqref{eq:dispersion-free-model}, then we obtain three parts:\ a self-phase
modulation term $|u(0,t)|^2$, a signal-noise mixed term
$2 \sqrt{K} \Re\{ u(0,t) w(z',t)^* \}$, and a noise term $K |w(z',t)|^2$.
If $|u(0,t)|$ is large, then the signal-noise mixed term will cause large and random
phase variations that result in uncontrolled spectral broadening. Understanding
this effect seems key to understanding the {\em nonlinear Shannon limit} of
optical fiber, i.e., the limitation of the capacity~\cite{ekwfg-JLT10}.

\section{Receiver Models}
\label{sec:rx-models}

This section reviews two classes of receiver models. The first class is the per-sample
models studied in~\cite{Mecozzi-94,Turitsyn-Derevyanko-Yurkevich-Turitsyn-PRL03,Yousefi-Kschischang-IT11,Mecozzi-04,Terekhov-etal-A16,Fahs-etal-A17}. The second class is the filter-and-sample
models that are commonly used in information theory~\cite[Ch.~8.1]{Gallager68}. For the latter models,
we will study receivers that we consider to be of engineering relevance, namely bandlimited
receivers and time-resolution limited receivers, both with thermal noise.

\subsection{Per-Sample Model and Unbounded Capacity}
\label{subsec:large-capacity}
The capacity of the channel \eqref{eq:linear-model} is well-understood. For example,
if the launch signal $\tilde{u}(0,\cdot)$ can have energy outside the noise band,
then the capacity is unbounded for any launch power~\cite[Thm.~5]{Wyner-BSTJ66}.
The purpose of this section is to show that the nonlinear model
\eqref{eq:dispersion-free-model} can also have unbounded capacity, which
suggests that the per-sample model gives limited insight for
realistic receivers, see Sec.~\ref{subsec:per-sample-vs-multi-sample}.

Consider $T \le 1/B \le T_s$ and the launch signal
\begin{align}
   u(0,t) = x \, g(t-T/2)  - 2x \, g(t -3T/2)
\end{align}
where $x$ is an information symbol and $g(\cdot)$ is a rectangular pulse
of unit norm in the interval $[0,T_s)$. We claim that, for small $T$,
the launch bandwidth is described mainly by $T_s$ and not $T$. We have
\begin{align}
   |\tilde{u}(0,f)| = |x| \cdot \left| \tilde{g}(f) \right|  \cdot \sqrt{ 5 - 4 \cos( 2 \pi f T)}
\end{align}
where $\tilde{g}(\cdot)$ is the Fourier transform of $g(\cdot)$.
Choosing small $T$  thus does not increase the launch bandwidth, e.g.,
if one uses a measure such as the band having 99\% of the power.

We now choose $T \ll 1/B$ so that the noise variables at $t=0,\,T,\,2T$
are approximately the same, i.e., we have
$$w(z,0) \approx w(z,T) \approx w(z,2T).$$
We can make the approximations as accurate as desired by choosing 
sufficiently small $T$. Let $w(z,0)=w_R + j w_I$ and suppose $x$ is real.
We choose a 3-sample receiver that outputs
\begin{align}
   & y_0 = \left| u_z(0) \right|^2 = K \left( w_R^2 + w_I^2 \right) \\
   & y_1 = \left| u(z,T) \right|^2 \approx \left( x/\sqrt{T_s} + \sqrt{K} w_R \right)^2 + K w_I^2 \\
   & y_2 = \left| u(z,2T) \right|^2 \approx \left( -x/\sqrt{T_s} + \sqrt{K} w_R \right)^2 + K w_I^2
\end{align}
and compute
\begin{align}
   & x^2 \approx T_s\left( \frac{y_1 + y_2}{2} - y_0 \right)
\end{align}
to any desired accuracy by choosing sufficiently small $T$. This means that, for 
the launch energy $x^2(1+4T/T_s)\approx x^2$, we can achieve {\em any}
rate by choosing sufficiently small $T$. In other words, \emph{the capacity is
unbounded for any launch power.}

\begin{remark}
The reader might consider this example unsatisfactory because it requires
rapid signaling. In fact, the example does not work with sinc pulses, yet it
seems artificial to limit attention to such pulses. The main purpose of the
example is to show that bandlimited noise has pitfalls when dealing with
capacity, and that care is needed in treating bandwidth~\cite{Slepian76}.
\end{remark}
\begin{remark}
The above observations apply also to the full NLSE model of Sec.~\ref{subsec:NLSE}.
For example, at small launch power the channel is basically an AWGN channel
with bandlimited noise.
\end{remark}

\subsection{Filter-and-Sample Model}
\label{subsec:receiver-noise}
A natural approach to circumvent the infinite-capacity problem of bandlimited noise is
to add AWGN to the nonlinear model \eqref{eq:dispersion-free-model}.
In other words, the new model has {\em two} noise processes: the bandlimited
distributed OA noise and the receiver AWGN.
More precisely, consider a receiver that operates on a noisy signal
\begin{align}
   u_r(t) = u(z,t) + n_r(t)
   \label{eq:electronic-noise-model}
\end{align}
where $n_r(\cdot)$ is the same as in \eqref{eq:AWGN-noise-model}.

Now consider the set $L^2[0,T]$ of continuous and finite-energy signals in the
time\footnote{One may also consider the set of continuous and finite-energy
signals in a frequency interval.}
interval $[0,T]$.  This set has a complete orthonormal basis
$\left\{\phi_{m}(\cdot)\right\}_{m=1}^{\infty}$ and one usually has a
receiver that puts out a finite number of projection values
\begin{align}
   Y_m & = \int_0^T U_r(z,t) \phi_{m}(t)^* dt \nonumber \\
   & = Z_m + \int_0^T U(z,t) \phi_{m}(t)^* dt
   \label{eq:projection}
\end{align}
for $m=1,2,\ldots,M$ where
\begin{align}
   Z_m & = \int_0^T N_r(t) \phi_{m}(t)^* dt .
   \label{eq:Zprojection}
\end{align}
In other words, $Z_1,Z_2,\ldots,Z_M$ is a string of statistically independent,
complex, circularly-symmetric, Gaussian random variables with variance $N_0$.
The set $\{Y_m\}_{m=1}^{M}$ of measurements forms a set of sufficient statistics
if every possible signal $u(z,\cdot)$ lies in the subspace spanned by the signals
$\left\{\phi_{m}(\cdot)\right\}_{m=1}^{M}$. Otherwise, one must let $M \rightarrow \infty$
in general. We refer to the above model as the \emph{filter-and-sample} model to
distinguish it from the \emph{per-sample} model.

\begin{remark}
An alternative to introducing receiver thermal noise is to assume that $u(z,\cdot)$
has spectral components inside the OA bandwidth only. However, this approach
prevents considering finite-time pulses such as rectangular pulses. Furthermore,
spectral broadening prevents $u(z,\cdot)$, $z>0$, from remaining strictly bandlimited
even if $u(0,\cdot)$ is strictly bandlimited.
\end{remark}
\begin{remark} \label{rmk:infinite-bandwidth-noise}
A second alternative is to study OA with infinite bandwidth but with
a finite PDD of $K$ W/m. Another way to think of this is that the
noise PSDD is $N_A=K/B$ W/Hz/m and one considers the limit of increasing $B$.
This is effectively what was done in~\cite[Sec.~IV]{Mecozzi-94}, and we
develop results for this model in Appendix~\ref{app:infiniteB}. The model
is artificial but it has two useful features:\ the analysis greatly simplifies
and the model gives insight into systems where the optical noise process
has much larger bandwidth than the signals propagating along the fiber.
Related studies on models with white phase noise can be found
in~\cite{Goebel-etal-IT11,Barletta-Kramer-CROWNCOM14,Barletta-Kramer-ISIT14}. 
\end{remark}
\begin{remark} \label{rmk:nonlinear-capacity}
We show in Appendix~\ref{app:nonlinear-capacity} that the nonlinearity can
increase capacity. The idea is to use the nonlinearity to 
convert amplitude-shift keying (ASK) to orthogonal frequency-shift
keying (FSK). We remark that orthogonal FSK achieves capacity
for large bandwidth $W$~\cite[p.~207]{Proakis-Salehi-5} and that capacity grows linearly
in launch power for large $W$, see~\eqref{eq:CWlim}.
\end{remark}

\subsection{Bandlimited Receiver}
\label{subsec:bandlimited-receiver-defn}
We will consider two receivers that are related. The first
is bandlimited to $W$ Hz, i.e., the receiver collects energy in the frequency band
$f\in [-W/2,W/2]$ only. The average receiver power after filtering is
(see~\eqref{eq:Pr-def})
\begin{align}
   \bar{P}_r(W) & = \int_{-W/2}^{W/2} \bar{\mathcal{P}}(f) \, df.
   \label{eq:Pr-def1}
\end{align}
Note that we have not included the noise $N_r(\cdot)$ in $\bar{P}_r(W)$;
this noise will contribute an additional $WN_0$ Watts.

For convenience, we define (see~\eqref{eq:PSD-T-def})
\begin{align}
   & \bar{P}_r(W,T) = \int_{-W/2}^{W/2} \bar{\mathcal{P}}(f,T) \, df . 
   \label{eq:PrT-def}
\end{align}
To later help us bound $\bar{P}_r(W,T)$, we upper
bound the PSD of unit height in the frequency interval $[-W/2,W/2]$ by
\begin{align}
   \tilde{b}(f) = \left\{ \begin{array}{ll}
   2\left(1 - \frac{|f|}{W} \right), &  |f| \le W \\
   0, & \text{else}.
   \end{array} \right.
   \label{eq:Bfreq}
\end{align}
The motivation for this step is to ensure that the absolute value of the
corresponding time signal
\begin{align}
   b(t) = 2W \, \sinc\left( W t \right)^2
   \label{eq:Btime}
\end{align}
integrates to a finite value for $t\ge0$, namely the value
\begin{align}
   \int_0^\infty |b(t)| \, dt = 1.
   \label{eq:Btime-bound}
\end{align}

Inserting \eqref{eq:Bfreq} into \eqref{eq:PrT-def}, we have
\begin{align}
   \bar{P}_r(W,T)
   &  \le \int_{-W}^{W} \bar{\mathcal{P}}(f,T) \, \tilde{b}(f) \, df \nonumber \\
   & = \frac{1}{T} \int_{-T/2}^{T/2} \int_{-T/2}^{T/2} \bar{A}(t,t') \, b(t-t') \, dt' \, dt .
   \label{eq:PrT-bound}
\end{align}
For cyclostationary signals, we obtain
\begin{align}   
   \bar{P}_r(W) \le \int_{-\infty}^{\infty} \bar{A}(\tau) \, b(\tau) \, d\tau .
   \label{eq:Pr-bound}
\end{align}

\subsection{Time-Resolution Limited Receiver}
\label{subsec:timelimited-receiver}
The second receiver is time-resolution limited to $T_r$ seconds where
$T_r\le T_s$. More precisely, we consider a normalized integrate-and-dump filter
over $T_r$ seconds. The energy output by the receiver is
\begin{align}
 E_m(T_r) & = {\rm E} \left[ \left. \left| \int_{mT_r}^{(m+1)T_r} \frac{1}{\sqrt{T_r}} \, U(t) \, dt \right|^2 \right|  U_0(\cdot)=u_0(\cdot) \right] \nonumber \\
 & = \frac{1}{T_r} \int_{mT_r}^{(m+1)T_r} \int_{mT_r}^{(m+1)T_r} A(t,t') \, dt' \,  dt .
 \label{eq:Er-def}
\end{align}
Thus, the average energy is
\begin{align}
 \bar{E}_m(T_r) = \frac{1}{T_r} \int_{mT_r}^{(m+1)T_r} \int_{mT_r} ^{(m+1)T_r} \bar{A}(t,t') \, dt' \, dt .
 \label{eq:Erbar-def}
\end{align}
The value $\bar{E}_m(T_r)$ is closely related to the right-hand side
(RHS) of \eqref{eq:PrT-bound}.

\begin{remark}
One can build a receiver with time-resolution $T_r/2$ seconds with two
receivers with time resolution $T_r$ seconds by offsetting their integration
times by $T_r/2$ seconds. Thus, it might make more sense to
state that our receivers have limited time {\em precision}.
Similarly, one can build a bandwidth
$2W$ receiver with two bandwidth $W$ receivers whose center frequencies
are offset by $W$ Hz.  Of course, these approaches increase
complexity and cost, and in practice one is limited by the available receivers.
\end{remark}

\section{Autocorrelation Function}
\label{sec:auto}
The following theorem is our main analytic result. Consider
\begin{align}
   c = - j \gamma (K/2) \sqrt{1-\rho^2}
   \label{eq:c-def}
\end{align}
so that in Sec.~\ref{subsec:hyperbolic-functions} we have $c=-jx/z^2$ where
\begin{align}
   x=\gamma (K/2) z^2 \sqrt{1-\rho^2}.
   \label{eq:x-def}
\end{align}

\begin{theorem} \label{thm:autocorr}
The conditional autocorrelation function \eqref{eq:autocorr-cond}
of the signal \eqref{eq:dispersion-free-model} when $w(\cdot,t)$
and $w(\cdot,t')$ are jointly Wiener processes for any $t$ and $t'$ is
\begin{align}
& A(t,t') = |S(c)|^2 \Big[ T_R(c) K \rho + \nonumber \\
& \qquad \left( S_R(c) u_0 + j \frac{S_I(c)}{\sqrt{1-\rho^2}} (u_0 - \rho u_0') \right) \nonumber \\
& \qquad \left. \left( S_R(c) u_0' +  j \frac{S_I(c)}{\sqrt{1-\rho^2}} (u_0' - \rho u_0) \right)^* \; \right] \nonumber \\
& \quad \exp\left(j \gamma \, T_R(c) \left[ |u_0|^2 - |u_0'|^2 \right] \right) \nonumber \\
& \quad \exp\left( - \gamma \frac{T_I(c)}{\sqrt{1-\rho^2}}[|u_0|^2+|u'_0|^2-2\rho\Re\{u_0 u_0'^{*}\}] \right).
\label{eq:autocorr-general}
\end{align}
\end{theorem}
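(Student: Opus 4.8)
The plan is to compute $A(t,t')=\E{\Psi(z)}$ directly from the explicit solution \eqref{eq:dispersion-free-model}, where
\[
\Psi(z)=\bigl[u_0+\sqrt{K}\,W(z,t)\bigr]\bigl[u_0'+\sqrt{K}\,W(z,t')\bigr]^{*}\exp\bigl(j\gamma\,\Phi(z)\bigr)
\]
and $\Phi(z)=\int_0^z\bigl[\,|u_0+\sqrt{K}W(z',t)|^2-|u_0'+\sqrt{K}W(z',t')|^2\,\bigr]dz'$. Since $u_0,u_0'$ are fixed by the conditioning, the only randomness is the jointly Wiener pair $W(\cdot,t),W(\cdot,t')$ with correlation coefficient $\rho=\rho(t-t')$ from \eqref{eq:Wcorrelation}. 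The exponent $\Phi$ is a quadratic functional of these paths, so $\E{\Psi(z)}$ is a Gaussian path integral, and the $\sech/\tanh$ content of $S(c),T(c)$ is precisely the signature of the harmonic-oscillator (Mehler) propagator, i.e.\ of Feynman--Kac with a quadratic potential.

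First I would decorrelate by writing $W(z,t')=\rho\,W(z,t)+\sqrt{1-\rho^2}\,V(z)$ with $V(\cdot)$ a Wiener process independent of $W(\cdot,t)$; this is valid because both quadratures share the coefficient $\rho$. Abbreviating $W=W(z',t)$ and $V=V(z')$, the purely quadratic part of $\Phi$ becomes $\gamma K\int_0^z\bigl[(1-\rho^2)(|W|^2-|V|^2)-2\rho\sqrt{1-\rho^2}\,\Re(WV^{*})\bigr]dz'$. In each quadrature the matrix of this indefinite form has trace $0$ and determinant $-(1-\rho^2)$, hence eigenvalues $\pm\sqrt{1-\rho^2}$; this is exactly why $c=-j\gamma(K/2)\sqrt{1-\rho^2}$ and $x=\gamma(K/2)z^2\sqrt{1-\rho^2}$ in \eqref{eq:c-def}--\eqref{eq:x-def} are the correct parameters. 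The linear-in-noise part of $\Phi$ supplies the cross terms $2\sqrt{K}\,\Re(u_0^{*}W)$ and its primed analogue.

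Next I would perform the Gaussian functional integration. The cleanest route is Feynman--Kac: regard $(W(z,t),W(z,t'))$, equivalently $(W,V)$, as a four-dimensional real diffusion and note that the relevant expectation is delivered as the solution of the linear parabolic equation generated by the Laplacian plus the imaginary quadratic potential. For a quadratic potential the solution is the harmonic-oscillator propagator, whose normalization and covariance are governed by $\cosh(\sqrt{2c}\,z)$ and $\tanh(\sqrt{2c}\,z)/\sqrt{2c}$; contracting the quadratic prefactor against this Gaussian produces the $T_R(c)K\rho$ term and the bilinear amplitude term. Completing the square in the linear-in-noise terms shifts the effective mean of the noise, and this shift converts the bare $u_0,u_0'$ into the combinations $S_R(c)u_0+j\frac{S_I(c)}{\sqrt{1-\rho^2}}(u_0-\rho u_0')$ and its primed conjugate. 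The deterministic remainder of $\Phi$ yields the oscillatory factor $\exp(j\gamma T_R(c)[|u_0|^2-|u_0'|^2])$, while the imaginary part of the effective $T$ produces the real decay $\exp(-\gamma\frac{T_I(c)}{\sqrt{1-\rho^2}}[\,\cdots\,])$; the overall $|S(c)|^2$ is the normalization of the propagator. Collecting real and imaginary parts of $S$ and $T$ then matches \eqref{eq:autocorr-general}.

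The hard part is Step three: the coupled, complex, quadratic path integral. One must diagonalize the indefinite form $(1-\rho^2)(|W|^2-|V|^2)-2\rho\sqrt{1-\rho^2}\Re(WV^{*})$ consistently along the whole path, push all four real Wiener components through the propagator, and reassemble so that only the single parameter $c$ (through $S$ and $T$) survives. Keeping the real/imaginary bookkeeping exact --- so that $\rho$ enters only through the stated $(u_0-\rho u_0')$ combinations and the $\sqrt{1-\rho^2}$ normalizations --- is where essentially all the labor lies; the remainder is completing squares and collecting contractions. I would use the single-sample derivation of~\cite{Mecozzi-94} (his Eqs.~(11) and~(13)) as a template and specialize it to this two-path setting via the decorrelation above.
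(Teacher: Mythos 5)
Your proposal follows essentially the same route as the paper's proof in Appendix E: decorrelate $W(\cdot,t')$ against $W(\cdot,t)$ via $\rho$, diagonalize the indefinite quadratic form (whose eigenvalues $\pm\sqrt{1-\rho^2}$ indeed explain the parameter $c$), reduce to independent one-dimensional Gaussian functional integrals, and evaluate them with the Cameron--Martin/Mecozzi identity --- your Feynman--Kac/Mehler-propagator language is the same lemma in different clothing, and your direct computation of $\E{U U'^*}$ replaces only the paper's cosmetic device of differentiating a moment generating function in the source variables $s_1,s_2$. The plan is sound and correctly identifies where the labor lies.
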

\begin{proof}
Expression \eqref{eq:autocorr-general} is derived in Appendix~\ref{app:two-sample-statistics}
by using the development in \cite{Mecozzi-94} that is reviewed in
Appendices~\ref{app:cameron-martin}-\ref{app:one-sample-statistics}.
\end{proof}

\begin{remark}
The first exponential of \eqref{eq:autocorr-general} describes self-phase modulation (SPM)
and the second exponential is due to signal-noise mixing.
The argument of the latter exponential is real, non-positive, and decreases with the
instantaneous powers $|u_0|^2$ and $|u_0'|^2$ of the launch signal unless $t=t'$.
\end{remark}
\begin{example}
Consider $\gamma=0$ so the channel is linear. We have
$c=0$, $S(c)=1$, $T(c)=z$, and therefore
\begin{align}
   A(t,t') & = K z \rho(t-t') + u_0(t) u_0(t')^* .
\end{align}
\end{example}
\begin{example}
Consider $t=t'$ so that $u_0=u_0'$, $\rho=1$, $c=0$, 
$S(c)=1$, $T(c)=z$, and therefore
\begin{align}
   A(t,t) & = \E{\left. |U_z(t)|^2 \right| U_0(t)=u_0(t) } \nonumber \\
   & = Kz + | u_0(t) |^2 .
   \label{eq:t-is-tprime}
\end{align}
The instantaneous power is thus preserved, see \eqref{eq:dispersion-free-model}.
\end{example}
\begin{example}
Consider $u_0(t)=0$ for all $t$. The noise autocorrelation function is
\begin{align}
   & A(t,t') = |S(c)|^2 T_R(c) K \rho
   \label{eq:noise-autocorr}
\end{align}
and for $t \ne t'$ the value $|A(t,t')|$ first increases but
eventually decreases as $z$ grows. This means that the noise PSD
eventually broadens with $z$. However, one usually operates in 
regimes where $|c|$ is small so that $A(t,t') \approx K\rho z$.
\end{example}
\begin{example}
Consider $u_0(t)=\sqrt{P} e^{j \phi(t)}$ for all $t$, i.e., the launch signal
has constant envelope. We compute
\begin{align}
   & A(t,t') \nonumber \\
   & = |S|^2 \left[ T_R K \rho + P e^{j \phi_{\Delta}}
      \left( S_R^2 + \frac{S_I^2}{1-\rho^2} \left( 1-\rho \, e^{- j \phi_{\Delta}} \right)^2 \right) \right] \nonumber \\
   & \quad \exp\left( - \gamma \frac{T_I}{\sqrt{1-\rho^2}} 2P \left[ 1 - \rho \cos(\phi_{\Delta}) \right] \right)
   \label{eq:ce-autocorr}
\end{align}
where $\phi_{\Delta}(t,t')=\phi(t)-\phi(t')$, and where we have suppressed
the dependence on $c$, $t$, and $t'$ for notational convenience.
\end{example}

\subsection{Low Noise-Nonlinearity-Distance Product}
\label{subsec:low-noise}
A commonly studied regime is where the noise and/or the Kerr coefficient are small
with respect to the distance. More precisely, we consider $\sqrt{\gamma K z^2}$ small 
enough so that
\begin{align}
 & S(c) \approx 1 - 2c (z^2/2) = 1 + j \gamma K \sqrt{1-\rho^2} (z^2/2) \label{eq:Sapprox} \\
 & T(c) \approx z -2c(z^3/3) = z + j \gamma K \sqrt{1-\rho^2} (z^3/3) \label{eq:Tapprox}
\end{align}
are accurate approximations for the various constants in \eqref{eq:autocorr-general}.
The autocorrelation function is thus approximately
\begin{align}
& \mathcal{A}(t,t') = \left( K\rho z + u_0 u_0'^{*} + j \gamma K \rho \frac{z^2}{2} \left[ |u_0|^2 - |u_0'|^2 \right] \right) \nonumber \\
& \quad \quad \exp\left(+ j \gamma z \left[ |u_0|^2 - |u_0'|^2 \right] \right) \nonumber \\
& \quad \quad \exp\left(- \frac{\kappa}{2} [|u_0|^2+|u'_0|^2-2\rho\Re\{u_0u_0'^{*}\}] \right)
\label{eq:autocorr-approx}
\end{align}
where
\begin{align}
  \kappa=2 \gamma^2 K z^3/3 
  \label{eq:kappa}
\end{align}
and where we have kept up to second oder terms in $\sqrt{\gamma K z^2}$.

\subsection{Bounds on the Autocorrelation Amplitude}
\label{subsec:autocorr-bounds}
Consider the argument of the last exponential in \eqref{eq:autocorr-general}
for which we have
\begin{align}
& |u_0|^2+|u'_0|^2-2\rho\Re\{u_0u_0'^{*}\} \nonumber \\
& \quad = |u_0' - \rho u_0|^2 + |u_0|^2 \left( 1-\rho^2 \right).
\label{eq:quad-form-bound}
\end{align}
Now suppose that $|u_0| \ge |u_0'|$ so that
\begin{align}
\begin{array}{l}
|u_0 - \rho u_0'| \le 2 |u_0| \\
|u_0' - \rho u_0| \le 2 |u_0|.
\end{array}
\label{eq:quad-form-bound2}
\end{align}
We may use \eqref{eq:quad-form-bound} and \eqref{eq:quad-form-bound2} with
\eqref{eq:STbounds1} to bound
\begin{align}
|A(t,t')| &\le \left[ K z + |u_0| ^2 \left( 1 + 2 \frac{|S_I(c)|}{\sqrt{1-\rho^2}} \right)^2 \right]
\nonumber \\
& \quad \exp\left( - \gamma \, T_I(c) \,  |u_0|^2 \sqrt{1-\rho^2} \right) \nonumber \\
& \le \left[ K z + |u_0|^2 \left (1 + \gamma K z^2 \right)^2 \right] \nonumber \\
& \quad \exp\left( - \gamma \, T_I(c) \,  |u_0|^2 \sqrt{1-\rho^2} \right).
\label{eq:autocorr-bound}
\end{align}
We further have $\gamma K z^2 \approx 0.017$ for certain parameter ranges
that we are interested in, see Table~\ref{table:parameters}.

\begin{remark}
The amplitude $|A(t,t')|$ captures the influence of signal-noise mixing,
but it removes the SPM exponential in \eqref{eq:autocorr-general}.
The reason for focussing on signal-noise mixing is because this
effect cannot be controlled, other than by reducing power, as opposed
to the deterministic effects of SPM and cross-phase modulation (XPM).
However, in a network environment, the XPM cannot necessarily be
controlled either, and interference can be the main limitation on
capacity~\cite{ekwfg-JLT10}. 
\end{remark}

The bound \eqref{eq:autocorr-bound} is useful when $B$ and $K=B N_A$
are fixed. However, we will also be interested in scaling $B$ with
the launch power. To treat such cases, we keep the $|S(c)|^2$ term from
\eqref{eq:autocorr-general}. We further use \eqref{eq:quad-form-bound} and its
symmetric counterpart to write
\begin{align}
& |u_0|^2+|u'_0|^2-2\rho\Re\{u_0u_0'^{*}\} \nonumber \\
& = \frac{|u_0' - \rho u_0|^2 + |u_0 - \rho u_0'|^2}{2}
+ \frac{|u_0|^2 + |u_0'|^2}{2} \left( 1-\rho^2 \right) .
\label{eq:quad-form-bound3}
\end{align}
We now use \eqref{eq:STbounds1} to bound
\begin{align}
& |A(t,t')| \le |S(c)|^2 \Big[ K z + \nonumber \\
& \quad \left( |u_0| + \frac{|S_I(c)|}{\sqrt{1-\rho^2}} \left| u_0 - \rho u_0' \right|
e^{- \gamma \frac{T_I(c)}{\sqrt{1-\rho^2}} \frac{ |u_0 - \rho u_0'|^2}{2}} \right) 
\nonumber \\
& \quad \left. \left( |u_0'| +  \frac{|S_I(c)|}{\sqrt{1-\rho^2}}  \left| u_0' - \rho u_0 \right|
e^{- \gamma \frac{T_I(c)}{\sqrt{1-\rho^2}} \frac{ |u_0' - \rho u_0|^2}{2} } \right) \; \right]
\nonumber \\
& \quad \exp\left( - \gamma \, T_I(c) \frac{|u_0|^2+|u'_0|^2}{2} \sqrt{1-\rho^2} \right).
\label{eq:autocorr-bound2}
\end{align}
Applying \eqref{eq:xe-bound2} in Appendix~\ref{app:simple-bounds}, we have
\begin{align}
|A(t,t')| & \le |S(c)|^2 \left[ K z + \left( |u_0| + \delta \right) \left( |u_0'| +  \delta \right) \, \right] \nonumber \\
& \quad \exp\left( - \gamma \, T_I(c) \frac{|u_0|^2+|u'_0|^2}{2} \sqrt{1-\rho^2} \right)
\label{eq:autocorr-bound3}
\end{align}
where
\begin{align}
\delta =  \sqrt{\frac{S_I(c)^2}{e \,\gamma \, T_I(c) \sqrt{1-\rho^2}}} .
\label{eq:STratio}
\end{align}
Using \eqref{eq:SbyTbound} with \eqref{eq:x-def}, we have
\begin{align}
   \delta & \le \sqrt{\frac{3Kz}{4e}}
   \approx1.4 \times 10^{-3}
   \label{eq:SITI}
\end{align}
where the approximation is for the parameters of Table~\ref{table:parameters}.
We here have $|u_0|+\delta \approx |u_0|$ for ``large" signal powers such as
$|u_0|^2 \ge 1\times 10^{-3}$ Watts (or 0 dBm).

\section{Rectangular Pulses}
\label{sec:rectangular}
Consider rectangular pulses, for which the SPM term
$$\exp\left( j \gamma \, T_R(c) \left[ |u_0|^2 - |u_0'|^2 \right] \right)$$
in \eqref{eq:autocorr-general} is unity. Rectangular pulses are thus convenient
for studying the spectral broadening characterized by the signal-noise mixing
exponential in \eqref{eq:autocorr-general}.

Consider the fiber parameters shown in Table~\ref{table:parameters}
(see~\cite[Tables I-III]{ekwfg-JLT10}). We have $\sqrt{\gamma K z^2} \approx 0.130$
so that the approximations \eqref{eq:Sapprox}-\eqref{eq:autocorr-approx} are accurate.
We further have $\kappa\approx 28.6$ so that  the second exponential of
\eqref{eq:autocorr-approx} is small (less than $1/e$) for power levels beyond
$|u_0|^2 = |u_0'|^2 = 1/\kappa \approx 0.035$ Watts (or 15.4 dBm) if $\rho=0$.

\begin{table}[t]
\begin{center}
  \caption{Fiber Parameters}
  \label{table:parameters}
  \begin{tabular}{ | l | c | }
    \hline 
    Nonlinear coefficient $\gamma$ & 1.27 (W-km)$^{-1}$ \\ \hline
    Fiber length & 2000 km \\ \hline
    Temperature $T_e$ & 300 Kelvin \\ \hline
    Receiver noise PSD $N_0$ & $4.142 \times 10^{-21}$ W/Hz \\ \hline
    OA noise PSDD $N_A$ & $6.674 \times 10^{-24}$ W/Hz/m \\ \hline
    OA bandwidth $B$ & 500 GHz \\ 
    \hline 
  \end{tabular}
\end{center}
\end{table}

\subsection{Isolated Rectangular Pulse}
\label{subsec:example-rect}
Consider an isolated rectangular pulse
\begin{align}
   u_0(t) & = \left\{ \begin{array}{ll}
   \sqrt{P}, & |t| \le T_s/2 \\
   0, & \text{else}
   \end{array} \right.
   \label{eq:isolated-rectangular}
\end{align}
that has energy $PT_s$ Joules.
The approximation \eqref{eq:autocorr-approx} is
\begin{align}
& \mathcal{A}(t,t') = \nonumber \\
& \left\{ \begin{array}{ll}
   \left( K\rho z + P \right) e^{- \kappa P (1-\rho)}, & |t| \le \frac{T_s}{2}, \; |t'|\le \frac{T_s}{2} \\
   \left( K\rho z + j \gamma K \rho \frac{z^2}{2} P \right) e^{j\gamma z P} e^{- \frac{\kappa}{2} P},
   & |t| \le \frac{T_s}{2}, \; |t'|> \frac{T_s}{2} \\
   \left( K\rho z - j \gamma K \rho \frac{z^2}{2} P \right) e^{-j\gamma z P} e^{- \frac{\kappa}{2} P},
   & |t| > \frac{T_s}{2}, \; |t'|\le \frac{T_s}{2} \\
   K\rho z, & |t|> \frac{T_s}{2}, \; |t'|> \frac{T_s}{2}.
   \end{array} \right.
\label{eq:autocorr-recta}
\end{align}
For example, for a linear channel we have $\gamma=0$ and
\begin{align}
\mathcal{A}(t,t') = 
\left\{ \begin{array}{ll}
   K\rho z + P, & |t| \le \frac{T_s}{2}, \; |t'|\le \frac{T_s}{2} \\
   K\rho z, & \text{else}.
   \end{array} \right.
\label{eq:autocorr-recta-g0}
\end{align}
We choose $T_s=10$ ps to correspond to the symbol rate of $100$ Gbaud.

Fig.~\ref{fig:RectAutoplot} shows $|A(t,t')|$ in dB for $t'=0$ and $t'=5.1$ ps,
and for $P=10, 100, 200, 400$ mW. The plot of the amplitude of the approximation
\eqref{eq:autocorr-recta} is visually indistinguishable from the exact
expression \eqref{eq:autocorr-general}. At low power, $A(t,0)$ is almost
the same as the pulse shape \eqref{eq:isolated-rectangular} and
$A(t,5.1 \text{ ps})$ is close to $K\rho z$.
However, for $P=10$ mW the function $|A(t,0)|$ already has a small
bulge at $t=0$. We have thus entered the nonlinear regime where
signal-noise mixing causes spectral broadening. As the power increases
further, $10\log_{10}|A(t,0)|$ develops a sinc pulse shape in the range
$t=[-5,5]$ ps due to the exponential factor $e^{-\kappa P(1-\rho)}$.
The narrow autocorrelation function for $P=400$ mW implies that the
spectrum has broadened considerably.

\begin{figure}[t!]
  \centerline{\includegraphics[scale=0.48]{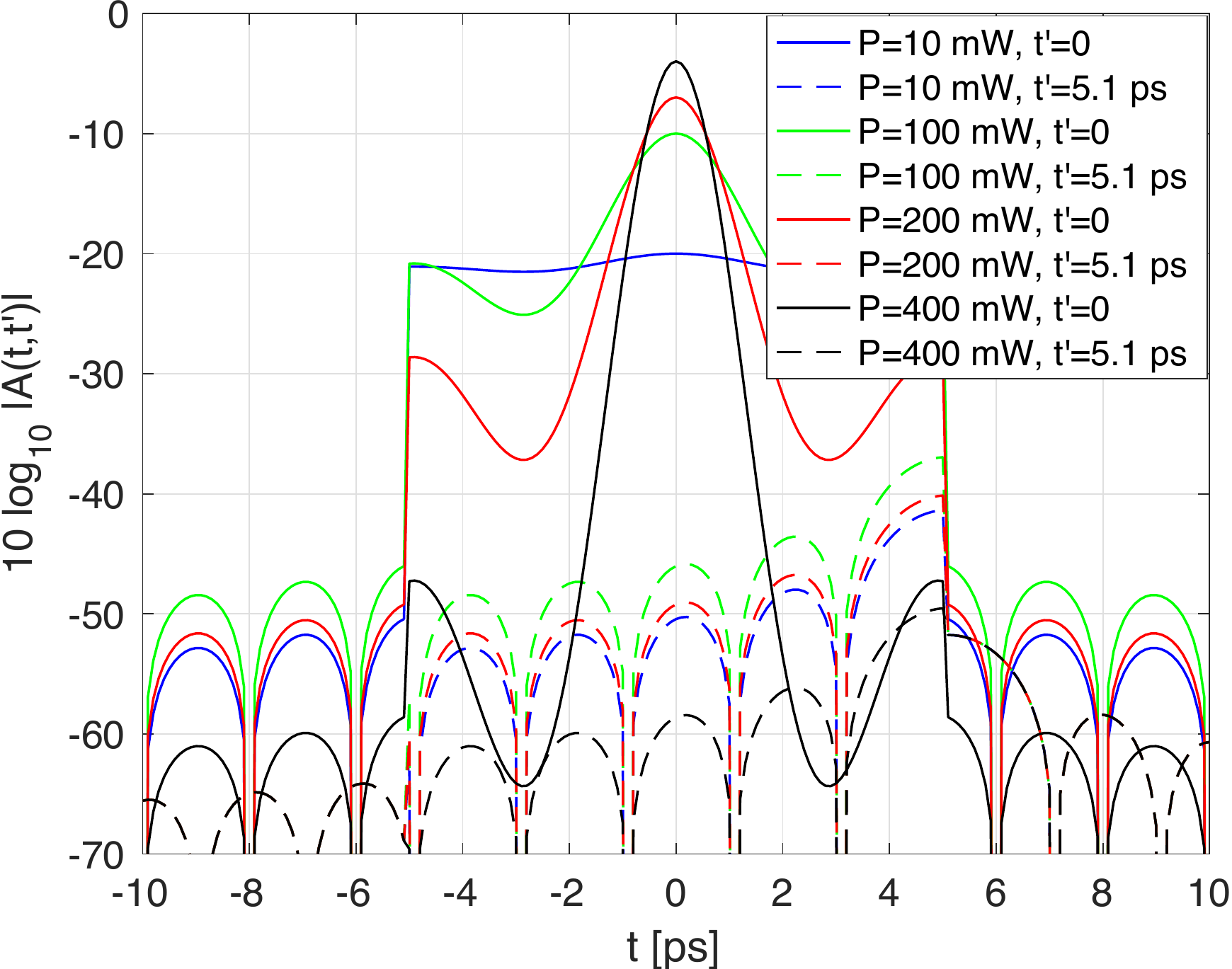}}
  \caption{$\left|A(t,t')\right|$ in dB for the pulse \eqref{eq:isolated-rectangular},
  $t'=0$ and $t'=5.1$ ps, and $P=10,100,200,400$ mW.}
  \label{fig:RectAutoplot}
\end{figure}
\begin{figure}[t!]
  \centerline{\includegraphics[scale=0.48]{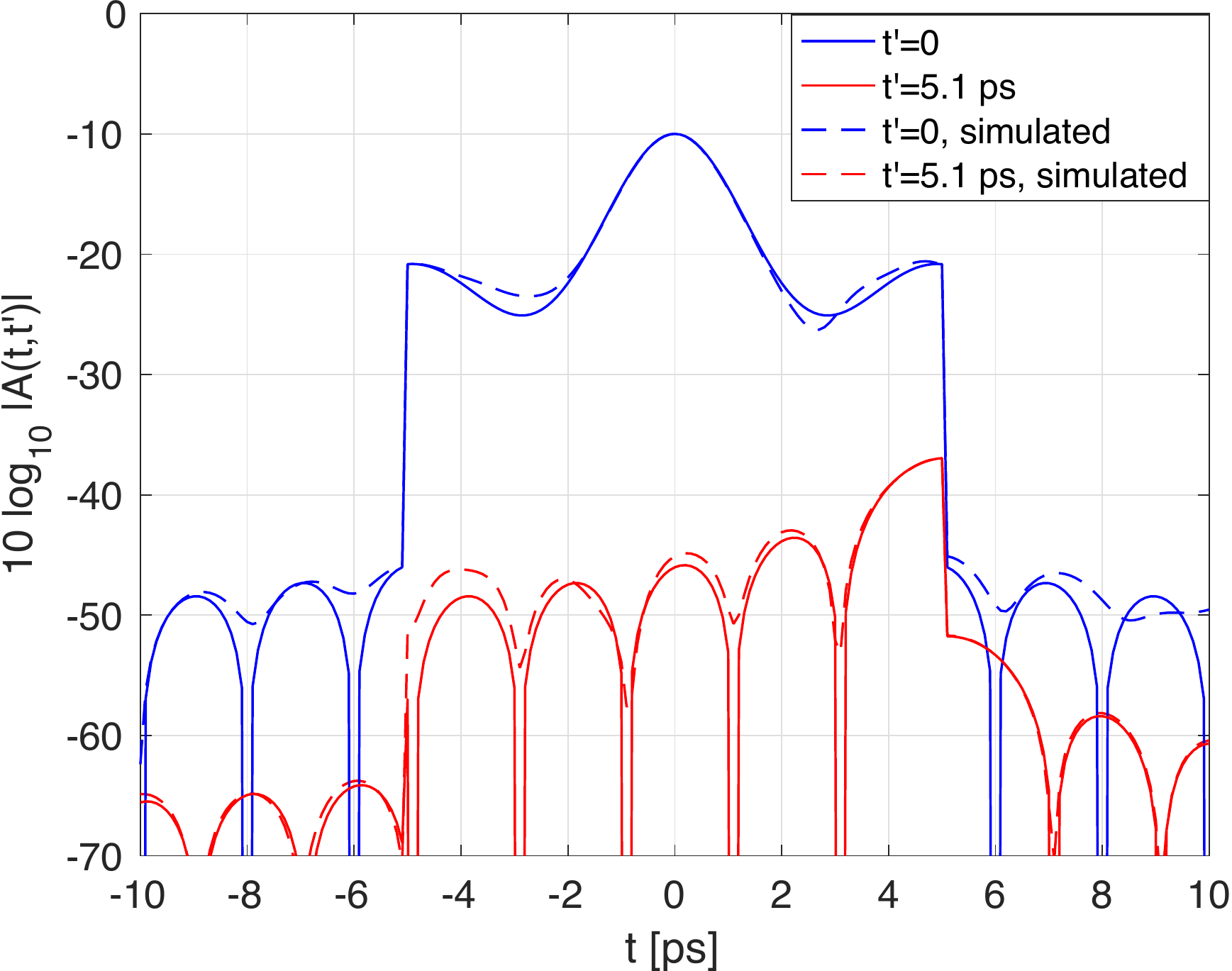}}
  \caption{Computed and simulated $\left|A(t,t')\right|$ in dB for the pulse \eqref{eq:isolated-rectangular},
  $t'=0$ and $t'=5.1$ ps, and $P=100$ mW. }
  \label{fig:RectAuto100plot}
\end{figure}  

Fig.~\ref{fig:RectAuto100plot} shows the exact and simulated
autocorrelation functions for $P=100$ mW. The simulated curves
are for the averaged $A(t,t')$ from $10^{4}$ Monte Carlo
simulations of the noise. The curves are in good agreement;
note that the y-axis is logarithmic.

\subsection{PAM with Rectangular Pulses and Ring Modulation}
\label{subsec:ring-modulation}
Consider PAM in \eqref{eq:PAM} with rectangular pulses that are time-limited to $[0,T_s)$.
We study a constant amplitude $\sqrt{P}$, and a phase that is
uniformly distributed over $[-\pi,\pi)$, i.e., ring modulation or phase-shift keying (PSK).
Using \eqref{eq:autocorr-approx} with $u_0(t)=\sqrt{P}e^{j\Phi(t)}$
and $u_0(t')=\sqrt{P}e^{j\Phi(t')}$ we have
\begin{align}
\mathcal{A}(t,t')
& = \left( K \rho z + P e^{j \Phi_\Delta} \right)
e^{ - \kappa P \left( 1 - \rho \cos \Phi_\Delta \right)} .
\label{eq:ring-A}
\end{align}
where $\Phi_\Delta=\Phi-\Phi'$.
If $t$ and $t'$ are in the same symbol interval, then we have $\Phi_\Delta=0$;
otherwise $\Phi$ and $\Phi'$ are independent and $\Phi_\Delta$ is uniform.
The average autocorrelation function is therefore
\begin{align}
\bar{\mathcal{A}}(t,t')
& = \left\{ \begin{array}{ll}
\left( K \rho z + P \right) e^{ - \kappa P \left( 1 - \rho \right)}, & \Phi_\Delta=0 \\
\left[ K \rho z \, I_0\left(\kappa P \rho\right) + P \, I_1\left(\kappa P \rho\right) \right]
e^{ - \kappa P}, & \text{else}.
\end{array} \right.
\label{eq:ring-A1}
\end{align}
Note that $\bar{\mathcal{A}}(t,t')$ is real-valued.
We further compute the time-averaged version of \eqref{eq:ring-A1} to be
(see~\eqref{eq:autocorr-stationary})
\begin{align}
\bar{\mathcal{A}}(\tau) & = \left( 1 - \frac{|\tau|}{T_s} \right) \left( K\rho z + P \right)
e^{- \kappa P \left( 1-\rho\right)} \nonumber \\
& \quad + \frac{|\tau|}{T_s} \,
\left[ K \rho z \, I_0\left(\kappa P \rho\right)
+ P \, I_1\left(\kappa P \rho\right) \right] e^{ - \kappa P}
\label{eq:rect-PSD-Atau1}
\end{align}
for $|\tau|<T_s$, and otherwise
\begin{align}
\bar{\mathcal{A}}(\tau) & = \left[ K \rho z \, I_0\left(\kappa P \rho\right)
+ P \, I_1\left(\kappa P \rho\right) \right] e^{ - \kappa P}.
\label{eq:rect-PSD-Atau2}
\end{align}
Note that $\bar{\mathcal{A}}(\tau)$ is a real-valued and even function of $\tau$.

A plot of $|\bar{\mathcal{A}}(\tau)|$ is shown in Fig.~\ref{fig:RectAutotauplot}, along
with the amplitude of the average autocorrelation function from $10^{4}$ Monte Carlo 
simulations of the random signals and noise. The simulations were performed
using \eqref{eq:dispersion-free-model} where $w(\cdot,t)$ and $w(\cdot,t')$ are jointly
Wiener processes for any $t$ and $t'$. 
Observe that $|\bar{\mathcal{A}}(\tau)|$ accurately matches the simulations.
The dash-dotted curve is for $\gamma=0$, and it shows that the nonlinearity has
caused substantial  narrowing of the autocorrelation function, i.e., there is substantial
spectral broadening.

This phenomenon is clearly apparent in Fig.~\ref{fig:RectPAMPSDplot} that plots the PSDs
$\bar{\mathcal{P}}(f)$ for $P=10,50,100,500,1000$ mW,\footnote{The PSDs were
computed with \eqref{eq:rect-PSD-Atau1}-\eqref{eq:rect-PSD-Atau2}.}
as well as the PSD when $P=10$ mW and $\gamma=0$. Recall that the
OA bandwidth is  $B=500$ GHz, and observe that $\bar{\mathcal{P}}(f)$
is large well-beyond this frequency already for $P=100$ mW.
The model is therefore inaccurate at this launch power since
the OA compensates attenuation only for frequencies up to $B$.

\begin{figure}[t!]
  \centerline{\includegraphics[scale=0.48]{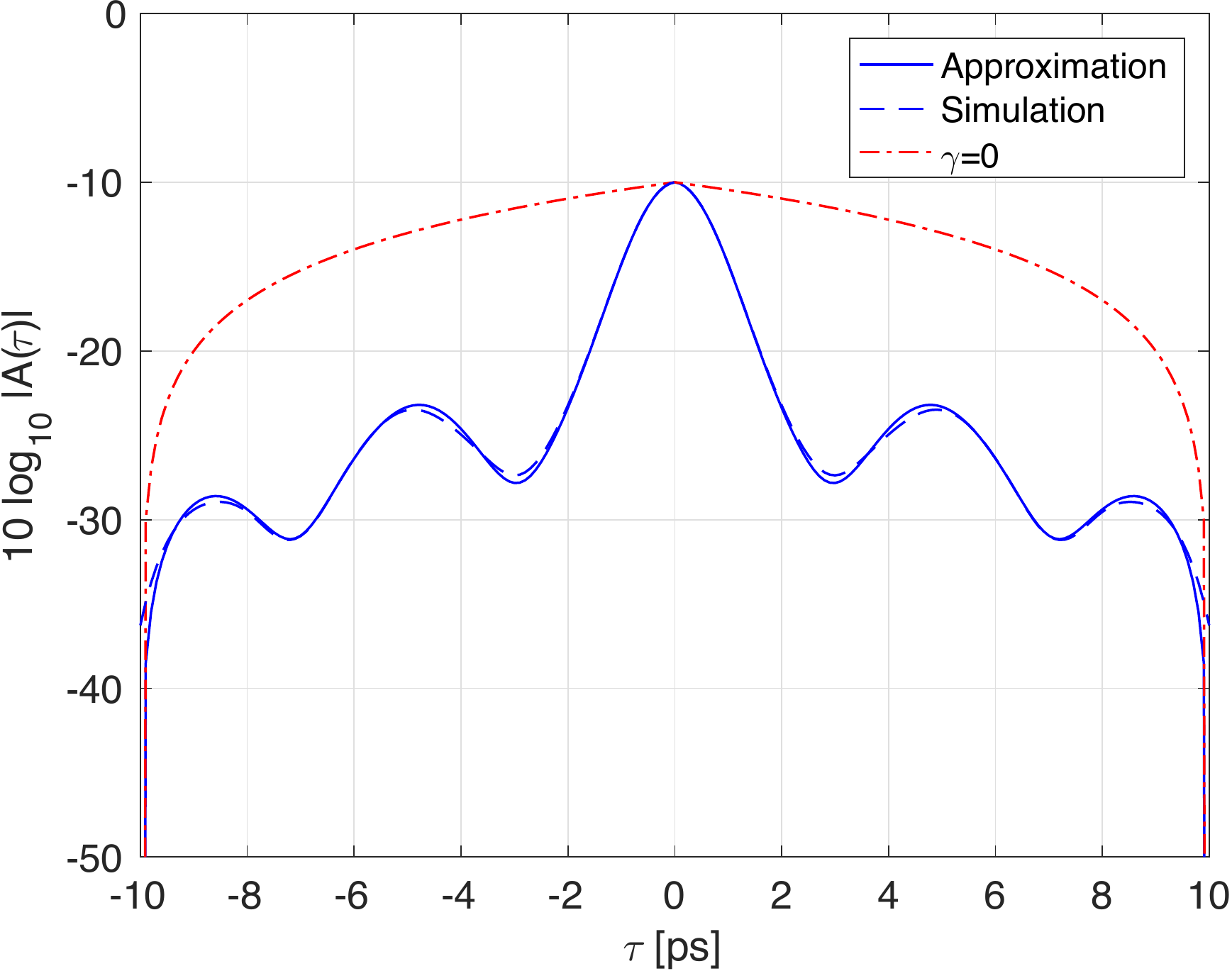}}
  \caption{Approximate and simulated $\left|\bar{A}(\tau)\right|$ in dB for PAM with 
  rectangular pulses, ring modulation, and $P=100$ mW. The dash-dotted curve
  shows $\left|\bar{A}(\tau)\right|$ for the same channel except that $\gamma=0$.}
  \label{fig:RectAutotauplot}
\end{figure}
\begin{figure}[t!]
  \centerline{\includegraphics[scale=0.48]{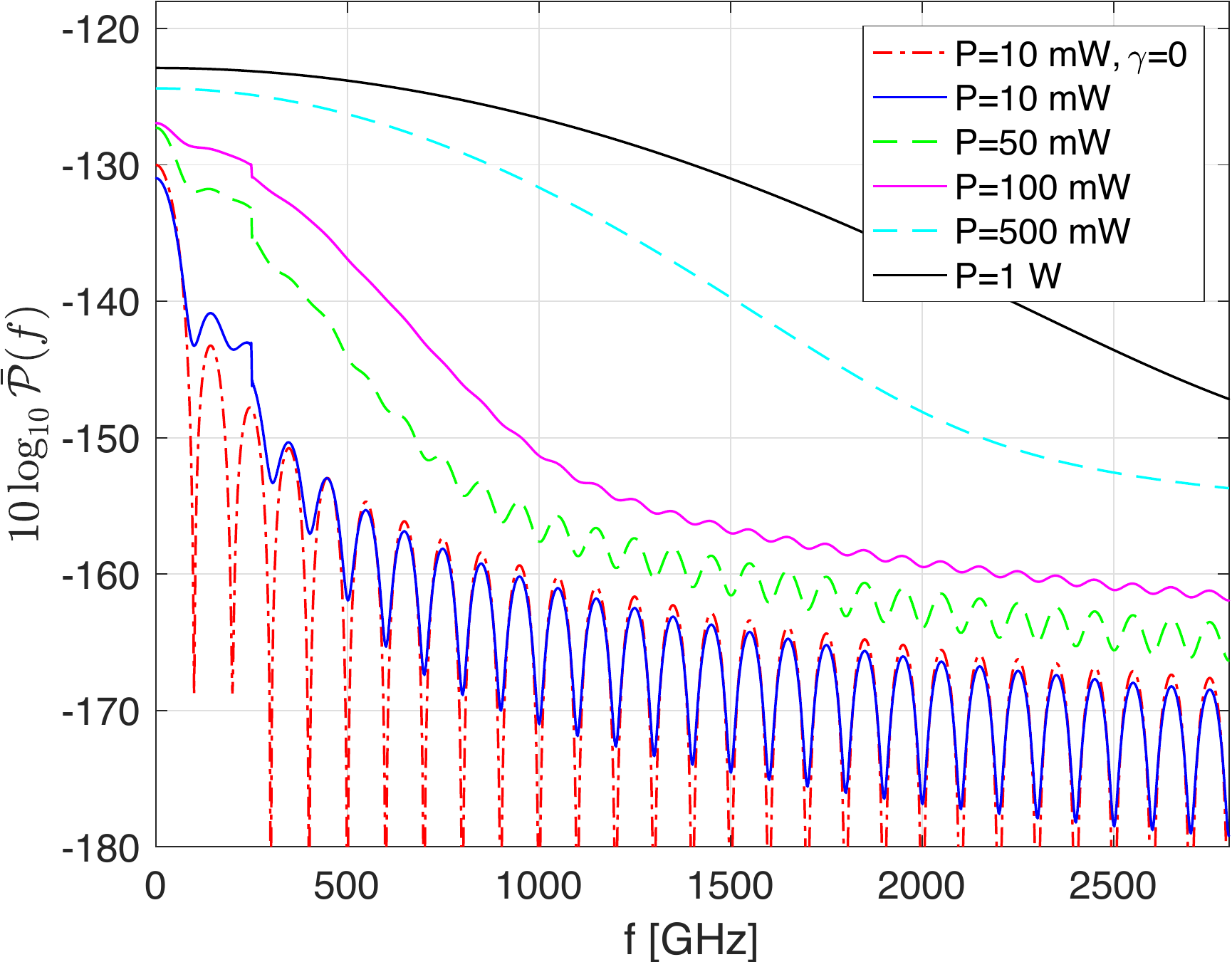}}
  \caption{$\bar{\mathcal{P}}(f)$ in dB for PAM with 
  rectangular pulses, ring modulation, and $P=10,50,100,500,1000$ mW. The dash-dotted curve
  shows $\bar{\mathcal{P}}(f)$ for $P=10$ mW and $\gamma=0$.}
  \label{fig:RectPAMPSDplot}
\end{figure}

\begin{remark}
The received signal energy is
$\bar{A}(0) = \bar{\mathcal{A}}(0) = K z + P$, and as $\gamma \rightarrow 0$ we have 
\begin{align}
\bar{A}(\tau) \rightarrow
\left\{ \begin{array}{ll}
K\rho z + \left( 1 - \frac{|\tau|}{T_s} \right) P, & |\tau| \le T_s \\
K\rho z, & \text{else}.
\end{array} \right. \label{eq:PAM-rect-limit}
\end{align}
The same limiting expression  \eqref{eq:PAM-rect-limit} is valid for $\bar{\mathcal{A}}(\tau)$.
\end{remark}

\section{Power, Bandwidth, and Energy Bounds}
\label{sec:P-E-scaling}
This section studies the power and energy at the output of bandlimited and
time-resolution limited receivers. The analysis ultimately lets us bound
the propagating signal bandwidth as a function of the launch power.
A simple but useful bound for low launch power is
\begin{align}
   \bar{P}_r(W,T) & \le \bar{P}_r(\infty,T) = Kz + \bar{P}_T \le Kz + P
   \label{eq:PrT-bound1}
\end{align}
where the second inequality is from \eqref{eq:P-constraint}.

\subsection{Bandlimited Receiver}
\label{subsec:bandlimited-receiver}
Consider \eqref{eq:PrT-bound} and split the double integral into
three parts: one where $|u_0|>|u_0'|$, one where $|u_0|<|u_0'|$,
and one where $|u_0|=|u_0'|$. The first two double integrals are
identical due to the symmetry in the arguments of the integrand,
i.e., for every pair $(t,t')=(t_1,t_2)$ where $|u_0|>|u_0'|$ there is
a pair $(t,t')=(t_2,t_1)$ for which $|u_0|<|u_0'|$ and
$$\left| \bar{A}(t_1,t_2) \, b(t_1-t_2)\right|
= \left| \bar{A}(t_2,t_1) \, b(t_2-t_1) \right|.$$
In other words, using \eqref{eq:PrT-bound} we have
\begin{align}
   \bar{P}_r(W,T) & \le \frac{1}{T} \int_{-T/2}^{T/2} \int_{-T/2}^{T/2}
   |\bar{A}(t,t') \, b(t-t')| \, dt' dt \nonumber \\
   & = \frac{2}{T} \iint_{\mathcal{I}_1}
   |\bar{A}(t,t') \, b(t-t')| \, dt' dt \nonumber \\
   & \quad + \frac{1}{T} \iint_{\mathcal{I}_2}
   |\bar{A}(t,t') \, b(t-t')| \, dt' dt
   \label{eq:PrT-bound2}
\end{align}
where
\begin{align}
   & {\mathcal I}_1 = \{(t,t'): |u_0|>|u_0'|, \, |t| \le T/2, \, |t'| \le T/2 \} \\
   & {\mathcal I}_2 = \{(t,t'): |u_0|=|u_0'|, \, |t| \le T/2, \, |t'| \le T/2 \}.
\end{align}
We further have $|\bar{A}(t,t')| = |\E{A(t,t')}| \le \E{|A(t,t')|}$.
Thus, using \eqref{eq:autocorr-bound3} and $|u_0|\ge |u_0'|$
for all $(t,t') \in {\mathcal I}_1 \cup {\mathcal I}_2$, we have
\begin{align}
   \bar{P}_r(W,T) & \le \frac{1}{T} \int_{-T/2}^{T/2} \E{P_r(W,T,t)}  \, dt
   \label{eq:PrT-bound2a}
\end{align}
where
\begin{align}
   & P_r(W,T,t) = 2 \left[ K z + \left(\sqrt{P_t} + \delta \right)^2 \right] \nonumber \\
   & \int_{-T/2}^{T/2} 
      |S(c)|^2 \exp\left( - \gamma \, T_I(c) \frac{P_t}{2} \sqrt{1-\rho^2} \right) \,
      \left| b(t-t') \right| \, dt'
   \label{eq:PrT-integrand}
\end{align}
and where we have defined $P_t=|u_0(t)|^2$.
%

\subsection{Bounds on the Instantaneous Received Power}
\label{subsec:bandlimited-receiver-bounds}
Recall that $W$ is the receiver bandwidth and $B$ is the OA bandwidth.
In Appendix~\ref{app:Pt-lemma-proofs},  we prove the following lemmas that bound
the instantaneous received power $P_r(W,T,t)$.
We distinguish cases where $x\ge1$ cannot or can occur.
\begin{itemize}
\item Lemma~\ref{lemma:WlessB-lemma1} applies to the usual case with $W\le B$ and
low noise-nonlinearity-distance product, i.e., $\gamma (K/2) z^2\le 1$.
\item Lemma~\ref{lemma:WlessB-lemma2} is for $W\le B$ and $\gamma (K/2) z^2 \ge 1$.
\item Lemma~\ref{lemma:WgreaterB-lemma} is for $W\ge B$ and $\gamma (K/2) z^2\le 1$.
\end{itemize}
We do not consider the case $W\ge B$ and $\gamma (K/2) z^2\ge 1$ because this case
is slightly more complicated than the others, and because we are mainly interested in $W\le B$.

\begin{lemma} \label{lemma:WlessB-lemma1}
If $W\le B$ and $\gamma (K/2) z^2\le 1$, then we have
\begin{align}
   & P_r(W,T,t) \le 4 \left[ Kz + (\sqrt{P_t}+\delta)^2 \right] \nonumber \\
   & \left[ \frac{2\, W/B}{\sqrt{(\kappa/8) P_t}} 
   \frac{\sqrt{\pi}}{2} {\rm erf}\left( \sqrt{(\kappa/8) P_t} \right)
   + 5e^{- \sqrt{\gamma K z^2} - (\kappa/9) P_t } \right] .
   \label{eq:WlessB-lemma1}
\end{align}
\end{lemma}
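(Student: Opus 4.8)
The plan is to bound the remaining integral in \eqref{eq:PrT-integrand} directly, splitting the time lag into a short-lag region where the noise is still strongly correlated and a long-lag region where it has decorrelated. Writing $\tau=t-t'$ and recalling from \eqref{eq:Wcorrelation} that $\rho=\sinc(B\tau)$, I would first extend the integration range from $[-T/2,T/2]$ to all of $\tau$ (the integrand is non-negative) and use that both $\rho$ and $b$ are even, giving $P_r(W,T,t)\le 4[Kz+(\sqrt{P_t}+\delta)^2]\int_0^\infty |S(c)|^2\exp(-\gamma T_I(c)(P_t/2)\sqrt{1-\rho^2})\,|b(\tau)|\,d\tau$. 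The factor $4$ already isolates the bracketed prefactor of the claim, so it remains only to bound $\int_0^\infty(\cdots)\,d\tau$ by the bracketed sum in \eqref{eq:WlessB-lemma1}.

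Before splitting, I would simplify the signal-noise exponent. Since $\sqrt{1-\rho^2}\le1$ and $\gamma(K/2)z^2\le1$ by hypothesis, the quantity $x=\gamma(K/2)z^2\sqrt{1-\rho^2}$ of \eqref{eq:x-def} never exceeds $1$, so the linear branch $\min(x,1/\sqrt{x})=x$ applies in \eqref{eq:Tbound3}. This yields $T_I(c)\ge(z/3)\,\gamma(K/2)z^2\sqrt{1-\rho^2}$, and substituting $\kappa=2\gamma^2Kz^3/3$ from \eqref{eq:kappa} gives the clean lower bound $\gamma T_I(c)(P_t/2)\sqrt{1-\rho^2}\ge(\kappa/8)P_t(1-\rho^2)$ on the exponent throughout, so the exponential factor is at most $\exp(-(\kappa/8)P_t(1-\rho^2))$.

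I would then split the integral at the decorrelation scale $\tau=1/B$. On $[0,1/B]$ I would use $|S(c)|\le1$ from \eqref{eq:STbounds1} and $|b(\tau)|\le2W$ from \eqref{eq:Btime}, together with the elementary inequality $1-\sinc^2(B\tau)\ge(B\tau)^2$ valid for $B\tau\in[0,1]$, so that the integrand is at most $2W\exp(-(\kappa/8)P_t(B\tau)^2)$; the substitution $s=B\tau$ then produces exactly the error-function term, since $\int_0^1 e^{-(\kappa/8)P_t s^2}\,ds=\frac{1}{\sqrt{(\kappa/8)P_t}}\frac{\sqrt{\pi}}{2}{\rm erf}(\sqrt{(\kappa/8)P_t})$. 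On $[1/B,\infty)$ I would use $|\sinc(B\tau)|\le1/(\pi B|\tau|)\le1/\pi$, which gives $1-\rho^2\ge1-1/\pi^2>8/9$ (hence exponent at most $-(\kappa/9)P_t$) and makes $\sqrt{1-\rho^2}$ large enough that $2\sqrt{x}\ge\sqrt{\gamma Kz^2}$, so \eqref{eq:Smagbound} yields $|S(c)|^2\le5e^{-\sqrt{\gamma Kz^2}}$; pulling these constants out and integrating against $\int_0^\infty|b(\tau)|\,d\tau=1$ from \eqref{eq:Btime-bound} produces the second term. Summing the two regions and restoring the prefactor $4[Kz+(\sqrt{P_t}+\delta)^2]$ gives the claim.

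The routine-but-delicate part is verifying the two elementary constants on which the clean form rests: the inequality $1-\sinc^2(y)\ge y^2$ for $y\in[0,1]$, which makes the error function appear exactly rather than with a worse constant, and the numerical check $1-1/\pi^2>8/9$ that produces the $(\kappa/9)P_t$ exponent. I expect the main obstacle to be confirming the first of these uniformly on $[0,1]$: it holds with equality at both endpoints, so a short series/concavity argument near $y=0$ and $y=1$ is needed rather than a one-line estimate. The hypothesis $\gamma(K/2)z^2\le1$ is precisely what guarantees $x\le1$ and hence the applicability of the linear branch of \eqref{eq:Tbound3}, while $W\le B$ places the decorrelation scale $1/B$ within the main lobe of the receiver filter and motivates the splitting point.
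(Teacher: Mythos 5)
Your proposal is correct and follows the paper's own proof in Appendix H essentially step for step: the same lower bound $\gamma T_I(c)\ge(\kappa/4)\sqrt{1-\rho^2}$ from \eqref{eq:Tbound3} under $x\le1$, the same split at $|t-t'|=1/B$, the same use of $|b|\le 2W$ together with $\sinc(y)^2\le 1-y^2$ to produce the error function, and the same combination of \eqref{eq:Smagbound} with $\int_0^\infty|b(\tau)|\,d\tau=1$ for the tail term. The only cosmetic deviations are that you bound $\sinc(B\tau)^2$ by $1/\pi^2$ on the tail where the paper uses the cruder constant $1/20$ from \eqref{eq:sinc-bounds2} (both give the $\kappa/9$ exponent and the $5e^{-\sqrt{\gamma Kz^2}}$ factor), and that the inequality $\sinc(y)^2\le1-y^2$ on $[0,1]$ that you flag as delicate is simply taken as given in the paper (verified numerically in Fig.~\ref{fig:SimpleBounds}).
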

\begin{lemma} \label{lemma:WlessB-lemma2}
If $W\le B$ and $\gamma (K/2) z^2\ge 1$, then we have
\begin{align}
   & P_r(W,T,t) \le 4 \left[ Kz + (\sqrt{P_t}+\delta)^2 \right]  \nonumber \\
   & \quad \left[ \frac{2\, W/B}{\sqrt{(\kappa/8) P_t}}
      \frac{\sqrt{\pi}}{2} {\rm erf}\left( \sqrt{\frac{P_t}{3 Kz}}\right)
      \right. \nonumber \\
      & \qquad \left. + \frac{25 \, W/B}{\gamma K z^2 } e^{- \frac{1}{\sqrt{18} K z} P_t }
       + 5e^{- \sqrt{\gamma K z^2} - \sqrt{\frac{\gamma}{20 K}} P_t} \right]
   \label{eq:WlessB-lemma2}
\end{align}
\end{lemma}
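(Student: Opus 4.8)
The plan is to bound the bracketed integral in \eqref{eq:PrT-integrand} by the same strategy as in Lemma~\ref{lemma:WlessB-lemma1}, but with an extra region to account for the fact that $x$ in \eqref{eq:x-def} now exceeds $1$ on part of the main lobe. First I would extend the $t'$-integral to all of $\mathbb{R}$ (the integrand is nonnegative) and use the evenness of $b$ in \eqref{eq:Btime} to write the bracketed factor of \eqref{eq:PrT-integrand} as $2\int_0^\infty |S(c)|^2 \exp(-D)\,|b(\tau)|\,d\tau$ with $\tau=t-t'$, $\rho=\sinc(B\tau)$ from \eqref{eq:Wcorrelation}, and $D=\gamma T_I(c)\frac{P_t}{2}\sqrt{1-\rho^2}$. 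On the main lobe $0\le\tau\le1/B$ I substitute $s=B\tau$, so that $\rho=\sinc(s)$, $d\tau=ds/B$, and $|b|\le 2W$ contributes a factor $2W/B$ per unit $s$; the tail $\tau\ge1/B$ is handled separately. Because $x=\gamma(K/2)z^2\sqrt{1-\rho^2}$ increases from $0$ at $s=0$ to $\gamma(K/2)z^2\ge1$ at $s=1$, there is a unique $s_1\in(0,1]$ with $x=1$, and I split the main lobe at $s_1$.

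On $[0,s_1]$ we have $x\le1$, so $|S(c)|\le1$ by \eqref{eq:STbounds1} and $T_I(c)\ge(z/3)x$ by \eqref{eq:Tbound3}; substituting $x=\gamma(K/2)z^2\sqrt{1-\rho^2}$ and using $\kappa=2\gamma^2Kz^3/3$ gives $D\ge\frac{\kappa P_t}{8}(1-\rho^2)\ge\frac{\kappa P_t}{8}s^2$, where the last step uses the elementary inequality $1-\sinc^2 s\ge s^2$ on $[0,1]$. Integrating $e^{-(\kappa P_t/8)s^2}$ over $[0,s_1]$ produces the error function, and bounding $s_1$ through $s_1^2\le 1-\sinc^2 s_1=4/(\gamma Kz^2)^2$ turns the argument into $\sqrt{P_t/(3Kz)}$; this yields the first term of the claimed bound. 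The tail $\tau\ge1/B$ mirrors the second term of Lemma~\ref{lemma:WlessB-lemma1}: there $|\rho|$ is bounded away from $1$, so $x\ge1$ and $|S(c)|^2\le5e^{-2\sqrt x}\le5e^{-\sqrt{\gamma Kz^2}}$ by \eqref{eq:Smagbound}, while $T_I(c)\ge(z/3)/\sqrt x$ gives $D\ge\sqrt{\gamma/(20K)}\,P_t$; since $\int_0^\infty|b|\,d\tau=1$ by \eqref{eq:Btime-bound}, this contributes the third term $5\exp(-\sqrt{\gamma Kz^2}-\sqrt{\gamma/(20K)}\,P_t)$.

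The new and most delicate region is $[s_1,1]$, where $x\ge1$ and hence \eqref{eq:Tbound3} only supplies $T_I(c)\ge(z/3)/\sqrt x$. Here I would combine the two exponential factors: $|S(c)|^2\le5e^{-2\sqrt x}$ and $e^{-D}$ with $D\ge\frac{P_t}{6}\sqrt{2\gamma/K}\,(1-\rho^2)^{1/4}$, and then use $(1-\sinc^2 s)^{1/4}\ge\sqrt{s}$ to get a common $\sqrt s$ in the exponent, i.e.\ an integrand bounded by $5\exp(-(A_1+A_2)\sqrt s)$ with $A_1=\sqrt{2\gamma Kz^2}$ and $A_2=\frac{P_t}{\sqrt{18}}\sqrt{\gamma/K}$. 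The plan is to pull the $P_t$-dependent factor out as $e^{-A_2\sqrt s}\le e^{-A_2\sqrt{s_1}}$ and convert $A_2\sqrt{s_1}$ into $P_t/(\sqrt{18}Kz)$ using a lower bound on $s_1$ coming from the companion inequality $1-\sinc^2 s\le(\pi^2/3)s^2$; the remaining $\int_0^1 e^{-A_1\sqrt s}\,ds\le 2/A_1^2=1/(\gamma Kz^2)$ supplies the $(W/B)/(\gamma Kz^2)$ factor. Collecting the three pieces, multiplying by the prefactor $2[Kz+(\sqrt{P_t}+\delta)^2]$ and the symmetry factor $2$, and rounding the accumulated numerical constants up to $25$ gives \eqref{eq:WlessB-lemma2}.

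I expect the main obstacle to be exactly this middle region: unlike Lemma~\ref{lemma:WlessB-lemma1}, the exponent of the signal-noise factor scales only like $(1-\rho^2)^{1/4}$ rather than $(1-\rho^2)$, so the Gaussian integral that produced the clean erf term is no longer available, and I must instead separate a pure $P_t$-decay from a pure $1/(\gamma Kz^2)$ length factor. Making this separation rigorous requires the two-sided elementary estimates on $1-\sinc^2 s$ over $[0,1]$ and a careful check that the bounds match continuously at the crossing $x=1$, where $(z/3)x$ and $(z/3)/\sqrt x$ agree, so that the erf argument and the exponential rates join without loss; the explicit constants $25$ and $1/\sqrt{18}$ are then just safe roundings of the resulting numerical factors.
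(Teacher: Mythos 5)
Your proposal is correct and follows essentially the same route as the paper's proof: the same decomposition of the $\tau$-integral according to whether $x\le 1$, $x\ge 1$, or $|\tau|>1/B$, the same ingredients \eqref{eq:Tbound3}, \eqref{eq:Smagbound}, and the two-sided sinc estimates \eqref{eq:sinc-bounds2}--\eqref{eq:sinc-bounds2a}, and the same three resulting terms (erf term, $e^{-P_t/(\sqrt{18}Kz)}$ term, and tail term). The only cosmetic difference is that you split the main lobe at the exact crossover $s_1$ where $x=1$ and control $s_1$ from above and below via the sinc bounds, whereas the paper brackets the crossover between $\tau^*=1/(\gamma K z^2 B)$ and $2\tau^*$ and sums both exponential bounds on the bracketing interval; both variants land within the stated constant $25$.
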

\begin{lemma} \label{lemma:WgreaterB-lemma}
If $W\ge B$ and $\gamma (K/2) z^2\le 1$, then we have
\begin{align}
   & P_r(W,T,t) \le 4 \left[ Kz + (\sqrt{P_t}+\delta)^2 \right] \nonumber \\
   & \left[ \frac{2\, W/B}{\sqrt{(\kappa/8) P_t}} 
   \frac{\sqrt{\pi}}{2} {\rm erf}\left( \sqrt{(\kappa/8) P_t} \, \frac{B}{W} \right) \right. \nonumber \\
   & \left. + \frac{1}{4}\left(1-\frac{B}{W}\right) e^{- (\kappa/8) P_t (B/W)^2}
      + 5e^{- \sqrt{\gamma K z^2} - (\kappa/9) P_t} \right] .
     \label{eq:WgreaterB-lemma}
\end{align}
\end{lemma}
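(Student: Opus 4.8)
The plan is to bound only the integral in the definition \eqref{eq:PrT-integrand} of $P_r(W,T,t)$, since the prefactor $2[Kz+(\sqrt{P_t}+\delta)^2]$ is already in the right shape; the factor $4$ in \eqref{eq:WgreaterB-lemma} will absorb the extra factor $2$ that I pick up by symmetrizing. First I would tame the signal--noise-mixing exponential. Because $\gamma(K/2)z^2\le 1$ forces $x=\gamma(K/2)z^2\sqrt{1-\rho^2}\le 1$ in \eqref{eq:x-def}, the bound \eqref{eq:Tbound3} reduces to $T_I(c)\ge (z/3)x$, whence $\gamma T_I(c)(P_t/2)\sqrt{1-\rho^2}\ge (\kappa/8)P_t(1-\rho^2)$ with $\kappa$ as in \eqref{eq:kappa}. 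Substituting $\tau=t-t'$, extending the integration to all of $\mathbb{R}$ (the integrand is nonnegative), and using that $|b(\cdot)|$ and $\rho(\cdot)$ are even, I reduce the problem to bounding $2\int_0^\infty |S(c)|^2\exp(-(\kappa/8)P_t(1-\rho^2))\,|b(\tau)|\,d\tau$.

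The heart of the argument is a three-region split of $[0,\infty)$ at the receiver-pulse half-width $1/W$ and the first null $1/B$ of the noise correlation; the hypothesis $W\ge B$ is precisely what makes $1/W\le 1/B$, so the middle band $[1/W,1/B]$ is nonempty and produces the second term of \eqref{eq:WgreaterB-lemma}. On the central band $[0,1/W]$ I would use $|S(c)|^2\le 1$ from \eqref{eq:STbounds1}, bound $|b(\tau)|\le 2W$ by its peak via \eqref{eq:Btime}, and insert the quadratic minorant $1-\rho^2\ge B^2\tau^2$ (valid there because $B\tau\le B/W\le 1$); the resulting Gaussian integral is exactly $\tfrac{\sqrt\pi}{2}\,{\rm erf}$ evaluated at $\sqrt{(\kappa/8)P_t}\,(B/W)$, matching the first term. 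On $[1/W,1/B]$ I would instead use the tail estimate $|b(\tau)|\le 2/(\pi^2 W\tau^2)$ together with the constant bound $1-\rho^2\ge (B/W)^2$ (from $\tau\ge 1/W$), so that integrating $\tau^{-2}$ yields $(2/\pi^2)(1-B/W)e^{-(\kappa/8)P_t(B/W)^2}$, which is dominated by the claimed $\tfrac14(1-B/W)e^{-(\kappa/8)P_t(B/W)^2}$ since $2/\pi^2<\tfrac14$. On the far band $[1/B,\infty)$ the correlation is small, $|\rho|=|\sinc(B\tau)|\le 1/3$, hence $1-\rho^2\ge 8/9$; this gives $\exp(-(\kappa/8)P_t(1-\rho^2))\le e^{-(\kappa/9)P_t}$ and, through \eqref{eq:x-def}, pushes $x$ far enough that the envelope \eqref{eq:Smagbound} collapses to $|S(c)|^2\le 5e^{-2\sqrt x}\le 5e^{-\sqrt{\gamma Kz^2}}$. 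Since $\int_{1/B}^\infty|b(\tau)|\,d\tau\le\int_0^\infty|b(\tau)|\,d\tau=1$ by \eqref{eq:Btime-bound}, this band contributes at most the third term $5e^{-\sqrt{\gamma Kz^2}-(\kappa/9)P_t}$.

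The main obstacle is the central band: everything hinges on a clean quadratic minorant for $1-\rho^2$ whose coefficient produces precisely the ${\rm erf}$ argument $\sqrt{(\kappa/8)P_t}\,(B/W)$, so that the $B/W$ scaling (which distinguishes this lemma from the $W\le B$ case) falls out. The delicate elementary facts I must confirm are $1-\sinc^2(v)\ge v^2$ on $[0,1]$, so that $1-\rho^2\ge B^2\tau^2$ holds throughout $[0,1/W]$, and $|\sinc(v)|\le 1/3$ for $v\ge 1$, so that the far-band exponent is exactly $\kappa/9$ and the $|S(c)|^2$ envelope tightens to $5e^{-\sqrt{\gamma Kz^2}}$; I would verify both numerically, as in Figs.~\ref{fig:Splot}--\ref{fig:Tplot}. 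Once these minorants are established, summing the three contributions and reinstating the factor of $2$ and the prefactor yields \eqref{eq:WgreaterB-lemma}.
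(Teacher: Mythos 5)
Your proposal is correct and follows essentially the same route as the paper: the same lower bound $\gamma T_I(c)\tfrac{P_t}{2}\sqrt{1-\rho^2}\ge(\kappa/8)P_t(1-\rho^2)$ via \eqref{eq:Tbound3}, the same three-region split at $1/W$ and $1/B$, the peak bound $|b|\le 2W$ with the quadratic minorant $1-\rho^2\ge B^2\tau^2$ on the central band, the tail estimate $|b(\tau)|\le 2/(\pi^2 W\tau^2)$ on $[1/W,1/B]$ with $2/\pi^2<1/4$, and $\int|b|\le 1$ together with \eqref{eq:Smagbound} on $[1/B,\infty)$. The only cosmetic difference is that you get the $\kappa/9$ tail exponent from $|\sinc(v)|\le 1/3$ (so $1-\rho^2\ge 8/9$) while the paper uses $\sinc^2(v)\le 1/20$ and then loosens $19/20$ to $8/9$; both are valid.
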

\begin{remark}
The above bounds are valid for \emph{any} launch signal.
The bounds may be very loose, e.g., for launch signals
with bandwidth larger than $B$.
\end{remark}
\begin{remark}
We have
\begin{align}
\begin{array}{rl}
(\sqrt{\pi}/2) \, {\rm erf}(y) \approx y, & \text{small } |y| \\
{\rm erf}(y) \approx 1, & \text{large } y.
\end{array}
\end{align}
Thus, for large $P_t$, the received power scales at most as $\sqrt{P_t}$
for all the regimes considered in
Lemmas~\ref{lemma:WlessB-lemma1}-\ref{lemma:WgreaterB-lemma}.
The power loss factor of $1/\sqrt{P_t}$ is due to signal-noise mixing,
and the square-root character of the power loss is due to the quadratic
behavior of $\rho(t-t')=\sinc(B(t-t'))$ near $t=t'$. The shape of the OA
noise PSD thus directly affects the power scaling.
\end{remark}
\begin{remark}
For small $P_t$ or small $\gamma$, we know that the instantaneous
received power $P_r(W,T,t)$ can be $Kz+P_t$, as we expect for a memoryless,
noisy, linear channel. For example, for small $P_t$ the RHS
of~\eqref{eq:WlessB-lemma1} approaches
\begin{align}
   4 \left[ Kz + \left( \sqrt{P_t}+\delta \right)^2  \right]
   \left[ 2\,\frac{W}{B} + 5 e^{- \sqrt{\gamma K z^2} } \right] .
   \label{eq:WlessB-lemma1-smallP}
\end{align}
Note that $W/B$ can be small but the term $5e^{- \sqrt{\gamma K z^2}}$ is larger than one
if $\gamma (K/2) z^2\le 1$.  However, for $\gamma (K/2) z^2\ge 1$
and small $P_t$, the RHS of \eqref{eq:WlessB-lemma2} approaches
\begin{align}
   4 \left[ Kz + \left( \sqrt{P_t}+\delta \right)^2 \right]
   \left[ \frac{29}{\gamma K z^2} \, \frac{W}{B} + 5e^{- \sqrt{\gamma K z^2} } \right] .
   \label{eq:WlessB-lemma2-smallP}
\end{align}
Now the receiver may put out less power than $Kz+P_t$ due
to the large noise-nonlinearity-distance product.
\end{remark}
\begin{remark}
If $B$ is very large, then the $W/B$ terms in
\eqref{eq:WlessB-lemma1}-\eqref{eq:WgreaterB-lemma} are small.
We may thus encounter $P_r(W,T,t)$ with an exponential
behavior in $P_t$, see Remark~\ref{rmk:infinite-bandwidth-noise}
and Appendix~\ref{app:infiniteB}.
\end{remark}
\begin{remark}
The case $W > B$ has the receiver measuring signals in bands where
there is no attenuation yet the noise is small, as discussed in the introduction and
Sec.~\ref{subsec:large-capacity}. Moreover, for fixed $P_t$,
fixed $B$, and large $W$ we approach the regime of the per-sample receiver
where \eqref{eq:WgreaterB-lemma} becomes
\begin{align}
   & P_r(\infty,T,t) \nonumber \\
   & \le \left[ Kz + (\sqrt{P_t}+\delta)^2 \right]
      \left[ 9 + 20 e^{- \sqrt{\gamma K z^2} - (\kappa/9) P_t} \right] .
     \label{eq:WgreaterB-infinite}
\end{align}
The correct answer on the RHS of \eqref{eq:WgreaterB-infinite} is
$Kz+P_t$; the extra factors are due to loose bounding steps
that were designed for large $P_t$.
\end{remark}

\subsection{Bounds on the Average Received Power}
\label{subsec:average-power}
We continue to study the case $W \le B$ as the range of practical and theoretical
interest.  We would next like to develop a bound on the {\em average} received power
$\bar{P}_r(W,T)$ as a function of the maximum average launch power $P$, see~\eqref{eq:P-constraint}.
For this purpose, define the function
\begin{align}
   f(s,P) = \left[ Kz + (\sqrt{P}+\delta)^2 \right]
             \frac{\sqrt{\pi}}{2} \frac{{\rm erf}(\sqrt{sP})}{\sqrt{sP}}
\label{eq:f-function}
\end{align}
and an offset power $P_o=3(Kz + \delta^2)$.
In Appendix~\ref{app:Pt-lemma-proofs},  we prove the following lemmas.
\begin{lemma} \label{lemma:WlessB-lemma3}
If $W\le B$ and $\gamma (K/2) z^2\le 1$, then we have
\begin{align}
   & \bar{P}_r(W,T) \le c_1 + \frac{8W}{B} f\left(\frac{\kappa}{8}, P + P_o \right)
   \label{eq:PrT-bound3}
\end{align}
where
\begin{align}
 c_1 & = 20 \left[ Kz + \delta^2 + \sqrt{\frac{18}{\kappa e}} \delta + \frac{9}{\kappa e} \, \right]
              e^{- \sqrt{\gamma K z^2}} .
              \label{eq:c1}
\end{align}
Furthermore, the RHS of \eqref{eq:PrT-bound3} is non-decreasing and concave in $P$.
\end{lemma}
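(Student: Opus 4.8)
The plan is to start from the averaged bound \eqref{eq:PrT-bound2a}, namely $\bar{P}_r(W,T)\le\frac{1}{T}\int_{-T/2}^{T/2}\E{P_r(W,T,t)}\,dt$, and to substitute the instantaneous bound of Lemma~\ref{lemma:WlessB-lemma1}, which is available precisely because $W\le B$ and $\gamma(K/2)z^2\le1$. The right-hand side of \eqref{eq:WlessB-lemma1} splits into two pieces. The first piece, $4[Kz+(\sqrt{P_t}+\delta)^2]\cdot\frac{2W/B}{\sqrt{(\kappa/8)P_t}}\frac{\sqrt{\pi}}{2}{\rm erf}(\sqrt{(\kappa/8)P_t})$, is exactly $\frac{8W}{B}\,f(\kappa/8,P_t)$ by the definition \eqref{eq:f-function} with $s=\kappa/8$ and $P_t=|u_0(t)|^2$. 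The second piece is $20\,e^{-\sqrt{\gamma K z^2}}\,[Kz+(\sqrt{P_t}+\delta)^2]\,e^{-(\kappa/9)P_t}$. The theorem then reduces to bounding the time-and-ensemble average of each piece, the first by $\frac{8W}{B}f(\kappa/8,P+P_o)$ and the second by $c_1$.

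For the exponential piece I would expand the bracket as $Kz+(\sqrt{P_t}+\delta)^2=(Kz+\delta^2)+P_t+2\delta\sqrt{P_t}$ and maximize each summand against $e^{-(\kappa/9)P_t}$ over $P_t\ge0$ separately. Using $\max_{p\ge0}p\,e^{-ap}=1/(ae)$ and $\max_{p\ge0}\sqrt{p}\,e^{-ap}=1/\sqrt{2ae}$ with $a=\kappa/9$ gives the three maxima $Kz+\delta^2$, $9/(\kappa e)$, and $\delta\sqrt{18/(\kappa e)}$. Their sum is the bracket of $c_1$ in \eqref{eq:c1}, and since this is a pointwise bound in $P_t$ it passes unchanged through the time average and the expectation over $U_0(\cdot)$. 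This yields the constant $c_1$ exactly.

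For the $f$-piece the plan is to invoke concavity together with the launch constraint. First I would establish that $q\mapsto f(\kappa/8,q)$ is non-decreasing and that the shifted map $p\mapsto f(\kappa/8,p+P_o)$ is concave on $[0,\infty)$; equivalently, $f(\kappa/8,\cdot)$ is concave on $[P_o,\infty)$, which is the role of the offset $P_o=3(Kz+\delta^2)$. Granting this, monotonicity gives $f(\kappa/8,P_t)\le f(\kappa/8,P_t+P_o)$ pointwise in $P_t$, so that $\frac{1}{T}\int\E{f(\kappa/8,P_t)}\,dt\le\frac{1}{T}\int\E{f(\kappa/8,P_t+P_o)}\,dt$. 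Applying Jensen's inequality to the concave map $p\mapsto f(\kappa/8,p+P_o)$ and then using the power constraint \eqref{eq:P-constraint}, $\frac{1}{T}\int\E{P_t}\,dt\le P$, together with monotonicity, bounds this by $f(\kappa/8,P+P_o)$. Multiplying by $\frac{8W}{B}$ and adding $c_1$ gives \eqref{eq:PrT-bound3}. The final non-decreasing and concave claim is then immediate, since $c_1$ and $W/B$ are nonnegative constants and $P\mapsto f(\kappa/8,P+P_o)$ was shown to be non-decreasing and concave.

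The main obstacle is the concavity and monotonicity analysis of $f(\kappa/8,\cdot)$. Writing $f=A(q)\,g(q)$ with the concave increasing bracket $A(q)=Kz+(\sqrt{q}+\delta)^2$ and the decreasing factor $g(q)=\frac{\sqrt{\pi}}{2}\,{\rm erf}(\sqrt{sq})/\sqrt{sq}$, the second derivative $f''=A''g+2A'g'+Ag''$ has competing signs, because $A$ increases while $g$ decreases: the cross term $2A'g'<0$ helps, but the contributions of $A''$ and $g''$ must be controlled. The delicate point is to show $f''\le0$ precisely for $q\ge P_o$ and to verify that $P_o=3(Kz+\delta^2)$ is the threshold that secures this while also keeping $f$ non-decreasing. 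I expect this derivative estimate, which relies on the small- and large-argument expansions and tail bounds for ${\rm erf}$, to be the technical heart of the argument; the two averaging steps above are then routine.
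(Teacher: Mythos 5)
Your proposal follows essentially the same route as the paper: substitute Lemma~\ref{lemma:WlessB-lemma1} into \eqref{eq:PrT-bound2a}, absorb the exponential term into $c_1$ via the pointwise maxima $ye^{-ay}\le 1/(ae)$ and $\sqrt{y}\,e^{-ay}\le 1/\sqrt{2ae}$, and handle the ${\rm erf}$ term by monotonicity, Jensen's inequality for the $P_o$-shifted concave map, and the constraint $\bar{P}_T\le P$. The derivative analysis you defer as the "technical heart" is precisely what the paper carries out in Appendix~\ref{app:simple-bounds} via \eqref{eq:concave-deriv1}--\eqref{eq:concave-deriv2} and the bound $e^{y^2}\tfrac{\sqrt{\pi}}{2}{\rm erf}(y)\ge y$, confirming your identified threshold $P_o=3(Kz+\delta^2)$.
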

\begin{lemma} \label{lemma:WlessB-lemma4}
If $W\le B$ and $\gamma (K/2) z^2\ge 1$, then we have
\begin{align}
   \bar{P}_r(W,T) \le \frac{W}{B} c_2 + c_3 + \frac{16(W/B)}{\gamma K z^2}
                     f\left( \frac{1}{3Kz}, P + P_o \right)
   \label{eq:PrT-bound4}
\end{align}
where
\begin{align}
 c_2 & = \frac{100}{\gamma K z^2 } \left[ Kz + \delta^2 + \sqrt{\frac{6Kz}{e}} \delta
              +  \frac{\sqrt{18}Kz}{e} \, \right] \nonumber \\
 c_3 & = 20 \left[ Kz + \delta^2 + \left(\frac{80 K}{\gamma e^2}\right)^{1/4} \delta
              +  \sqrt{\frac{20K}{\gamma e^2}} \, \right] e^{- \sqrt{\gamma K z^2}} .
\end{align}
Furthermore, the RHS of \eqref{eq:PrT-bound4} is non-decreasing and concave in $P$.
\end{lemma}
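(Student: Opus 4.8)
The plan is to combine the pointwise bound of Lemma~\ref{lemma:WlessB-lemma2} with the average-power constraint \eqref{eq:P-constraint} through the inequality \eqref{eq:PrT-bound2a},
\[
   \bar{P}_r(W,T)\le\frac1T\int_{-T/2}^{T/2}\E{P_r(W,T,t)}\,dt .
\]
Into the right-hand side I would insert the three-term bound \eqref{eq:WlessB-lemma2} on $P_r(W,T,t)$ and treat the summands separately. The ${\rm erf}$ summand is engineered to reproduce the function $f$ of \eqref{eq:f-function}, whereas the two exponential summands are controlled by constants. The averaging is then done last, using the power constraint and a concavity argument.

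For the ${\rm erf}$ summand the key is the exact identity, from \eqref{eq:kappa}, that $\kappa/8=(\gamma Kz^2/2)^2/(3Kz)$, so that $\sqrt{(\kappa/8)P_t}=(\gamma Kz^2/2)\sqrt{P_t/(3Kz)}$ and hence $1/\sqrt{(\kappa/8)P_t}=(2/(\gamma Kz^2))/\sqrt{P_t/(3Kz)}$. Multiplying this summand by the prefactor $4[Kz+(\sqrt{P_t}+\delta)^2]$ then yields \emph{exactly} $\tfrac{16(W/B)}{\gamma Kz^2}\,f(1/(3Kz),P_t)$, with no inequality lost. Each of the remaining two summands has the form $[Kz+(\sqrt{P_t}+\delta)^2]\,e^{-aP_t}$ times a constant (and, for one of them, $W/B$). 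Expanding $(\sqrt{P_t}+\delta)^2=P_t+2\delta\sqrt{P_t}+\delta^2$ and using the elementary maxima $\max_{p\ge0}p\,e^{-ap}=1/(ae)$ and $\max_{p\ge0}\sqrt{p}\,e^{-ap}=1/\sqrt{2ae}$, with $a=1/(\sqrt{18}Kz)$ and $a=\sqrt{\gamma/(20K)}$ respectively, bounds these summands uniformly in $P_t$ by the constants $\tfrac{W}{B}c_2$ and $c_3$. Being valid for every $P_t\ge0$, these bounds survive the expectation and time average unchanged.

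It then remains to show $\tfrac1T\int_{-T/2}^{T/2}\E{f(1/(3Kz),P_t)}\,dt\le f(1/(3Kz),P+P_o)$. I would first use that $f(s,\cdot)$ is non-decreasing and $P_o\ge0$ to pass to $f(s,P_t+P_o)$, and then apply Jensen's inequality to the concave map $p\mapsto f(s,p+P_o)$ against the probability measure $\tfrac1T\,dt\times d\mathbb{P}$ on $[-T/2,T/2]\times\Omega$. This produces $f(s,\bar{P}_T+P_o)$, and the constraint $\bar{P}_T\le P$ with monotonicity gives $f(s,P+P_o)$. Collecting the three contributions reproduces \eqref{eq:PrT-bound4}; since the added constants do not affect shape, the asserted monotonicity and concavity of the right-hand side in $P$ are inherited from the same two properties of $p\mapsto f(s,p+P_o)$.

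The main obstacle is establishing that $p\mapsto f(s,p+P_o)$ is non-decreasing and concave on $[0,\infty)$ for $s=1/(3Kz)$ and $P_o=3(Kz+\delta^2)$. Writing $f(s,P)=A(P)h(P)$ with $A(P)=Kz+(\sqrt{P}+\delta)^2$ and, via the integral representation, $h(P)=\tfrac{\sqrt\pi}{2}\,{\rm erf}(\sqrt{sP})/\sqrt{sP}=\int_0^1 e^{-sPv^2}\,dv$, one checks that the $\delta$-dependent parts of $A$ only strengthen concavity (they render $A''$ more negative while keeping the cross term $A'h'<0$), so the binding case is the reduced inequality $2h'(P)+(Kz+P)\,h''(P)\le0$. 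From the large-$P$ expansion $h(P)\sim\tfrac{\sqrt\pi}{2}(sP)^{-1/2}$ one gets $2h'(P)+(Kz+P)h''(P)\sim-\tfrac18\sqrt{\pi/s}\,P^{-3/2}<0$, so $f\sim\sqrt{P}$ is eventually concave, and the role of the offset $P_o$ is precisely to push the crossover so the inequality holds for \emph{all} $P\ge P_o$. Pinning down that the specific value $P_o=3(Kz+\delta^2)$ suffices uniformly in the fiber parameters, via explicit ${\rm erf}$ bounds, is the delicate step; the monotonicity reduces to $A'/A\ge|h'|/h$, which is analogous but easier since $A'/A\sim1/P$ dominates $|h'|/h\sim1/(2P)$.
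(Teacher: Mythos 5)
Your proposal follows the paper's route nearly step for step: insert the pointwise bound \eqref{eq:WlessB-lemma2} into \eqref{eq:PrT-bound2a}, note that the ${\rm erf}$ summand equals $\tfrac{16(W/B)}{\gamma Kz^2}f(1/(3Kz),P_t)$ exactly (your identity $\kappa/8=(\gamma Kz^2/2)^2/(3Kz)$ is correct and is how the paper arrives at $s=1/(3Kz)$), absorb the two exponential summands into $\tfrac{W}{B}c_2$ and $c_3$ via \eqref{eq:xe-bound1}--\eqref{eq:xe-bound2}, and finish with the offset $P_o$, Jensen's inequality over time and the signal ensemble, and $\bar P_T\le P$. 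All of that matches the paper.

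The one place you diverge is the monotonicity/concavity of \eqref{eq:concave-function}, and there your argument has a genuine flaw. Writing $f=Ah$ with $h(P)=\int_0^1 e^{-sPv^2}dv$, one has $h''>0$, so the term $\delta^2 h''$ contributed by the constant part $a=Kz+\delta^2$ of $A$ enters $(Ah)''$ with the \emph{wrong} sign; the claim that ``the $\delta$-dependent parts of $A$ only strengthen concavity'' is therefore false, and the reduction to the case $A=Kz+P$ is not the binding one. This is precisely why the paper's threshold is $cP\ge 3a=3(Kz+\delta^2)$ rather than $3Kz$, i.e., why $P_o$ carries the $\delta^2$. Your large-$P$ asymptotics then only give \emph{eventual} concavity, not concavity on all of $[P_o,\infty)$ uniformly in the parameters, which you yourself flag as unfinished. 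The paper closes this step not by asymptotics but by computing $f'$ and $f''$ explicitly in \eqref{eq:concave-deriv1}--\eqref{eq:concave-deriv2} and invoking the bound $e^{y^2}\tfrac{\sqrt\pi}{2}{\rm erf}(y)\ge y$ of \eqref{eq:concave-lower}, which yields monotonicity for $cP\ge a$ and concavity for $cP\ge 3a$ in closed form; you should replace your heuristic with that computation. (A smaller shared issue: passing from $f(s,P_t)$ to $f(s,P_t+P_o)$ uses monotonicity of $f(s,\cdot)$ below the threshold $cP\ge a$, which neither you nor the paper verifies explicitly, so you are no worse off there.)
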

\begin{remark}
The values $c_1$, $c_2$, and $c_3$ are independent of $P$ and $W$,
but they depend on $B$, $\gamma$, and $z$.
\end{remark}
\begin{remark}
The RHSs of \eqref{eq:PrT-bound3} and \eqref{eq:PrT-bound4}
scale as $\sqrt{P}$ for large $P$.
\end{remark}

\subsection{Propagating Signal Bandwidth}
\label{subsec:propagating-bandwidth}
We proceed to develop a bound on the propagating signal bandwidth,
which we also write as $W$ (in the previous sections, the parameter
$W$ represented the receiver filter bandwidth). We are particularly interested in
large $P$ where spectral broadening occurs.
We interpret the regime $W\le B$ as being
``practically relevant" and $W>B$ as being ``impractical".\footnote{This
definition does not always make sense, e.g., for very noisy signals
where the useful part of the signal has small bandwidth.}

The average total received power for a linear channel is $Kz + P$.
Suppose we require that 99\% of this power is inside the band 
$f\in [-W/2,W/2]$, i.e., we require
\begin{align}
   & \bar{P}_r(W,T) \ge 0.99 (Kz + P).
   \label{eq:PrT-bound5}
\end{align}
We remark that the value 99\% is not crucial; the results below
remain valid for any other choice near 100\%.

Consider first $W\le B$ and $\gamma (K/2) z^2\le 1$.
Combining \eqref{eq:PrT-bound3} and \eqref{eq:PrT-bound5},
and using ${\rm erf}(y) \le 1$, we have (see \eqref{eq:PrT-bound3-app-1})
\begin{align}
   \frac{W}{B} & \ge \frac{0.99(Kz + P) - c_1}
   {8 f\left( \kappa/8, P + P_o \right)} \label{eq:W-bound1-a} \\
   & \ge \frac{\left\{ 0.99(Kz + P) - c_1 \right\}
    \sqrt{(\kappa/8) \left( P+P_o \right)}}
   {8 \left[Kz + \left( \sqrt{P+ P_o}+\delta \right)^2\right]}.
   \label{eq:W-bound1}
\end{align}
Thus, for fixed $B$ and large $P$, we find that $W$ scales at
least as a constant times $\sqrt{P}$. This means that there is some power
threshold for which $W > B$. We conclude that
\emph{the model loses practical relevance beyond some
launch power threshold.}

An upper bound on the threshold follows by computing the $P$ for which
the RHS of \eqref{eq:W-bound1-a} is one. For example, for the parameters
in Table~\ref{table:parameters}, we compute $P \le 18.6$ Watts.
However, the power 18.6 Watts seems unrealistically large, which suggests that
our bounds are very loose.
Fig.~\ref{fig:RectPAMPSDplot} also suggests that the bound is loose,
since there is substantial spectral broadening already at $P=50$ mW.
However, recall that the bounds \eqref{eq:W-bound1-a}-\eqref{eq:W-bound1}
are valid for any launch signal, and not only PAM with rectangular pulses and ring modulation.

Consider next $W\le B$ and $\gamma (K/2) z^2\ge 1$.
Combining \eqref{eq:PrT-bound4} and \eqref{eq:PrT-bound5},
and using ${\rm erf}(y) \le 1$, we have
\begin{align}
   & \frac{W}{B} \ge \frac{0.99(Kz + P) - c_3}
   {c_2 + \frac{16}{\gamma K z^2} f\left( \frac{1}{3Kz}, P + P_o \right)} \label{eq:W-bound2-a} \\
   & \ge \frac{\left\{ 0.99(Kz + P) - c_3 \right\}
    \sqrt{P+P_o}}{c_2 \sqrt{P+P_o}
   + \sqrt{\frac{512}{\kappa}} \left[Kz + \left( \sqrt{P+ P_o}+\delta \right)^2\right]}.
   \label{eq:W-bound2}
\end{align}
Thus, for fixed $B$ and large $P$, we again find that
$W$ scales at least as a constant times $\sqrt{P}$. We again conclude
that the model loses practical relevance beyond some
launch power threshold.

\subsection{Distributed Amplification Bandwidth}
\label{subsec:distributed-bandwidth}
The bounds \eqref{eq:W-bound1-a}-\eqref{eq:W-bound2} let us study
whether we can increase the range of practically relevant $P$ by increasing $B$. We
show that this is not possible in general. In fact, as $B$ increases we
must limit ourselves to progressively smaller $P$, while at the same time
dealing with more noise power $Kz=N_A B z$.

We study the following problem. Suppose the OA bandwidth scales as
$B=P^\beta$ for some non-negative constant $\beta$. For large
$P$, we thus study the case $\gamma (K/2) z^2\ge 1$
where the relevant bounds are \eqref{eq:PrT-bound4} and
 \eqref{eq:W-bound2-a}-\eqref{eq:W-bound2}.
Note that $K$, $\kappa$, and $P_o$ are proportional to $B$, while $\delta$
is proportional to $\sqrt{B}$. Thus, $c_2$ remains a constant and $c_3$
vanishes for large  $B$. Inserting $B=P^\beta$ into
\eqref{eq:PrT-bound4}, the scaling behavior of $\bar{P}_r(W,T)$ for large
$P$ is bounded as
\begin{align}
   \bar{P}_r(W,T) \lesssim \left\{ \begin{array}{ll}
      P^{(1-3\beta)/2}, & 0 \le \beta \le 1 \\ 
      P^{-\beta}, & \beta \ge 1 .
      \end{array} \right.
   \label{eq:PrT-bound4-1}
\end{align}
The average receiver power thus \emph{decreases} with the average
launch power if $\beta>1/3$ and $P$ is sufficiently large.

Next, inserting $B=P^\beta$ into  \eqref{eq:W-bound2-a},
the scaling behavior of $W$ is bounded as
\begin{align}
   W \gtrsim \left\{ \begin{array}{ll}
      P^{(1+3\beta)/2}, & 0 \le \beta \le 1 \\ 
      P^{2\beta}, & \beta \ge 1 .
      \end{array} \right.
   \label{eq:PrT-bound4-2}
\end{align}
The condition $W \le B$ for large $P$ requires $P^{(1+3\beta)/2} \lesssim P^\beta$ for $0 \le \beta \le 1$,
or $P^{2\beta} \lesssim P^\beta$ for $\beta \ge 1$, neither of which is possible.
We conclude that \emph{there is no scaling of $B$ through which we can make the
model practically relevant for large launch power.}

\subsection{Time-Resolution Limited Receiver}
\label{subsec:time-resolution-limited}
Bounds for the time-resolution limited receiver can be developed using the
same steps as those for the bandlimited receiver. For instance, using the
same steps as in \eqref{eq:PrT-bound2} but with \eqref{eq:Erbar-def} rather than
\eqref{eq:PrT-bound}, we have the analog of
\eqref{eq:PrT-bound2a}-\eqref{eq:PrT-integrand}, namely
\begin{align}
 \bar{E}_m(T_r) & \le \frac{1}{T_r} \int_{mT_r}^{(m+1)T_r} \E{E_m(T_r,t)} \, dt
  \label{eq:Em-bound3}
\end{align}
where
\begin{align}
  & E_m(T_r,t) = 2 \left[ K z + \left(\sqrt{P_t} + \delta \right)^2 \right] \nonumber \\
  & \quad \int_{mT_r}^{(m+1)T_r} 
      |S(c)|^2 \exp\left( - \gamma \, T_I(c) \frac{P_t}{2} \sqrt{1-\rho^2} \right) \, dt' .
  \label{eq:Em-integrand}
\end{align}
Next, by following similar steps as \eqref{eq:gTI-bound}-\eqref{eq:PrWTt-bound-WlessB}
that were used to derive \eqref{eq:WlessB-lemma1}, for $T_r \ge 1/B$ we have
\begin{align}
   E_m(T_r,t) & \le 4 \left[ Kz + (\sqrt{P_t}+\delta)^2 \right] \nonumber \\
   & \quad \left[ \frac{1/B}{\sqrt{(\kappa/8) P_t}} 
   \frac{\sqrt{\pi}}{2} {\rm erf}\left( \sqrt{(\kappa/8) P_t} \right) \right. \nonumber \\
   & \qquad \left. + 5 \left( T_r - \frac{1}{B} \right) e^{- \sqrt{\gamma K z^2} - (\kappa/9) P_t } \right] .
   \label{eq:TlessB}
\end{align}
Note that there is no extra factor of two in front of the ${\rm erf}(\cdot)$ term,
cf.~\eqref{eq:WlessB-lemma1}, because we do not need to use the filter \eqref{eq:Bfreq}.

For $T_r < 1/B$, we have
\begin{align}
   E_m(T_r,t) & \le 4 \left[ Kz + (\sqrt{P_t}+\delta)^2 \right] \nonumber \\
   & \quad \frac{1/B}{\sqrt{(\kappa/8) P_t}} \frac{\sqrt{\pi}}{2} {\rm erf}\left( \sqrt{(\kappa/8) P_t} B T_r \right)
     \label{eq:TlessB2}
\end{align}
which is simpler than \eqref{eq:WlessB-lemma2} because there is only
one integration  interval, rather than four as in Appendix~\ref{app:Pt-lemma-proofs},
see~\eqref{eq:WlessB-part2}.
As before, for large $P_t$ the energy $E_m(T_r,t)$ scales at most as 
$\sqrt{P_t}$. The same claim is valid for the average energy $\bar{E}_m(T_r)$
by using the concavity steps in Appendix~\ref{app:Pt-lemma-proofs},
see~\eqref{eq:form-app}-\eqref{eq:PrT-bound3-app-a}.

\begin{remark}
The time resolution $T_r$ must scale to zero at least as fast as
$1/\sqrt{P_t}$ to have the RHS of \eqref{eq:TlessB2} scale as $P_t$. 
\end{remark}
\begin{remark}
Consider PAM and fixed $P_t$. As $T_r$ decreases to zero, the RHS of
\eqref{eq:TlessB2} becomes
\begin{align}
 4 \left[ Kz + (\sqrt{P_t}+\delta)^2 \right] T_r
\end{align}
which decreases to zero. However, there are $T_s/T_r$ samples per
transmitted symbol, so the energy collected per symbol is proportional
to $T_s$. 
\end{remark}
\begin{remark}
If $\gamma \rightarrow 0$ then the RHS of \eqref{eq:TlessB2} becomes
$4(Kz + P_t)T_r$. This is loose by a factor of four:
a factor of two is from the step corresponding to \eqref{eq:PrT-bound2},
and another factor of two is from the
step corresponding to \eqref{eq:PrWTt-bound-WlessB}$(b)$
where the interval ${\mathcal I}_1$ was enlarged.
We show in Appendix~\ref{app:PAM-Rect} how
to improve these steps to obtain the expected  $(Kz + P_t)T_r$
for PAM with rectangular pulses, ring modulation, and $T_r=T_s=1/B$.
\end{remark}

\section{Capacity Upper Bounds}
\label{sec:capacity}
The capacity result \eqref{eq:general-capacity-bound} implies that
\begin{align}
  C(W)/W \le \log_2\left( 1 + \frac{\bar{P}_r(W)}{W N_0}  \right) \text{ bits/s/Hz}.
  \label{eq:general-capacity-bound1}
\end{align}
We may thus use \eqref{eq:PrT-bound1}, \eqref{eq:PrT-bound3}
and \eqref{eq:PrT-bound4} to upper bound $C(W)$.

Consider the fiber parameters in Table~\ref{table:parameters}
and the receiver bandwidth $W=B=500$ GHz. We study both
the \emph{normalized} capacity \eqref{eq:general-capacity-bound1}
and the spectral efficiency
\begin{align}
\eta = \frac{C(W)}{\max(W,W_{\rm min})}  \text{ bits/s/Hz}
\label{eq:se-end}
\end{align}
where $W_{\rm min}$ is the smallest received signal bandwidth that
satisfies \eqref{eq:W-bound1-a}.

Fig.~\ref{fig:capacity1} shows the resulting bounds as the curves labeled
``Upper bound" and ``$\eta$ bound". We also plot a lower bound
from~\cite[Fig.~36,~curve~(1)]{ekwfg-JLT10}. This bound was computed
for 5 WDM signals, each of bandwidth 100 GHz, but with dispersion and
optical filtering (OF). The upper and lower bounds are thus not directly comparable
at high launch power. However, at low launch power both channels
are basically linear and have the same capacity. We remark that we have
shifted the lower bound by $10\log_{10}(5) \approx 7$ dB to the right,
since the $\rm P_{in}$ in ~\cite[Fig.~36]{ekwfg-JLT10} is the power
per WDM channel.

\begin{figure}[t!]
  \centerline{\includegraphics[scale=0.48]{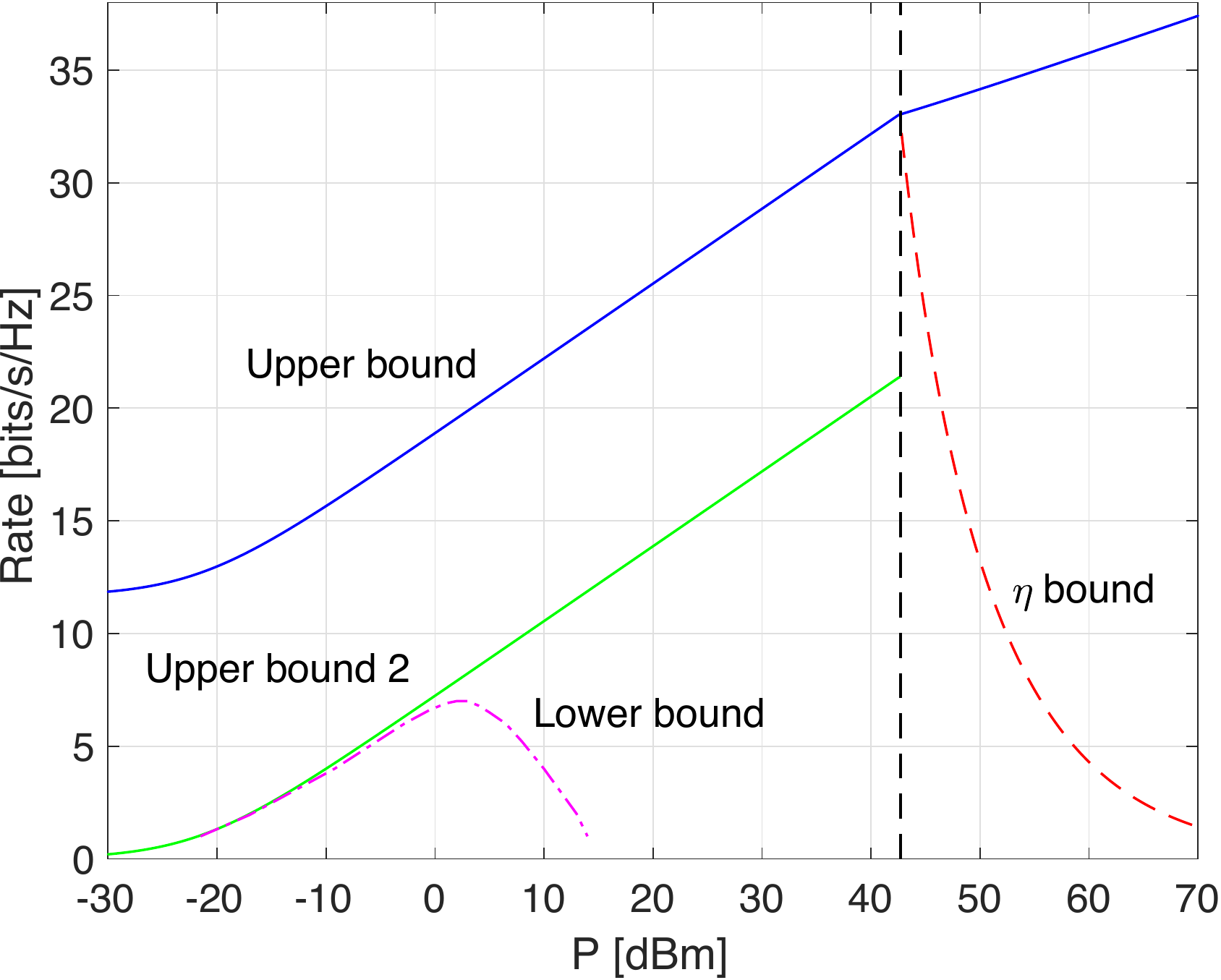}}
  \caption{Normalized capacity bounds for dispersion-free fiber with $B=500$ GHz
  and $W=500$ GHz.
  The curves ``Upper bound" and ``$\eta$ bound" are computed with the RHS of
  \eqref{eq:general-capacity-bound1} and \eqref{eq:se-end}, respectively.
  The curve ``Upper bound 2" is computed with the RHS of
  \eqref{eq:general-capacity-bound2}.
  The lower bound is from~\cite[Fig.~36, curve (1)]{ekwfg-JLT10}.}
  \label{fig:capacity1}
\end{figure}

We comment on the behavior of the curves.
\begin{itemize}
\item The upper bound increases with $P$.
\item The model is no longer practically relevant according
to~\eqref{eq:W-bound1-a} for $P > 18.6$ Watts, or $42.7$ dBm.
This bound is shown as the vertical dashed line in 
Fig.~\ref{fig:capacity1}. The real threshold for practical
relevance is much lower.
\item The upper bound has two parts. The part on the left (small to large $P$)
up to the vertical dashed line is based on the known bound \eqref{eq:PrT-bound1}.
The part on the right (very large $P$) is new and is based on \eqref{eq:PrT-bound3}.
\item The bound \eqref{eq:PrT-bound3} seems useful only when the model is
no longer practically relevant. However, this behavior is an artifact of choosing
$W=B$; if $W<B$ then \eqref{eq:PrT-bound3} can be better than
\eqref{eq:PrT-bound1} to the left of the power threshold. Furthermore, it is the bound
\eqref{eq:PrT-bound3} that provides the threshold in the first place.
\item The upper bound is far above the lower bound from~\cite[Fig.~36]{ekwfg-JLT10}.
This suggests that the upper bound is very loose.
\item The upper bound seems extremely loose for small $P$.
To understand why, observe that for small $P$ the
RHS of~\eqref{eq:general-capacity-bound1} is
\begin{align}
   \log_2\left(1 + \frac{Kz}{W N_0}\right) \approx 11.7 \text{ bits/s/Hz}.
   \label{eq:small-P-behavior}
\end{align}
In fact, we expect that $Kz$ should appear in the denominator of the
SNR in \eqref{eq:general-capacity-bound1} and \eqref{eq:small-P-behavior},
and not in the numerator. This issue is discussed in
Sec.~\ref{subsec:capacity-with-OA-noise} below.
\item Beyond the threshold, $\bar{P}_r(W)$ scales
as $\sqrt{P}$. The slope of the bound thus changes from
approximately 3 dB per bit to 6 dB per bit. However, we
expect that the signal phase cannot be used to transmit information
at large $P$, cf.~\cite[Sec.~VI.A]{Yousefi-Kschischang-IT11}.
If this is true, then $C(W)/W$ eventually scales at most as
$\frac{1}{4}\log P$, and the slope of the upper bound becomes 12 dB per bit.
\item The upper bound on \eqref{eq:se-end} 
decreases rapidly beyond the power threshold
because of spectral broadening.
\end{itemize}

\subsection{Rates with OA Noise}
\label{subsec:capacity-with-OA-noise}
One might expect that a $Kz$ term should appear in the
denominator of the SNR in \eqref{eq:general-capacity-bound1}.
However, we have so far been unable
to prove this for the model~\eqref{eq:dispersion-free-model}.
The difficulty is related to the signal-noise mixing, the bandlimited
nature of the OA noise, and to the discussion in Sec.~\ref{subsec:large-capacity}.

However, suppose the propagating signal remains inside the
OA band, as required by the inequality $W \le B$. Suppose further
that the propagating signal is accurately characterized
by considering only frequencies within the band $f\in [-W/2,W/2]$ for all $z$.
We can then apply the theory in~\cite{Kramer-etal-15,Yousefi-etal-15} to
improve~\eqref{eq:general-capacity-bound1} to
\begin{align}
  C(W)/W \le \log_2\left(\frac{\bar{P}_r(W)+ W N_0}{Kz (W/B) + W N_0}  \right) \text{ bits/s/Hz}.
  \label{eq:general-capacity-bound2}
\end{align}

Consider again the fiber parameters in Table~\ref{table:parameters}
and $W=B=500$ GHz. Fig.~\ref{fig:capacity1} shows the resulting
bound on $C(W)/W$ as the curve labeled ``Upper bound 2".
We comment on the behavior of the curve.

\begin{itemize}
\item The upper bound now seems reasonable for small $P$.
\item We do not plot the upper bound or spectral efficiency beyond the power
threshold of 42.7 dBm because the signal no longer remains inside the
OA band,  and hence the theory of~\cite{Kramer-etal-15,Yousefi-etal-15}
does not apply. In fact, substantial spectral broadening occurs at much 
smaller launch power, so this theory is more limited than suggested by
Fig.~\ref{fig:capacity1}.
\end{itemize}

\subsection{OA Bandwidth Scales with the Launch Power}
\label{subsec:capacity-scaling}
Although the models are impractical for large launch power, we can nevertheless
follow~\cite{Turitsyn-Derevyanko-Yurkevich-Turitsyn-PRL03,Yousefi-Kschischang-IT11}
and study the capacities of the \emph{mathematical} models for large $P$.
For example, suppose $B$ scales as $\sqrt{P}$, which is a lower bound on the
spectral broadening scaling, see \eqref{eq:PrT-bound4-2}. The motivation for 
studying this case is to better understand the limitations of spectral broadening.
We may use the bounds \eqref{eq:PrT-bound3} and \eqref{eq:PrT-bound4}
to upper bound the receiver power, and we can apply the capacity bounds 
\eqref{eq:general-capacity-bound1} and \eqref{eq:general-capacity-bound2}.

Consider again the fiber parameters in Table~\ref{table:parameters}
and the receiver bandwidth $W=500$ GHz. However, based on~\eqref{eq:W-bound2}
and large $P$, we now scale the OA bandwidth as
$B=W \max\left(1, \sqrt{\hat{\kappa} P/512} \right)$ where $\hat{\kappa}=28.6$. 
Fig.~\ref{fig:capacity-scaling} shows the normalized capacity bounds, 
which are similar to Fig.~\ref{fig:capacity1}. The main change is that,
at high power, both $\bar{P}_r(W)$ and $C(W)/W$ scale at most as $P^{-1/4}$,
as predicted by \eqref{eq:PrT-bound4-1} with $\beta=1/2$.\footnote{
In other words, as $P$ increases, $C(W)/W$ first grows as $\log(1+P/(W N_0))$, but
is then upper bounded by $k_1-\frac{1}{4}\log P$ for some constant $k_1$,
and finally is upper bounded by $k_2 P^{-1/4}$ for some constant $k_2$.}

\begin{figure}[t!]
  \centerline{\includegraphics[scale=0.48]{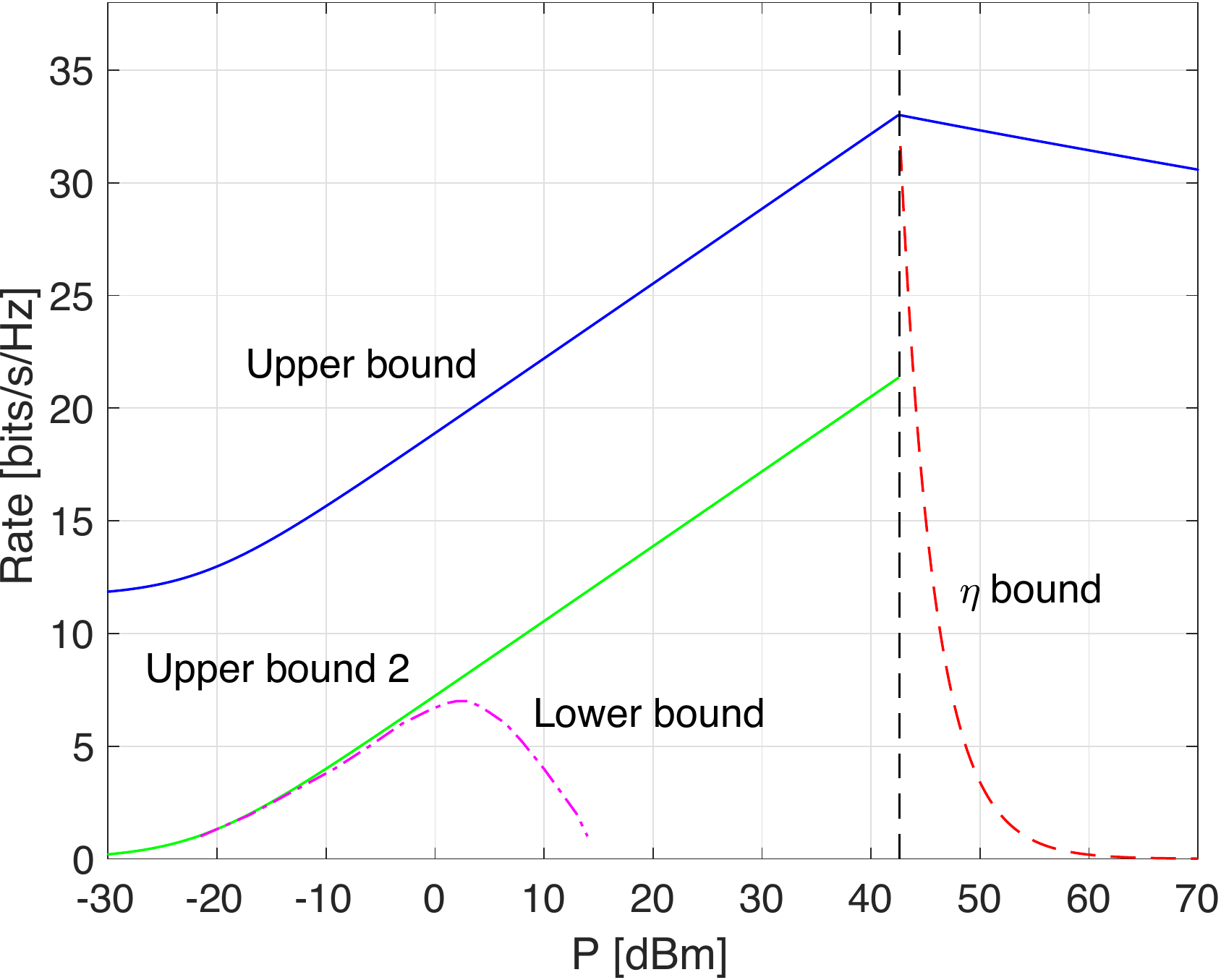}}
  \caption{Normalized capacity bounds for dispersion-free fiber with $W=500$ GHz
  and where $B$ scales as $W\cdot\sqrt{\hat{\kappa} P/512}$ for large $P$.
  The vertical dashed line is the same one as in Fig.~\ref{fig:capacity1}.}
  \label{fig:capacity-scaling}
\end{figure}

The reader might expect that the rates in Fig.~\ref{fig:capacity-scaling}
should not decrease with $P$. However, note that the capacities
are normalized, and that the figure is for a system
where the OA bandwidth $B$ changes with $P$. In fact, we expect
that the real (normalized) capacities at large $P$ will be much smaller
than the upper bounds shown in Fig.~\ref{fig:capacity1}
or Fig.~\ref{fig:capacity-scaling}.

\section{Conclusions}
\label{sec:conclusions}
We studied a dispersion-free fiber model with distributed OA where
the accumulated spatial noise processes at different time instances
are jointly Wiener.
Our main result is a closed-form expression for the autocorrelation
function of the output signal given the input signal. The expression gives a bound on
the output power of bandlimited and time-resolution limited receivers.
The theory shows that there is a launch power beyond which
the OA bandwidth $B$ can no longer exceed the propagating signal
bandwidth $W$, and the model loses practical relevance.
The growth of $W$ is due to signal-noise mixing that cannot
be controlled by waveform design, other than by reducing power.

The receiver power bounds can be converted to capacity bounds.
However, the latter bounds are far above the true capacity, and
an interesting problem is to improve them.
For example, one can improve the following steps.
\begin{itemize}
\item Treat the noise term in \eqref{eq:autocorr-general} separately.
An upper bound on the received noise power is $Kz \cdot \min(W/B,1)$.
\item Replace \eqref{eq:Bfreq} with a PSD more like the PSD of unit height
in the frequency interval $[-W/2,W/2]$.
%
\item For small $\sqrt{\gamma K z^2}$, replace \eqref{eq:Tbound3} with
a bound similar to $(2/3)xz$.
\item For small $\sqrt{\gamma K z^2}$, use \eqref{eq:autocorr-bound}
rather than \eqref{eq:autocorr-bound3}, since \eqref{eq:autocorr-bound}
does not have the factor 1/2 inside the exponential. We
chose~\eqref{eq:autocorr-bound3} in order to treat large $B$.
\item Replace \eqref{eq:sinc-bounds2} with tighter bounds.
\item Use the SPM exponential in \eqref{eq:autocorr-general}.
\end{itemize}
Furthermore, one can improve the bounds for special choices of 
launch signals, e.g., bandlimited signals or PAM with rectangular
pulses, cf.~Sec.~\ref{subsec:ring-modulation}
and Appendix~\ref{app:PAM-Rect}.

Although the bounds are loose, we suspect that they give reasonable
guidance on the capacity behavior of NLSE-based fiber models.
A challenging open problem is to incorporate the spectral broadening
of the noise, see Sec.~\ref{subsec:dispersion-free-model} and
Appendix~\ref{app:raman}. Another challenging problem is to develop
autocorrelation functions for NLSE models with noise, nonlinearity, \emph{and} dispersion.
Finally, one may wish to develop autocorrelation functions and capacity bounds
for OA based on EDFAs.

\clearpage

\setcounter{section}{0}
\renewcommand{\thesection}{\Alph{section}}
\renewcommand{\thesubsection}{\arabic{subsection}}
\renewcommand{\appendix}[1]{%
  \refstepcounter{section}%
  \par\begin{center}%
    \begin{sc}%
      Appendix \thesection\par%
      #1%
    \end{sc}%
  \end{center}\nobreak%
}

\appendix{Raman Amplification Noise Statistics}
\label{app:raman}
We describe a model for Raman amplification based on the coupled NSLEs in
\cite[p.~305]{Agrawal-03} and \cite{Headly-Agrawal-JQC95}. In particular, the
paper~\cite{Headly-Agrawal-JQC95} defines the following signals and constants:
\begin{itemize}
\item $u_p(z,t)$ and $u_s(z,t)$ are the pump and source signals;
\item $\gamma_p$ and $\gamma_s$ are Kerr coefficients at the pump and source frequencies,
respectively;
\item $\delta_m$ is a fraction related to molecular vibrations;
\item $g_p(t)$ and $g_s(t)$ are filters related to the (third-order) nonlinear susceptibility of the fiber medium;
\item $h_p(t)$ and $h_s(t)$ are filters related to the noise force and the response function that converts this force into a spontaneous polarization \cite[Sec.~II]{Headly-Agrawal-JQC95}.
\end{itemize}
The paper derives a set of coupled equations, see \cite[Eqs.~(12) and~(13)]{Headly-Agrawal-JQC95}.
Setting the dispersion coefficients to zero, we have
\begin{align}
    \frac{\partial u_p}{\partial z} & = j \gamma_p \left(|u_p|^2 + 2(1-\delta_m) |u_s|^2\right) u_p \nonumber \\
   & \quad + \frac{j}{2} u_p u_s \, (g_p(t) * u_s(-t)^*) + \frac{j}{2} u_s \, (h_p * n_w^*) \label{eq:NLSE-raman-p} \\
   \frac{\partial u_s}{\partial z} & = j \gamma_s \left(|u_s|^2 + 2(1-\delta_m) |u_p|^2\right) u_s \nonumber \\
   & \quad + \frac{j}{2} |u_p|^2 (g_s * u_s) + \frac{j}{2} u_p \, (h_s * n_w)
   \label{eq:NLSE-raman-s}
\end{align}
where $n_w(z,t)$ is modeled as additive white Gaussian noise.

We simplify \eqref{eq:NLSE-raman-p}-\eqref{eq:NLSE-raman-s} by neglecting $g_p(t) * u_s(-t)^*$,
$g_s * u_s$, and $h_p*n_w^*$; further discussion on modeling can be found in \cite[p.~305]{Agrawal-03}.
We choose $u_p(0,t)=a_p$ where $a_p$ is a complex constant so that the pump signal is a sinusoid.
Solving equations \eqref{eq:NLSE-raman-p}-\eqref{eq:NLSE-raman-s} yields
\begin{align}
   u_p(z,t) & = a_p \exp\left( j \gamma_p\left[|a_p|^2 z + 2(1-\delta_m) \phi(z,t) \right] \right)
        \label{eq:NLSE-raman-p2} \\
   u_s(z,t) & = \left[ u_s(0,t) + \hat{w}(z,t) \right] \nonumber \\
   & \quad \exp\left( j \gamma_s \left[ \phi(z,t) + 2(1-\delta_m) |a_p|^2 z \right] \right)
   \label{eq:NLSE-raman2}
\end{align}
where 
\begin{align}
   \phi(z,t) & = \int_{0}^{z} |u_s(z',t)|^2 dz' \label{eq:nonlinear-phase} \\
   \hat{w}(z,t) & = \int_{0}^{z} \frac{j}{2} \, a_p \left(h_s*n(z',t)\right) \nonumber \\
   & \qquad \exp\left( j |a_p|^2 \left[ \gamma_p - 2(1-\delta_m)\gamma_s \right] z' \right) \nonumber \\
   & \qquad \exp\left( j \phi(z',t) \left[ - \gamma_s + 2(1-\delta_m)\gamma_p \right]  \right) dz'
   \label{eq:raman-model-noise}
\end{align}
are the accumulated nonlinear phase and noise.

Note that \eqref{eq:dispersion-free-model-noise} multiplies $\phi(z',t)$
by $-\gamma_s$ while \eqref{eq:raman-model-noise} multiplies $\phi(z',t)$ by
$- \gamma_s + 2(1-\delta_m)\gamma_p$.
For example,  if $\gamma_p=\gamma_s$ and $\delta_m=1/2$ then the accumulated
noise at any two times $t$ and $t'$ is jointly Gaussian. However, more realistic
numbers are $\gamma_s=0.95\gamma_p$ and $\delta_m=0.2$
(see the text around~\cite[Eq.~(23)]{Headly-Agrawal-JQC95}) so that $\phi(z',t)$
is multiplied by approximately $0.7\gamma_s$.

\appendix{Cameron-Martin Theory}
\label{app:cameron-martin}
The purpose of this appendix, as well as Appendices~\ref{app:mecozzi}
and~\ref{app:one-sample-statistics} is to review relevant results
from~\cite{Cameron-Martin-45} and~\cite{Mecozzi-94}. The space $C[0,1]$ is the set of
real-valued functions $x(t)$ that are continuous on $t\in[0,1]$ and have $x(0)=0$.
Consider the ordered points $t_1,t_2,\ldots,t_n$ with $0<t_1 < t_2 < \ldots < t_n \le1$
and the values $a_i$, $b_i$, $i=1,2,\ldots,n$, for which $a_i < x(t_i) < b_i$.
The Wiener measure is defined as (see~\cite[p.~73]{Cameron-Martin-45})
\begin{align}
  & \frac{1}{\left( \pi^n t_1 (t_2-t_1) \cdots (t_n-t_{n-1}) \right)^{1/2}}
     \int_{a_1}^{b_1} \cdots \int_{a_n}^{b_n} \nonumber \\
  & \exp\left( - \frac{s_1^2}{t_1} 
      - \frac{(s_2-s_{1})^2}{t_2-t_1} - \cdots - \frac{(s_n-s_{n-1})^2}{t_n-t_{n-1}} \right)
      ds_1 \cdots ds_n .
\label{eq:CM00}
\end{align}
The Wiener integral over the space $C[0,1]$ of a functional $F[x]$  is defined
using this measure, and the integral is written as
\begin{align}
   \int_{C[0,1]}^{\mathcal{W}} F[x] \, d_{\mathcal{W}} x.
   \label{eq:CM01}
\end{align}
Note that \eqref{eq:CM01} is the same as $\E{F[X]}$ where $\sqrt{2} \, X(\cdot)$
is a Wiener process on $t\in[0,1]$.

Let $p(t)$ be real-valued, continuous, and positive on $0\le t\le1$ and consider
a complex number $\lambda$. Let $\lambda_0$ be the least characteristic
value of the differential equation
\begin{align}
   f''(t) + \lambda \, p(t) f(t) =0 \label{eq:de}
\end{align}
subject to the boundary conditions $f(0)=f'(1)=0$. Let $f_\lambda(t)$ be any
non-trivial solution of \eqref{eq:de} satisfying $f_{\lambda}'(1)=0$.
Let $g(\cdot)$ be a complex-valued and $L^2$ function on $0\le t\le 1$.
We have the following lemma by using Theorem 2 in \cite{Cameron-Martin-45}
for real $\lambda$ and real-valued $g(\cdot)$ (see especially (3.2) and (3.3) of \cite{Cameron-Martin-45}).
An extension to complex $\lambda$ and $g(\cdot)$ follows
by using the same arguments as in~\cite[pp.~218-219]{Cameron-Martin-45a}.

\begin{lemma}[See \cite{Cameron-Martin-45} and \cite{Cameron-Martin-45a}] \label{lemma:cm}
If $\Re(\lambda)< \lambda_0$ then we have
\begin{align}
  & \int_{C[0,1]}^{\mathcal{W}}
  \exp\left( \lambda \int_0^1 \left[p(t) x^2(t) + 2 g(t) x(t) \right] dt \right) d_{\mathcal{W}}x \nonumber \\
  & = \left(\frac{f_{\lambda}(1)}{f_{\lambda}(0)}\right)^{1/2} \exp\left( \lambda^2 \beta^2 \right)
\label{eq:CM32}
\end{align}
where
\begin{align}
  \beta^2 = \int_0^1 \left[\frac{1}{f_{\lambda}(t)} \int_1^t g(s) f_{\lambda}(s) ds \right]^2 dt.
\label{eq:CM33}
\end{align}
\end{lemma}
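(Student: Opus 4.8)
The plan is to reduce the infinite-dimensional Wiener integral to a limit of finite-dimensional Gaussian integrals using the cylinder-set definition \eqref{eq:CM00}, evaluate these exactly, and then identify the two factors on the right-hand side of \eqref{eq:CM32} with the appropriate limits. Concretely, I would partition $[0,1]$ by points $0=\tau_0<\tau_1<\cdots<\tau_n=1$ with increments $\Delta_i=\tau_i-\tau_{i-1}$, write $s_i$ for the sample $x(\tau_i)$ (with $s_0=0$), and replace the functionals $\int_0^1 p\,x^2\,dt$ and $\int_0^1 g\,x\,dt$ by Riemann sums. Against the measure \eqref{eq:CM00} the integrand then becomes $\exp\!\left(-s^{\top} Q s + 2\lambda b^{\top} s\right)$, where $Q$ is the tridiagonal matrix arising from the Wiener increments minus $\lambda\,\mathrm{diag}(p(\tau_i)\Delta_i)$, and $b$ has entries $g(\tau_i)\Delta_i$. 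This is a standard Gaussian integral over $\mathbb{R}^n$, whose value, after dividing by the normalizing prefactor of \eqref{eq:CM00}, is $\left(\prod_i \Delta_i \cdot \det Q\right)^{-1/2}\exp(\lambda^2\, b^{\top} Q^{-1} b)$.

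The two remaining tasks are to identify the $n\to\infty$ limits of the determinant factor and of the quadratic form $b^{\top}Q^{-1}b$. For the determinant I would use the fact that the leading principal minors of $Q$ satisfy a three-term recurrence that is exactly a discretization of the homogeneous equation \eqref{eq:de}; the endpoint behavior of this recurrence matches the condition $x(0)=0$ (pinning at the left) and the free right endpoint $f'(1)=0$. By the standard Sturm--Liouville / Gelfand--Yaglom argument for functional determinants, $\prod_i \Delta_i \cdot \det Q \to f_\lambda(0)/f_\lambda(1)$, so the prefactor converges to $\left(f_\lambda(1)/f_\lambda(0)\right)^{1/2}$. This also explains the role of $\lambda_0$: the least characteristic value is precisely where $f_\lambda(0)=0$, i.e. where the quadratic form ceases to be positive definite and the integral diverges, so the hypothesis $\Re(\lambda)<\lambda_0$ is exactly the condition for convergence.

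For the linear term I would separate it by completing the square in the continuum, i.e. by a Cameron--Martin translation $x\mapsto x+h$ chosen so that the cross term is cancelled; in the finite-dimensional picture this is the shift $s\mapsto s+\lambda Q^{-1}b$, whose stationary value contributes $\lambda^2 b^{\top}Q^{-1}b$ to the exponent. The continuum shift $h$ solves an inhomogeneous form of \eqref{eq:de} with forcing proportional to $g$, and solving this by reduction of order against the homogeneous solution $f_\lambda$ produces exactly the integrating-factor expression $f_\lambda(t)^{-1}\int_1^t g(s) f_\lambda(s)\,ds$, where the inner integral runs from the free endpoint $t=1$ as dictated by $f'(1)=0$. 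Evaluating the exponent at this stationary point yields the Cameron--Martin energy of the shift, which is $\beta^2$ as written in \eqref{eq:CM33}; equivalently, one checks directly that $b^{\top}Q^{-1}b\to\beta^2$, since $Q^{-1}$ is the discrete Green's function of \eqref{eq:de} and its continuum limit against $g$ is this same quantity.

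The main obstacle is making these two limits rigorous and uniform: one must show that the Riemann-sum approximations of the functionals converge in mean under Wiener measure, that the discrete determinants converge to the ODE solution ratio with controlled error (the delicate point in any Gelfand--Yaglom-type proof), and that $Q^{-1}$ behaves like the Green's function in the limit. I would first carry this out for real $\lambda<\lambda_0$ and real-valued $g$, where positivity makes every step a genuine convergent Gaussian integral, following Theorem~2 of \cite{Cameron-Martin-45}. The extension to complex $\lambda$ and complex $g$ then follows by analytic continuation: both sides of \eqref{eq:CM32} are holomorphic in $\lambda$ on the half-plane $\Re(\lambda)<\lambda_0$ and agree on the real segment, hence agree throughout, which is precisely the argument indicated in \cite{Cameron-Martin-45a}.
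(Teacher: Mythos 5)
Your outline is essentially correct, and its key identifications check out: the cylinder-set reduction of \eqref{eq:CM00} to a Gaussian integral with exponent $-s^{\top}Qs+2\lambda b^{\top}s$, the determinant limit $\prod_i \Delta_i \det Q \to f_\lambda(0)/f_\lambda(1)$ with the boundary conditions $x(0)=0$ and $f_\lambda'(1)=0$ entering exactly as you describe (and $\lambda_0$ appearing as the first zero of $f_\lambda(0)$, hence as the convergence threshold), and the identification of $b^{\top}Q^{-1}b$ with $\beta^2$ --- for instance at $\lambda=0$ your claim reduces to $\int_0^1\left(\int_t^1 g\right)^2 dt=\iint g(s)g(u)\min(s,u)\,ds\,du$, which is indeed the second moment of $\int_0^1 gX\,dt$ under the measure \eqref{eq:CM00}. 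But this is a genuinely different route from the paper's, because the paper offers no proof of its own: it quotes Theorem~2 of \cite{Cameron-Martin-45} for real $\lambda$ and real $g$, and notes that complex $\lambda$ and $g$ follow by the arguments of \cite[pp.~218--219]{Cameron-Martin-45a}. Cameron and Martin's original proof, moreover, does not discretize; it rests on their change-of-variables theory for Wiener measure (a translation to absorb the linear term --- close in spirit to your ``completing the square'' shift $x\mapsto x+h$ --- followed by a linear transformation whose Jacobian is the Sturm--Liouville determinant), whereas you propose a Gelfand--Yaglom-type finite-dimensional limit. Your route buys self-containedness and makes transparent where the boundary conditions, the integration from $1$ in \eqref{eq:CM33}, and the threshold $\lambda_0$ come from; its cost is precisely the two limit interchanges you flag, plus one you understate: for real $\lambda$ with $0<\lambda<\lambda_0$ the passage of the limit inside the Wiener expectation requires a uniform-integrability argument, so it is cleaner to establish the identity only for $\lambda\le 0$ (where domination is easy) and let your final analytic-continuation step --- which coincides with the paper's appeal to \cite{Cameron-Martin-45a} --- cover the whole half-plane $\Re(\lambda)<\lambda_0$ and complex $g$. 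One last detail worth stating explicitly: the square root in \eqref{eq:CM32} must be taken on the branch equal to $1$ at $\lambda=0$ and continued analytically, which is well defined since $f_\lambda(0)\neq 0$ on the half-plane.
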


\begin{example}[Example 1 in \cite{Cameron-Martin-45}]
If $p(t)=1$ for all $t$, then we have $\lambda_0=\pi^2/4$.
For $\Re(\lambda)<\pi^2/4$, we thus have
\begin{align}
   f_{\lambda}(t)=\cos\left( \lambda^{1/2} (t-1) \right). \label{eq:CM43}
\end{align}   
\end{example}

\appendix{Mecozzi's Identity}
\label{app:mecozzi}
We prove a (slightly corrected) result from \cite[eq. (18)]{Mecozzi-94}.

\begin{lemma}[See {\cite[eq. (18)]{Mecozzi-94}}] \label{lemma:Mecozzi}
Consider a standard real Wiener process $W(\cdot)$. For complex numbers $a$ and $b$,
and imaginary-valued $c$, we have
\begin{align}
  & \E{\exp\left( a W(z) + \int_0^z b W(z') dz' - c \int_0^z W(z')^2 dz' \right)} \nonumber \\
  & = \sqrt{S(c)} \exp\left(\lambda^2 \beta^2\right)
\label{eq:Mecozzi18}
\end{align}
where
\begin{align}
  \lambda^2 \beta^2 
  & =  \left( \frac{a^2}{2} - \frac{b^2}{4c} \right) T(c)
  + \frac{ab}{2c} \left(1 - S(c) \right) + \frac{b^2}{4c} z.
  \label{eq:Mecozzi18-beta}
\end{align}
\end{lemma}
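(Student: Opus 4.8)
The plan is to reduce the Wiener expectation \eqref{eq:Mecozzi18} to the Cameron--Martin identity \eqref{eq:CM32}--\eqref{eq:CM33} of Lemma~\ref{lemma:cm}. First I would rescale $[0,z]$ to $[0,1]$ by Brownian scaling, writing $W(z')=\sqrt{2z}\,X(z'/z)$ where $\sqrt{2}\,X(\cdot)$ is a standard Wiener process on $[0,1]$, so that $X$ carries the Cameron--Martin Wiener measure. Under this substitution the three functionals become $W(z)=\sqrt{2z}\,X(1)$, $\int_0^z W\,dz'=\sqrt{2}\,z^{3/2}\int_0^1 X\,dt$, and $\int_0^z W^2\,dz'=2z^2\int_0^1 X^2\,dt$. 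Matching the resulting exponent to the integrand of \eqref{eq:CM32}, I take $p(t)\equiv1$ and $\lambda=-2cz^2$, while $2\lambda g(t)$ collects a constant running part (from the $\int bW$ term) together with a boundary contribution $\propto\delta(t-1)$ (from the $aW(z)$ term). Since $c$ is imaginary, $\Re(\lambda)=0<\pi^2/4=\lambda_0$, so the hypothesis of Lemma~\ref{lemma:cm} is met and the Wiener integral converges.

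With $p\equiv1$ the Example following Lemma~\ref{lemma:cm} supplies $f_\lambda(t)=\cos(\sqrt{\lambda}\,(t-1))$ in \eqref{eq:CM43}. Using $\sqrt{\lambda}=\sqrt{-2c}\,z=i\sqrt{2c}\,z$ and $\cos(iw)=\cosh w$, this is $f_\lambda(t)=\cosh(\sqrt{2c}\,z\,(t-1))$, which is independent of the branch of $\sqrt{2c}$ just as $S(c)$ and $T(c)$ are. Hence $f_\lambda(1)=1$, $f_\lambda(0)=\cosh(\sqrt{2c}\,z)=1/S(c)$, and the prefactor of \eqref{eq:CM32} evaluates to $(f_\lambda(1)/f_\lambda(0))^{1/2}=\sqrt{S(c)}$, exactly the claimed square-root factor. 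It then remains to match the quadratic form $\lambda^2\beta^2$ from \eqref{eq:CM33} to \eqref{eq:Mecozzi18-beta}. I would substitute $f_\lambda$ and $g$ into \eqref{eq:CM33}; after undoing the time rescaling the nested integrals reduce to the elementary hyperbolic quadratures $\int_0^z\sech^2(\sqrt{2c}\,u)\,du=T(c)$, $\int_0^z\sech(\sqrt{2c}\,u)\tanh(\sqrt{2c}\,u)\,du=(1-S(c))/\sqrt{2c}$, and $\int_0^z\tanh^2(\sqrt{2c}\,u)\,du=z-T(c)$, which assemble into the $a^2$, $ab$, and $b^2$ coefficients of \eqref{eq:Mecozzi18-beta}.

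The main obstacle is the boundary term $a\,W(z)\propto X(1)$, since Lemma~\ref{lemma:cm} only admits a running linear term $2\lambda\int_0^1 g(t)x(t)\,dt$ with $g\in L^2$, whereas $X(1)$ is formally $\int_0^1\delta(t-1)x(t)\,dt$ with $\delta\notin L^2$. I would handle this by approximating $\delta(t-1)$ with the $L^2$ bumps $g_\varepsilon=\varepsilon^{-1}\mathbf 1_{[1-\varepsilon,1]}$, applying \eqref{eq:CM32}--\eqref{eq:CM33} to the admissible $g_\varepsilon$-regularized integrand, and passing to the limit $\varepsilon\to0$ using continuity of \eqref{eq:CM33} in $g$ and dominated convergence of the Wiener integral; the boundary contribution then produces precisely the $a^2$ and $ab$ terms through evaluations of $f_\lambda$ near $t=1$. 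As an independent cross-check that sidesteps the boundary issue entirely, one may instead use Feynman--Kac: the function $\psi(x,\tau)=\E{\exp\!\big(aW(z)+\int_\tau^z(bW-cW^2)\,dz'\big)\,\big|\,W(\tau)=x}$ solves $\partial_\tau\psi+\tfrac12\partial_{xx}\psi+(bx-cx^2)\psi=0$ with terminal data $\psi(x,z)=e^{ax}$, and the Gaussian ansatz $\psi=\exp(A(\tau)+B(\tau)x+C(\tau)x^2)$ reduces this to a Riccati equation $C'=c-2C^2$ together with linear equations for $B$ and $A$; solving backward from $(A,B,C)(z)=(0,a,0)$ gives $C(0)=-c\,T(c)$, $B(0)=a\,S(c)+b\,T(c)$, and $A(0)=\tfrac12\ln S(c)+\lambda^2\beta^2$ with $\lambda^2\beta^2$ as in \eqref{eq:Mecozzi18-beta}, so that the sought expectation is $\psi(0,0)=\sqrt{S(c)}\,\exp(\lambda^2\beta^2)$, in agreement with \eqref{eq:Mecozzi18}.
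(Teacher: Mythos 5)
Your plan follows essentially the same route as the paper's proof: rescale $[0,z]$ to $[0,1]$ by Brownian scaling, apply the Cameron--Martin identity with $p\equiv 1$ and $\lambda=-2cz^2$ so that $f_\lambda(t)=\cos(\lambda^{1/2}(t-1))$ yields the $\sqrt{S(c)}$ prefactor, and handle the boundary term $aW(z)$ by mollifying the Dirac delta with the $L^2$ bumps $\varepsilon^{-1}\mathbf{1}_{[1-\varepsilon,1]}$ before passing to the limit --- this is precisely the paper's argument, and your evaluation of the resulting quadratic form is consistent with \eqref{eq:lambda-beta-2}. The Feynman--Kac/Riccati cross-check is a correct and genuinely independent verification not present in the paper, but it is supplementary to the main argument, which coincides with the paper's.
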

\begin{proof}
Consider the change of variables $y=z'/z$, and write \eqref{eq:Mecozzi18} as
\begin{align}
  & \E{\exp\left( a W(z) + b z \int_0^1 W(yz) dy - c z \int_0^1 W(yz)^2 dy \right)} \nonumber \\
  & = {\rm E} \left[ \exp\left( a \sqrt{2} z^{1/2} \frac{W(1)}{\sqrt{2}}
         + b \sqrt{2} z^{3/2} \int_0^1 \frac{W(y)}{\sqrt{2}} dy \right. \right. \nonumber \\
  & \qquad \qquad \left. \left. - 2 c z^2 \int_0^1 \frac{W(y)^2}{2} dy \right) \right].
 \label{eq:EW1}
\end{align}
Now consider the function
\begin{align}
  g(s) = \frac{1}{2\lambda}\left( a \sqrt{2} z^{1/2} \frac{1(1-\epsilon\le s \le 1)}{\epsilon} + b \sqrt{2} z^{3/2} \right)
  \label{eq:g-function}
\end{align}
where $\epsilon$ is a small positive number. The idea of including the function $1(\cdot)$ is
to avoid a Dirac-$\delta$ function, and so that we can write
\begin{align}
  W(1) = \lim_{\epsilon\rightarrow0} \int_{0}^{1} \frac{1(1-\epsilon\le s \le 1)}{\epsilon} \, W(s) \, ds.
\end{align}
We apply \eqref{eq:CM32}-\eqref{eq:CM43} with $\lambda=-2cz^2$ and compute
\begin{align}
  & \lambda^2 \beta^2 
     = \int_0^1 \left[ - \frac{az^{1/2}}{\sqrt{2}} \, \frac{1}{\cos\left( \lambda^{1/2} (t-1) \right)}
         \, \frac{\sin\left( \lambda^{1/2} \epsilon \right)}{\lambda^{1/2} \epsilon} \right. \nonumber \\
  & \qquad \qquad \left. + \frac{bz^{3/2}}{\sqrt{2} \lambda^{1/2}}
      \tan\left( \lambda^{1/2} (t-1) \right) \right]^2 dt.
\end{align}
For vanishing $\epsilon$, the sine ratio becomes 1, and we have
\begin{align}
  \lambda^2 \beta^2
  & =  \frac{a^2z}{2 \lambda^{1/2}} \tan\left( \lambda^{1/2} \right)
          - \frac{abz^2}{\lambda} \left[ 1 - \frac{1}{\cos\left( \lambda^{1/2} \right)} \right] \nonumber \\
  & \quad   + \frac{b^2z^{3}}{2 \lambda^{3/2}} \left[ \tan\left( \lambda^{1/2} \right) - \lambda^{1/2} \right] .
  \label{eq:lambda-beta-2}
\end{align}
We obtain \eqref{eq:Mecozzi18} and \eqref{eq:Mecozzi18-beta} by using
$$\tan(jx) = j\tanh(x), \quad \cos(jx)=\cosh(x).$$
\end{proof}

\begin{example}
If $c\rightarrow0$ then we have
\begin{align}
  \lambda^2 \beta^2 \rightarrow \frac{z}{2} \left( a^2 + ab \, z + \frac{b^2}{3} z^2 \right)
  \label{eq:czero}
\end{align}
which follows by using the Taylor series expansions
$$\tanh x \approx x - x^3/3, \quad \sech x \approx 1 - x^2/2.$$
Alternatively, one can prove \eqref{eq:czero} without using \eqref{eq:CM32} 
by observing that, for $c=0$, the term inside the exponential in \eqref{eq:Mecozzi18}
is a zero-mean Gaussian random variable.
\end{example}

\appendix{One-Sample Statistics}
\label{app:one-sample-statistics}
We review the sample moments computed in \cite[eq. (17)]{Mecozzi-94}.
As we consider only one time instant $t$, we drop the time variables for convenience of notation.

Recall that $u_0=u_{0R} + j u_{0I}$. Consider the conditional moments
\begin{align}
   \mu_{m,n} = \E{\left. U^m (U^*)^n \right| U_0=u_0} 
\end{align}
and the moment generating function
\begin{align}
  & M_{m,n}(s_1,s_2)
     = {\rm E}\left[\exp\left( s_1 \left[ u_0 + \sqrt{K} W(z) \right] \right. \right. \nonumber \\
  & \quad + s_2 \left[ u_0 + \sqrt{K} W(z) \right]^* \nonumber \\
  & \quad \left. \left. + j \gamma (m-n) \int_0^z |u_0 + \sqrt{K} W(z') |^2 dz' \right) \right]
\end{align}  
so that
\begin{align}
  \mu_{m,n} = \left. \frac{\partial^{m+n}}{\partial^m s_1 \, \partial^n s_2} M_{m,n}(s_1,s_2) \right|_{s_1=s_2=0} .
  \label{eq:mu-eqn}
\end{align}
We compute
\begin{align}
  M_{m,n}(s_1,s_2) & = e^d \cdot \E{e^Z} 
\end{align}
where
\begin{align}
& d = s_1 u_0 + s_2 u_0^* + j \gamma (m-n) |u_0|^2 z \\
& Z = \sqrt{K} \left[ s_1 W(z) + s_2 W(z)^* \right] \nonumber \\ 
& \qquad + j \gamma (m-n) \sqrt{K} \int_0^z 2 \, \Re\left\{ u_0 \, W(z')^* \right\} dz' \nonumber \\
& \qquad + j \gamma (m-n) K \int_0^z |W(z')|^2 dz' .
\end{align}

Recall that $W(z)=(W_R(z) + j W_I(z))/\sqrt{2}$, where $W_R(\cdot)$ and $W_I(\cdot)$ 
are independent, standard, real, Wiener processes of unit variance. We may thus simplify
$\E{e^Z}=\E{e^{A+B}}=\E{e^A} \E{e^B}$ where
\begin{align}
A & = \sqrt{\frac{K}{2}} (s_1 + s_2) W_R(z) \nonumber \\
& \qquad + j \gamma(m-n) \sqrt{2K} u_{0R} \int_0^z  W_R(z') dz' \nonumber \\ 
& \qquad + j \gamma(m-n) \frac{K}{2} \int_0^z  W_R(z')^2 dz' \\
B & = j \sqrt{\frac{K}{2}} (s_1 - s_2) W_I(z) \nonumber \\
& \qquad + j \gamma (m-n) \sqrt{2K} u_{0I} \int_0^z  W_I(z') dz' \nonumber \\ 
& \qquad + j \gamma (m-n) \frac{K}{2} \int_0^z  W_I(z')^2 dz' .
\end{align}
Now define the values
\begin{align}
\begin{array}{ll}
   a_1 = \sqrt{K/2}(s_1+s_2), & a_2 = j \sqrt{K/2}(s_1-s_2) \\
   b_1 = j \gamma (m-n) \sqrt{2K} u_{0R}, &  b_2 = j \gamma (m-n) \sqrt{2K} u_{0I} \\
   c_1 = c_2 = c = - j \gamma (m-n) K/2  &
\end{array} \label{eq:abc}
\end{align}
so that
\begin{align*}
 & \frac{a_1^2+a_2^2}{2} = K s_1 s_2, \qquad \frac{b_1^2+b_2^2}{4c} = j \gamma (n-m) |u_0|^2 \\
 & \frac{a_1 b_1 + a_2 b_2}{2c} = - \left(s_1 u_0 + s_2 u_0^*\right).
\end{align*}
We have the following expression using \eqref{eq:Mecozzi18}:
\begin{align}
  & \E{e^A}\E{e^B} =  S(c) \exp\Big[ \nonumber \\
  & \qquad \left( K s_1 s_2 + j \gamma (m-n) |u_0|^2 \right) T(c) \nonumber \\
  & \qquad \left. - \left(s_1 u_0 + s_2 u_0^*\right) \left(1 - S(c) \right) 
      \right. \nonumber \\
  & \qquad - j \gamma (m-n) |u_0|^2 z \; \Big].
\end{align}
This gives a result corresponding to \cite[eq. (19)]{Mecozzi-94}:
\begin{align}
  M_{m,n}(s_1,s_2)
  & = S(c) \exp\Big[ \left( s_1 u_0 + s_2 u_0^* \right) S(c) \nonumber \\
  & \quad + \left( K s_1 s_2 + j \gamma (m-n) |u_0|^2 \right) T(c) \Big]
\end{align}
where $c$ is given in \eqref{eq:abc}. For example, for $m=n$ we
have $c=0$ from \eqref{eq:abc}, and therefore $S(c)=1$ and $T(c)=z$.
If we further have $m=n=1$, then \eqref{eq:mu-eqn} gives
\begin{align}
  \mu_{1,1}
  & = \left. \frac{\partial^{2}}{\partial s_1 \, \partial s_2} M_{1,1}(s_1,s_2) \right|_{s_1=s_2=0} \nonumber \\
  & = Kz + |u_0|^2
\end{align}
as expected from \eqref{eq:t-is-tprime}.

\subsection{First Moment}
\label{subsec:moment1}
The $m$th moment is
\begin{align}
   \E{U^m} = \mu_{m,0} = u_0^m \, E_m(c) S(c)^{m+1}.
   \label{eq:EUm}
\end{align}
where $c = - j \gamma m K/2$ and
\begin{align}
  E_m(c) = \exp\left( j \gamma m |u_0|^2 T(c) \right).
  \label{eq:Emc}
\end{align}
For example, the first moment has $c = - j \gamma K/2$ and
\begin{align}
   \E{U} = u_0 \, E_1(c) S(c)^2. 
   \label{eq:Emc1}
\end{align}
Now consider small $\sqrt{\gamma K z^2}$ for which (see \eqref{eq:Sapprox}-\eqref{eq:Tapprox})
\begin{align}
  & T(c) \approx z + j \gamma K z^3/3 \label{eq:tanh-approx} \\
  & S(c)^2 \approx 1.
\end{align}
We thus have
\begin{align}
   \E{U} &\approx u_0 \exp\left( |u_0|^2 \left[ j \gamma z - \gamma^2 K z^3/3 \right] \right) \nonumber \\
   & = u_0 \exp\left( |u_0|^2 \left[ j \gamma z - \kappa/2 \right] \right)
   \label{eq:Eu-approx}
\end{align}
where we have used $\kappa= 2 \gamma^2 K z^3/3$ as in \eqref{eq:kappa}.
The first moment thus experiences a power reduction of
\begin{align}
  f(z) & = \left| E_1(c) S(c)^2 \right|^2  \approx \exp \left( - \kappa |u_0|^2  \right).
  \label{eq:Mecozzi-f}
\end{align}
This matches Mecozzi's equations (29) and (30) from~\cite{Mecozzi-94}.

\begin{remark}
The value of the first moment may seem curious from the following perspective.
The first moment is
\begin{align}
  \E{[u_0 + \sqrt{K} W(z)] \exp\left( j \gamma \int_0^z |u_0 + \sqrt{K} W(z') |^2 dz' \right)}
  \label{eq:first-moment}
\end{align}
where $n=1$. A casual guess is that \eqref{eq:first-moment} should simplify to
\begin{align}
  u_0 \, M_{1,0}(0,0) & = u_0 \, \E{\exp\left( j \gamma \int_0^z |u_0 + \sqrt{K} W(z') |^2 dz' \right)}
\end{align}
since the term with $\sqrt{K} W(z)$ seems to evaluate to zero. However, the
first moment would then be
$$u_0 \, M_{1,0}(0,0) = u_0 \, E_1(c) S(c).$$
Note that the $S(c)$ term is not squared. In fact, we have
\begin{align}
  & \E{ \sqrt{K} W(z) \, \exp\left( j \gamma \int_0^z |u_0 + \sqrt{K} W(z') |^2 dz' \right)} \nonumber \\
  & = u_0 \left[ S(c) -1\right] E_1(c) S(c)
\end{align}
which gives the desired result.
\end{remark}

\appendix{Two-Sample Statistics}
\label{app:two-sample-statistics}
We write $u_0=u_0(t)$, $u_0'=u_0(t')$, and similarly for $u(t)=u_z(t)$.
Consider the conditional moments
\begin{align}
   & \mu_{mnk\ell} = {\rm E} \left[ \left. U^m (U^*)^n (U')^k (U'^*)^{\ell} \right| U_0(\cdot)=u_0(\cdot) \right]
\end{align}
and the moment generating function
\begin{align}
  & M_{mnk\ell}({\bf s})
     = {\rm E}\left[\exp\left(  \right. \right. \nonumber \\
  & \quad \;  s_1 [ u_0 + \sqrt{K} W(z,t)] + s_2 [u_0 + \sqrt{K} W(z,t)]^* \nonumber \\
  & + s_3 [u_0' + \sqrt{K} W(z,t')] + s_4 [u_0' + \sqrt{K} W(z,t')]^* \nonumber \\
  & + j \gamma \int_0^z (m-n) \; |u_0 + \sqrt{K} W(z',t) |^2 \nonumber \\
  & \quad \quad \left. \left. + (k-\ell) \; |u_0' + \sqrt{K} W(z',t') |^2 dz' \right) \right]
\end{align}
where ${\bf s}=[s_1,s_2,s_3,s_4]$ so that
\begin{align}
  \mu_{mnk\ell} = \left. \frac{\partial^{m+n+k+\ell}}{\partial^m s_1 \, \partial^n s_2 \, \partial^k s_3 \, \partial^{\ell} s_4}
  M_{mnk\ell}({\bf s}) \right|_{{\bf s}={\bf 0}} .
\end{align}
%

\subsection{Autocorrelation Function}
\label{subsec:autocorrelation-function}
For the autocorrelation function, we choose $mnk\ell=1001$.
For simplicity of notation, we replace $s_4$ with $s_2$ and write
\begin{align}
  &M_{1001}(s_1,s_2) = e^d \cdot \E{e^{Z}}
\end{align}
where
\begin{align}
& d = s_1 u_0 + s_2 u_0'^* + j \gamma z \left[ |u_0|^2 - |u_0'|^2 \right] \\
& Z  = \sqrt{K} \left[ s_1 W(z,t)  + s_2 W(z,t')^* \right] \nonumber \\ 
& \quad + j \gamma \sqrt{K} \int_0^z 2 \Re\left\{ u_{0} W(z',t)^* - u_{0}' W(z',t')^* \right\} dz' \nonumber \\
& \quad + j \gamma K \int_0^z  |W(z',t)|^2 - |W(z',t')|^2 dz' . \label{eq:autocorrZ}
\end{align}

\subsection{Noise}
\label{subsec:noise}
Observe that $W(\cdot,t)$ and $W(\cdot,t')$ are {\em correlated} complex Wiener processes.
Since $W(\cdot,t)$ and $W(\cdot,t')$  are circularly symmetric, we may write
\begin{align}
  W(z,t') = \rho^* \, W(z,t) + \sqrt{1-|\rho|^2} \, \tilde{W}(z,t')
  \label{eq:correlatedW}
\end{align}
where the correlation coefficient is
\begin{align}
  \rho = \E{W(z,t) W(z,t')^* }
\end{align}
and $\tilde{W}(z,t')=(\tilde{W}_R(z,t') + j \tilde{W}_I(z,t'))/\sqrt{2}$
where $\tilde{W}_R(\cdot,t')$ and $\tilde{W}_I(\cdot,t')$  are independent,
standard, real, Wiener processes that are jointly independent of $W(\cdot,t)$.
Using $\rho=\rho_R + j \rho_I$, we have $W(\cdot,t')=(W_R(\cdot,t')+j W_I(\cdot,t'))/\sqrt{2}$ where
\begin{align}
  & W_R(z,t') = \rho_R \, W_R(z,t) + \rho_I W_I(z,t) + \sqrt{1-|\rho|^2} \, \tilde{W}_R(z,t') \nonumber \\
  & W_I(z,t') = \rho_R \, W_I(z,t) - \rho_I W_R(z,t) + \sqrt{1-|\rho|^2} \, \tilde{W}_I(z,t').
\end{align}
We thus have $\E{W_R(z,t') W_I(z,t')} = 0$ as required.

We are particularly interested in real $\rho$, see \eqref{eq:Wcorrelation}.
In this case, we have
\begin{align}
  \begin{array}{l}
  W_R(z,t') = \rho \, W_R(z,t) + \sqrt{1-\rho^2} \, \tilde{W}_R(z,t') \\
  W_I(z,t') = \rho \, W_I(z,t) + \sqrt{1-\rho^2} \, \tilde{W}_I(z,t').
  \end{array}
  \label{eq:correlatedW-real-rho}
\end{align}
Thus, the real and imaginary processes are independent, i.e.,
$W_R(z,\cdot)$ is independent of $W_I(z,\cdot)$.

\subsection{Analysis of $Z$}
\label{subsec:analysis-Z}
Inserting \eqref{eq:correlatedW-real-rho} into \eqref{eq:autocorrZ}, we have
\begin{align}
& Z = \sqrt{K} \left[ s_1 W(z,t) + s_2 \rho W(z,t)^* + s_2 \sqrt{1-\rho^2} \, \tilde{W}(z,t')^* \right] \nonumber \\
& \quad + j \gamma \sqrt{K} \int_0^z 2 \Re\left\{ [u_{0}(t) - u_0(t') \rho] W(z',t)^* \right. \nonumber \\
& \qquad \quad \left. - u_{0}(t') \sqrt{1-\rho^2} \, \tilde{W}(z',t')^* \right\} dz' \nonumber \\ 
& \quad + j \gamma K \int_0^z  (1-\rho^2) \left( |W(z',t)|^2 - |\tilde{W}(z',t')|^2 \right) \nonumber \\
& \qquad \quad - 2 \Re\left\{ \rho \sqrt{1-\rho^2} W(z',t) \tilde{W}(z',t')^* \right\}dz' .
\end{align}
The quadratic form in the last integral is ${\bf W}^\dag {\bf Q} {\bf W}$, where
\begin{align}
  & {\bf W} = \begin{bmatrix} W(z',t) & \tilde{W}(z',t') \end{bmatrix}^T \\
  & {\bf Q} = \begin{bmatrix} 1-\rho^2 & -\rho\sqrt{1-\rho^2} \\
          -\rho\sqrt{1-\rho^2} & -(1-\rho^2) \end{bmatrix}.
\label{eq:quad-form}
\end{align}
The eigenvalue decomposition is $\bf Q={\bf S} {\bf \Lambda} {\bf S}^T$, where
\begin{align*}
& {\bf \Lambda} = \begin{bmatrix} \lambda_1 & 0 \\ 0 & \lambda_2 \end{bmatrix}
= \begin{bmatrix} \sqrt{1-\rho^2} & 0 \\ 0 & -\sqrt{1-\rho^2} \end{bmatrix} \\
& {\bf S} = \begin{bmatrix} {\bf e}_1 & {\bf e}_2 \end{bmatrix}
= \begin{bmatrix} a  & b \\ - b & a \end{bmatrix}
\end{align*}
with
\begin{align}
a &= \frac{1}{\sqrt{2}} \sqrt{1+\sqrt{1-\rho^2}}\\
b &= \frac{1}{\sqrt{2}} \sqrt{1-\sqrt{1-\rho^2}} \cdot {\rm sgn}(\rho) .
\end{align}
Note that ${\bf S}^T {\bf S}={\bf I}$, $a^2+b^2=1$, $a^2-b^2=\sqrt{1-\rho^2}$, and $ab=\rho/2$.
The quadratic form of interest is ${\bf W}^\dag \left( {\bf S} {\bf \Lambda} {\bf S}^T\right) {\bf W}$.
We thus define
\begin{align}
& {\bf V} = {\bf S}^T {\bf W}
\end{align}
where ${\bf V} = [ V_1(z') \: V_2(z')]^T$ with
\begin{align}
  & V_1(z') = (V_{1R}(z') + j V_{1I}(z'))/\sqrt{2} \\
  & V_2(z') = (V_{2R}(z') + j V_{2I}(z'))/\sqrt{2}.
\end{align}
Since the columns of ${\bf S}$ are orthonormal, the random processes
$V_{1R}(\cdot)$, $V_{1I}(\cdot)$, $V_{2R}(\cdot)$, and $V_{2I}(\cdot)$
are jointly independent standard Wiener. We further have
\begin{align}
 &{\bf W} = {\bf S} {\bf V}
    = \begin{bmatrix} a  & b \\ - b & a \end{bmatrix} {\bf V} \\
& {\bf W}^\dag {\bf Q} {\bf W} = {\bf V}^\dag
    \begin{bmatrix} \lambda_1 & 0 \\
    0 & \lambda_2 \end{bmatrix} {\bf V}.
\end{align}

We expand $Z$ as
\begin{align}
& Z = \left(a_{1R} V_{1R} + a_{1I} V_{1I}\right)
   + \left(a_{2R} V_{2R}  + a_{2I} V_{2I}\right) \nonumber \\
& + \int_{0}^{z} \left(b_{1R} V_{1R} + b_{1I} V_{1I}\right)
   + \left(b_{2R} V_{2R} + b_{2I} V_{2I}\right) dz' \nonumber \\
& - \int_{0}^{z} \left(c_{1R} V_{1R}^2 + c_{1I} V_{1I}^2\right)
   + \left(c_{2R} V_{2R}^2 + c_{2I} V_{2I}^2\right) dz'
\label{eq:Zdetail}
\end{align}
where
\begin{align*}
& a_{1R} = \sqrt{\frac{K}{2}}\left( a(s_1+s_2\rho) -bs_2 \sqrt{1-\rho^2} \right)\\
& a_{1I} = j\sqrt{\frac{K}{2}}\left( a(s_1-s_2\rho) +bs_2 \sqrt{1-\rho^2} \right)\\
& a_{2R} = \sqrt{\frac{K}{2}}\left( b(s_1+s_2\rho) +as_2 \sqrt{1-\rho^2} \right)\\
& a_{2I} = j\sqrt{\frac{K}{2}}\left( b(s_1-s_2\rho) -as_2 \sqrt{1-\rho^2} \right)\\
& b_{1R} = j \gamma\sqrt{2K}\left( a(u_{0R}-\rho u_{0R}') +bu_{0R}'\sqrt{1-\rho^2} \right)\\
& b_{1I} = j \gamma\sqrt{2K}\left( a(u_{0I}-\rho u_{0I}') +bu_{0I}'\sqrt{1-\rho^2} \right)\\
& b_{2R} = j \gamma\sqrt{2K}\left( b(u_{0R}-\rho u_{0R}') -au_{0R}'\sqrt{1-\rho^2} \right)\\
& b_{2I} = j \gamma\sqrt{2K}\left( b(u_{0I}-\rho u_{0I}') -au_{0I}'\sqrt{1-\rho^2} \right)\\
& c = c_{1R} = c_{1I} =-c_{2R} = -c_{2I} = - j \gamma \frac{K}{2} \sqrt{1-\rho^2}
\end{align*}
and where $u_{0R}=u_{0R}(t)$, $u_{0I}=u_{0I}(t)$, $u_{0R}'=u_{0R}(t')$, $u_{0I}'=u_{0I}(t')$.

\subsection{Moment Generating Function}
\label{subsec:MGF}
We use \eqref{eq:S-function}, \eqref{eq:T-function}, the identities
\begin{align*}
& \sech\left(\sqrt{-jx}\right) = \sech\left(\sqrt{jx}\right)^* \\
& \tanh\left(\sqrt{-jx}\right)/\sqrt{-jx} = \left(\tanh\left(\sqrt{jx}\right)\big/\sqrt{jx}\right)^*
\end{align*}
for real $x$, and Lemma \ref{lemma:Mecozzi} to calculate
\begin{align}
& M_{1001}(s_1,s_2) = e^d \, |S(c)|^2 \nonumber \\
&\cdot \exp\left[\left(\frac{a_{1R}^2+a_{1I}^2}{2} - \frac{b_{1R}^2+b_{1I}^2}{4c}\right) T(c) \right.+ \nonumber \\
&\quad \left.\frac{a_{1R}b_{1R}+a_{1I}b_{1I}}{2c}\left(1-S(c) )\right)
+\frac{b_{1R}^2+b_{1I}^2}{4c}z\right] \nonumber \\
&\cdot \exp\left[\left(\frac{a_{2R}^2+a_{2I}^2}{2} + \frac{b_{2R}^2+b_{2I}^2}{4c}\right)T(c)^*\right.- \nonumber \\
&\quad \left.\frac{a_{2R}b_{2R}+a_{2I}b_{2I}}{2c}\left(1-S(c)^*\right)
-\frac{b_{2R}^2+b_{2I}^2}{4c}z\right] .
\label{eq:M0s1s2-1}
\end{align}
We rewrite \eqref{eq:M0s1s2-1} as
\begin{align}
& M_{1001}(s_1,s_2) = e^d \, |S(c)|^2 \exp\Big[ T_R(c) \nonumber \\
&\cdot \left(\frac{a_{1R}^2+a_{1I}^2+a_{2R}^2+a_{2I}^2}{2} - \frac{b_{1R}^2+b_{1I}^2-b_{2R}^2-b_{2I}^2}{4c}\right) \nonumber \\
& + j T_I(c) \nonumber \\
& \cdot \left(\frac{a_{1R}^2+a_{1I}^2-a_{2R}^2-a_{2I}^2}{2} - \frac{b_{1R}^2+b_{1I}^2+b_{2R}^2+b_{2I}^2}{4c}\right) \nonumber \\
&+\left(1-S_R(c)\right) \frac{a_{1R}b_{1R}+a_{1I}b_{1I}-a_{2R}b_{2R}-a_{2I}b_{2I}}{2c} \nonumber \\
&-j S_I(c) \frac{a_{1R}b_{1R}+a_{1I}b_{1I}+a_{2R}b_{2R}+a_{2I}b_{2I}}{2c} \nonumber \\
&\left.+\frac{b_{1R}^2+b_{1I}^2-b_{2R}^2-b_{2I}^2}{4c}z\right].
\label{eq:M0s1s2-1a}
\end{align}
The sums in this expression are
\begin{align*}
& \frac{a_{1R}^2+a_{1I}^2+a_{2R}^2+a_{2I}^2}{2} = Ks_1s_2\rho\\
& \frac{a_{1R}^2+a_{1I}^2-a_{2R}^2-a_{2I}^2}{2} = 0\\
& \frac{b_{1R}^2+b_{1I}^2+b_{2R}^2+b_{2I}^2}{4c} \nonumber \\
& \quad = \frac{- j \gamma}{\sqrt{1-\rho^2}}[|u_0|^2+|u'_0|^2-2\rho\Re\{u_0 u_0'^* \}]\\
& \frac{b_{1R}^2+b_{1I}^2-b_{2R}^2-b_{2I}^2}{4c} = - j \gamma[|u_0|^2-|u'_0|^2]\\
& \frac{a_{1R}b_{1R}+a_{1I}b_{1I}+a_{2R}b_{2R}+a_{2I}b_{2I}}{2c} \nonumber \\
& \quad = \frac{1}{\sqrt{1-\rho^2}}[ s_1(\rho u_0' - u_0) + s_2(u_0'-\rho u_0)^* \big]\\
& \frac{a_{1R}b_{1R}+a_{1I}b_{1I}-a_{2R}b_{2R}-a_{2I}b_{2I}}{2c} \nonumber \\
& \quad = - s_1 u_0 - s_2 u_0'^*  .
\end{align*}
We thus have
\begin{align}
& M_{1001}(s_1,s_2) = |S(c)|^2 \nonumber \\
& \cdot \exp\Big[ T_R(c) \cdot \left( Ks_1s_2\rho + j \gamma[|u_0|^2-|u'_0|^2] \right) \nonumber \\
& - \gamma \frac{T_I(c)}{\sqrt{1-\rho^2}}[|u_0|^2+|u'_0|^2-2\rho\Re\{u_0 u_0'^{*}\}] \nonumber \\
& + S_R(c) \left[ s_1 u_0 + s_2 u_0'^* \right] \nonumber \\
&\left. -j \frac{S_I(c)}{\sqrt{1-\rho^2}} \left[ s_1(\rho u_0' - u_0) + s_2(u_0' - \rho u_0)^* \right] \right].
\end{align}
Taking derivatives and setting $s_1 = s_2 = 0$ we obtain the autocorrelation function \eqref{eq:autocorr-general}.

\appendix{Infinite Bandwidth and Finite Power Noise}
\label{app:infiniteB}
Consider large $B$ but fixed $K$ as in Remark~\ref{rmk:infinite-bandwidth-noise},
i.e., we have a noise PDD of $K$ W/m
that is independent of $B$. Such infinite bandwidth noise is relatively easy to
treat because, conditioned on the input, any two samples $U(t)$ and $U(t')$
with $t'\ne t$ are statistically independent.

\subsection{Autocorrelation and PSD}
We have $\rho=1(t=t')$ and use \eqref{eq:autocorr-general}  to compute 
\begin{align}
   A(t,t') & = \left\{ 
   \begin{array}{ll}
   Kz + |u_0(t)|^2, & t=t' \\
   v(t) \, v(t')^*, & t \ne t'
   \end{array} \right.
   \label{eq:autocorr-large-B}
\end{align}
where (cf.~\eqref{eq:Emc} and \eqref{eq:Emc1})
\begin{align}
   & v(t) = u_0(t) E_1(c,t) S(c)^2 \\
   & E_1(c,t) = \exp\left( j \gamma |u_0(t)|^2 T(c) \right) \label{eq:E1ct}
\end{align}
and $c = - j \gamma K/2$. The PSD is therefore
\begin{align}
\bar{\mathcal{P}}(f,T)
& = \E{\left| \int_{-\infty}^{\infty} \frac{1}{\sqrt{T}} V(t) e^{-j 2 \pi f t} \, dt \right|^2}
\label{eq:PSD-infiniteB}
\end{align}
where $V(\cdot)$ is the random signal with realization $v(\cdot)$, and where
the expectation is over the random launch signal $U_0(\cdot)$.

\subsection{A Discrete-Time Model}
We develop a discrete-time model.
Let $\left\{\phi_{m}(\cdot)\right\}_{m=1}^{\infty}$ be a complete orthonormal basis for $L^2[0,T]$.
Consider the projection output
\begin{align}
   Y_m = \int_0^T U(t) \phi_{m}(t)^* dt
   \label{eq:projection-def}
\end{align}
and collect these values in the sequence ${\bf Y}=\{Y_m\}_{m=1}^{\infty}$ with energy
$\|{\bf Y}\|^2=\sum_{m=1}^{\infty} |Y_m|^2$. We can use \eqref{eq:EUm} and the same steps as in
\cite{Barletta-Kramer-CROWNCOM14,Barletta-Kramer-ISIT14} (see also~\cite[Sec.~IV.C-D]{Goebel-etal-IT11})
to show that if $U_0(\cdot)=u_0(\cdot)$, then we have
\begin{align}
   Y_m & \overset{a.s.}{=} \int_0^T \E{\left. U(t) \right| U_0(\cdot)=u_0(\cdot)} \phi_{m}(t)^* dt \nonumber \\
   & = \int_0^T E_1(c,t) S(c)^2 u_0(t) \phi_{m}(t)^* dt
   \label{eq:projection1}
\end{align}
where $E_1(c,t)$ is given by \eqref{eq:E1ct}.
In other words, the channel effectively modulates $u_0(t)$ by the factor $E_1(c,t) S(c)^2$. 
This means there is {\em no phase noise} since $E_1(c,t)$ is a function of $u_0(t)$.
Instead, the signal loses {\em energy} since $| E_1(c,t) S(c)^2 |<1$
if $|u_0(t)|>0$, $\gamma>0$, and $K>0$.

The result \eqref{eq:projection1} suggests that we study the complex-alphabet
and continuous-time model
\begin{align}
   Y(t) = E_1(c,t) S(c)^2 u_0(t) + N_r(t)
   \label{eq:simple-model}
\end{align}
where $N_r(\cdot)$ is an AWGN process with a one-sided PSD of $N_0$ W/Hz.
The model \eqref{eq:simple-model} has several interesting features. First, an optimal
receiver\footnote{An optimal receiver puts out sufficient statistics for estimating which
signal of a set $\{u_{0,s}(t)\}_{s=1}^{S}$ was transmitted. Note that an optimal receiver
for the model \eqref{eq:simple-model} may not be an optimal receiver for the original
model \eqref{eq:dispersion-free-model}.}  may use matched filtering for signals of the
form $u_0(t) E_1(c,t)$.
Moreover, suppose $\sqrt{\gamma K z^2}$ is small so that we have
(see \eqref{eq:E1ct} and \eqref{eq:tanh-approx})
\begin{align}
  & E_1(c,t) S(c)^2 \approx \exp\left( \left[ j \gamma z - \kappa/2 \right] |u_0(t)|^2 \right)
\end{align}
where $\kappa= 2 \gamma^2 K z^3/3$ as in \eqref{eq:kappa}.
We see that the receiver modulates the phase and amplitude of
the received signal as a function of $|u_0(t)|^2$. In particular,
if we use PAM with rectangular pulses that are time-limited to $[0,T_s)$,
then the standard matched filter is optimal but the receiver a-posteriori probability calculation
should account for the channel's symbol-dependent attenuation and phase shift.

\subsection{Capacity Bounds}
Consider the channel \eqref{eq:simple-model} with an 
amplitude constraint $|u_0(t)|\le A_{\rm max}=\sqrt{P T_s}$.
For rectangular pulses, we have
\begin{align}
  & C = \max_{X: |X|^2 \le P} \frac{1}{T_s} \left[ h(Y) - \log_2(\pi e N_0) \right] \nonumber \\
  & \le \frac{1}{T_s} \log_2\left( 1 + \frac{\max_{X: |X|^2\le P} \E{T_s |X|^2 e^{-\kappa |X|^2}} }{N_0} \right) \nonumber \\
  & = \left\{ \begin{array}{ll}
     \frac{1}{T_s}  \log_2\left( 1 + \frac{T_s P e^{-\kappa P}}{N_0} \right), & \text{if $P < 1/\kappa$} \\
     \frac{1}{T_s}  \log_2\left( 1 + \frac{T_s}{\kappa e N_0} \right), & \text{else}.
     \end{array} \right.
\label{eq:C-bound}
\end{align}
The smallest $P$ that achieves the maximal upper bound is $P=1/\kappa$.
In fact, the bound on the RHS of \eqref{eq:C-bound} can be approached
if $A_{\rm max}\rightarrow0$, cf.~\cite{Thangaraj-Kramer-Boecherer-IT17}. 
Furthermore, for fixed $P$, we can maximize the RHS of \eqref{eq:C-bound}
over $T_s$ to obtain $T_s \rightarrow 0$ and therefore
\begin{align}
  \lim_{T_s \rightarrow 0} C \le \frac{1}{\kappa eN_0} \log_2(e) \quad \text{bits/s}.
\end{align}
The optimal signaling thus uses very fast pulses, and the capacity $C$ decreases
inversely proportional to $\gamma^2$, $K$, and $z^3$.

\appendix{Nonlinearity Can Increase Capacity}
\label{app:nonlinear-capacity}
We show that nonlinearity can increase capacity even with receiver noise.
In the absence of OA noise, the model \eqref{eq:electronic-noise-model} with
$u(z,t)$ defined by \eqref{eq:dispersion-free-model} is
\begin{align}
   u_r(t) = u_0(t) e^{j \gamma |u_0(t)|^2} + n_r(t).
   \label{eq:nonlinear-AWGN-model}
\end{align}
Suppose the transmitter uses PAM with square-root pulses
\begin{align}
   g(t) = \left\{ \begin{array}{ll}
   \sqrt{\frac{t - T_s + 1}{T_s(1-T_s/2)}}  , & t \in [0,T_s) \\
   0 & \text{else}
   \end{array} \right.
   \label{eq:root-pulse}
\end{align}
where $T_s\le 1$, see~\eqref{eq:PAM}.
For $\gamma=0$ the capacity is
\begin{align}
   C = \frac{1}{T_s} \log_2\left( 1 + \frac{P T_s}{N_0} \right) \text{ bits/s}.
   \label{eq:CW-gamma0}
\end{align}
This capacity can be achieved by scaling and shaping quadrature 
amplitude modulation (QAM) symbols $x_k = x_{R,k} + j x_{I,k}$
where the $x_{R,k}$ and $x_{I,k}$ take on values in
$\left\{ \pm 1, \pm 3, \ldots \right\}$. Note that the capacity scales
as $\log_2 P$ for large $P$.

Suppose now that $\gamma > 0$. The main observation is
that the nonlinearity in \eqref{eq:nonlinear-AWGN-model} converts
the pulse \eqref{eq:root-pulse} to a tone whose frequency and power is proportional
to $|x_k|^2$. More precisely, the noise-free output signals have the form
\begin{align}
   u(t) =  x_k \sqrt{\frac{t - T_s + 1}{T_s(1-T_s/2)}}  \exp\left( j 2 \pi h \, |x_k|^2 \, (t - T_s +1) \right)
   \label{eq:output-signals}
\end{align}
for $t \in [ kT_s,(k+1)T_s )$, where
$$h = \frac{\gamma}{2 \pi T_s(1-T_s/2)}$$
is a modulation index~\cite[p.~118]{Proakis-Salehi-5}.
Suppose we use intensity modulation where we choose the $M$ symbols
\begin{align}
x_k \in \{ (2i-1) \Delta: \; i=M+1,M+2,\ldots,2M \}
\end{align}
each with probability $1/M$.
The average energy is then $E=(28M^2-1) \Delta^2 /3$. We further choose $\gamma$
so that  $hT_s$ is a positive integer, e.g., $\gamma=2\pi (1-T_s/2)$ so that $hT_s=1$.
The pulses \eqref{eq:output-signals} are then mutually orthogonal:
for $x_\ell \ne x_m$ we have
\begin{align}
  & \int_0^{T_s} \frac{t-T_s+1}{T_s(1-T_s/2)} e^{j2\pi h (|x_\ell|^2-|x_m|^2) (t-T_s+1) } \, dt
  = 0.
\end{align}
The channel has thus converted the ASK signals to orthogonal FSK signals
for which the frequency grows with the power.

Next, a standard upper bound on the error probability of signal sets is the
union bound~\cite[p.~185]{Proakis-Salehi-5}
\begin{align}
  P_e \le (M-1)\, Q\left( \frac{d_{\rm min}}{\sqrt{2N_0}} \right)
\end{align}
where $d_{\rm min}$ is the minimum Euclidean distance between different pulses.
Since our FSK signals are mutually orthogonal, the minimum distance corresponds to
the signals with $i=M+1$ and $i=M+2$, i.e., we have
$$d_{\rm min}= \sqrt{(2M+1)^2 + (2M+3)^2} \cdot \Delta \ge \sqrt{8} \, M \Delta$$
and therefore
\begin{align}
  P_e \le (M-1)\, Q\left( \frac{2 \, M \Delta}{ \sqrt{N_0}} \right).
\end{align}
We use $R=\log_2 M$ bits/symbol and $Q(x) \le e^{-x^2/2}$ for positive $x$ to write
\begin{align}
  P_e < \exp\left( R \ln 2 - \frac{6}{28} \frac{E}{N_0}\right).
\end{align}
This bound shows that, for any choice of target error probability $\tilde{P}_e$, we can
choose the rate as
\begin{align}
 \frac{R}{T_s} = \frac{6}{28} \frac{P}{N_0} \log_2 e + \frac{1}{T_s} \ln \tilde{P}_e \quad \text{bits/s}.
\end{align}
The capacity thus scales linearly with $P$ rather than logarithmically as for $\gamma=0$.
\begin{remark}
The reason for the capacity gain is because the channel has spread the spectrum
of the PAM signal. The gain is thus at the expense of using more
frequency resources.
\end{remark}
\begin{remark}
The above example shows that intensity modulation can achieve a capacity
that grows {\em linearly} with $P$  for large $P$. The per-sample rate
$\frac{1}{2} \log P$ from~\cite{Turitsyn-Derevyanko-Yurkevich-Turitsyn-PRL03,Yousefi-Kschischang-IT11} 
thus underestimates capacity even with AWGN at the receiver.
\end{remark}
\begin{remark}
The channel is artificial because we have assumed the channel is lossless
without amplification.
\end{remark}

\appendix{Proofs of Lemmas~\ref{lemma:WlessB-lemma1}-\ref{lemma:WlessB-lemma4}}
\label{app:Pt-lemma-proofs}

We repeat \eqref{eq:PrT-integrand} here for convenience:
\begin{align}
   & P_r(W,T,t) = 2 \left[ K z + \left(\sqrt{P_t} + \delta \right)^2 \right] \nonumber \\
   & \int_{-T/2}^{T/2} 
      |S(c)|^2 \exp\left( - \gamma \, T_I(c) \frac{P_t}{2} \sqrt{1-\rho^2} \right) \,
      \left| b(t-t') \right| \, dt' .
   \label{eq:PrT-integrand-app}
\end{align}

\subsection*{Proof of Lemma~\ref{lemma:WlessB-lemma1}}
\label{subsec:bandlimited-receiver-app}
Consider $W \le B$ and $\gamma (K/2) z^2 \le 1$. We have
$x\le 1$ (see~\eqref{eq:x-def}) and the bound \eqref{eq:Tbound3} gives
\begin{align}
   \gamma\, T_I(c) \ge \gamma\, \frac{z\,x}{3} = (\kappa/4) \sqrt{1-\rho^2}.
   \label{eq:gTI-bound}
\end{align}
We further use the crude bounds \eqref{eq:sinc-bounds2} in Appendix~\ref{app:simple-bounds}
to upper bound the exponential of \eqref{eq:PrT-integrand-app} as
\begin{align}
   & \exp\left( - \gamma \, T_I(c) \frac{P_t}{2} \sqrt{1-\rho^2} \right) \nonumber \\
   & \le \left\{ \begin{array}{ll} 
   \exp\left( - (\kappa/8) P_t \, B^2 (t-t')^2 \right), & B|t-t'| \le 1 \\
   \exp\left( - (\kappa/9) P_t \right), & B|t-t'| > 1.
   \end{array} \right. \label{eq:region-bounds}
\end{align}
We now define
\begin{align}
  \sigma = \sqrt{(\kappa/8) P_t} \, B , \quad y = \sigma (t-t')
  \label{eq:sigma-y}
\end{align}
and use the time intervals
\begin{align}
   & {\mathcal I}_1 = \{t': |t-t'| \le 1/B, \, |t'| \le T/2 \} \\
   & {\mathcal I}_2 = \{t': |t-t'| > 1/B, \, |t'| \le T/2 \}
\end{align}
to bound \eqref{eq:PrT-integrand-app} as
\begin{align}
   & P_r(W,T,t) \nonumber \\
   & \overset{(a)}{\le} 2 \left[ Kz + (\sqrt{P_t}+\delta)^2 \right]
      \left[ \int_{\mathcal{I}_1} 2 W  e^{- \sigma^2 (t-t')^2} \, dt' \right. \nonumber \\
   & \qquad \left. + \int_{\mathcal{I}_2} |S(c)|^2  e^{- (\kappa/9) P_t} \, |b(t-t')| \, dt' \right] \nonumber \\
   & \overset{(b)}{\le} 4 \left[ Kz + (\sqrt{P_t}+\delta)^2 \right]
      \left[ \int_{0}^{\sigma/B} \frac{2W}{\sigma}  e^{-y^2} \, dy \right. \nonumber \\
   & \qquad \left. + \int_{1/B}^{\infty} 5 e^{- \sqrt{\gamma K z^2}} \, e^{- (\kappa/9) P_t} \, |b(\tau)| \, d\tau \right].
   \label{eq:PrWTt-bound-WlessB}
\end{align}
Step $(a)$ in \eqref{eq:PrWTt-bound-WlessB} follows by using \eqref{eq:region-bounds},
$|b(t-t')| \le 2W$, and $|S(c)|^2\le 1$; step $(b)$ follows by using $\tau=t-t'$,
inserting \eqref{eq:Smagbound}, and applying the second inequality in \eqref{eq:sinc-bounds2}
to bound $x\le\gamma (K/2) z^2 (19/20)$.
Evaluating the integrals and using \eqref{eq:Btime-bound} gives
\begin{align}
   & P_r(W,T,t) \le 4 \left[ Kz + (\sqrt{P_t}+\delta)^2 \right] \nonumber \\
   & \left[ \frac{2 \, W/B}{\sqrt{(\kappa/8) P_t}} 
   \frac{\sqrt{\pi}}{2} {\rm erf}\left( \sqrt{(\kappa/8) P_t} \right)
   + 5 e^{- \sqrt{\gamma K z^2} - (\kappa/9) P_t} \right] .
   \label{eq:WlessB-lemma1-app}
\end{align}

\subsection*{Proof of Lemma~\ref{lemma:WlessB-lemma2}}
\label{subsec:bandlimited-receiver2-b}
Consider $W \le B$ and $\gamma (K/2) z^2\ge 1$. We now have the
situation that $x \ge1$ can occur, so that
we need both bounds in \eqref{eq:Tbound3} depending on the value of $\tau=t-t'$.
We further need both bounds of \eqref{eq:sinc-bounds2}
in Appendix~\ref{app:simple-bounds}, depending on whether $|\tau|$ is smaller
or larger than $1/B$. This leads to four integration regions in general, as
described below.

We begin with \eqref{eq:sinc-bounds2}-\eqref{eq:sinc-bounds2a} to write
\begin{align}
   B |\tau| \le \sqrt{1 - \sinc^2 (B \tau)} \le 2 B |\tau|
\end{align}
for $0 \le B |\tau| \le 1$. Using \eqref{eq:x-def}, we thus have
\begin{align}
  \gamma (K/2) z^2 B |\tau| \le x \le \gamma K z^2 B |\tau| .
\label{eq:tau-bound}
\end{align}
Defining
\begin{align}
   \tau^*=1/(\gamma K z^2 B)
\end{align}
we have $2\tau^* \le 1/B$ by hypothesis, and \eqref{eq:tau-bound} gives
\begin{align}
\begin{array}{l}
|\tau| \le \tau^* \Rightarrow x \le 1 \\
|\tau| \ge 2 \tau^* \Rightarrow x \ge 1.
\end{array} \label{eq:taustar}
\end{align}

We proceed to upper bound $P_r(W,T,t)$ by splitting the integral
\eqref{eq:PrT-integrand-app} into four parts with $|\tau| \le \tau^*$,
$\tau^* \le |\tau| \le 2\tau^*$, $2\tau^* \le |\tau| \le 1/B$,
and $|\tau| > 1/B$. Using \eqref{eq:Tbound3} and \eqref{eq:sinc-bounds2},
we bound the exponential of \eqref{eq:PrT-integrand-app} as
\begin{align}
   & \exp\left( - \gamma \, T_I(c) \frac{P_t}{2} \sqrt{1-\rho^2} \right) \nonumber \\
   & \le \left\{ \begin{array}{ll}
      \exp\left( - (\kappa/8) P_t \, B^2 \tau^2 \right), & |\tau| \le \tau^* \\
      \exp\left( - \sqrt{\frac{\gamma B}{18 K}} P_t \sqrt{\tau} \right), & 2\tau^* \le |\tau| \le 1/B   \\
      \exp\left( - \sqrt{\frac{\gamma}{20 K}} P_t \right), & |\tau| > 1/B .
      \end{array} \right.
   \label{eq:WlessB-part2}
\end{align}
For the regime $\tau^* \le |\tau| \le 2\tau^*$, we use as upper bound the sum of the
first and second terms on the RHS of \eqref{eq:WlessB-part2}.

Inserting \eqref{eq:WlessB-part2} into \eqref{eq:PrT-integrand-app}, and
following similar steps as in \eqref{eq:sigma-y}-\eqref{eq:PrWTt-bound-WlessB},
we have
\begin{align}
   & P_r(W,T,t) \nonumber \\
   & \overset{(a)}{\le} 4 \left[ Kz + (\sqrt{P_t}+\delta)^2 \right]
      \left[ \int_{0}^{2 \sigma \tau^*} \frac{2W}{\sigma}  e^{-y^2} \, dy \right. \nonumber \\
   &  \left. + \int_{\tau^*}^{1/B} 10W  e^{- a \sqrt{\tau}} \, d\tau
      + \int_{1/B}^{\infty} 5 e^{- \sqrt{\gamma K z^2} - \sqrt{\frac{\gamma}{20 K}} P_t} \, |b(\tau)| \, d\tau \right] \nonumber \\
   & \overset{(b)}{\le} 4 \left[ Kz + (\sqrt{P_t}+\delta)^2 \right]  \nonumber \\
   & \left[ \frac{2 \, W/B}{\sqrt{(\kappa/8) P_t}}
      \frac{\sqrt{\pi}}{2} {\rm erf}\left( \sqrt{(\kappa/8) P_t} \, 2 B \tau^* \right)
      \right. \nonumber \\
   & \left. + \frac{20 W}{a} \left( \sqrt{\tau^*} + \frac{1}{a} \right) e^{- a \sqrt{\tau^*}}
       + 5e^{- \sqrt{\gamma K z^2} - \sqrt{\frac{\gamma}{20 K}} P_t} \right]
   \label{eq:WlessB-part2-1}
\end{align}
where for step $(b)$ we have defined
\begin{align}
   a & = \sqrt{\frac{\gamma B}{18 \, K}} P_t + \sqrt{2 \gamma K z^2 B}.
   \label{eq:WlessB-part2-2a}
\end{align}
Step $(a)$ in \eqref{eq:WlessB-part2-1} used the first inequality in
\eqref{eq:sinc-bounds2} to bound $|S(c)|^2$ for the second integral.
For step $(b)$, we applied
\begin{align}
   \int_{\tau^*}^{1/B} e^{- a \sqrt{\tau}} \, d\tau
   & = \int_{\sqrt{\tau^*}}^{1/\sqrt{B}} e^{- a t} \, 2t \, dt \nonumber \\
   & \le \frac{2}{a} \left( \sqrt{\tau^*} + \frac{1}{a} \right) e^{- a \sqrt{\tau^*}}.
   \label{eq:WlessB-part2-2}
\end{align}
Finally, we use
\begin{align}
   & a \ge \max\left( \sqrt{\frac{\gamma B}{18 \, K}} P_t, \sqrt{2 \gamma K z^2 B} \right) \\
   & \sqrt{(\kappa/8) P_t} \, 2 B \tau^* = \sqrt{\frac{P_t}{3 K z}}
\end{align}
to simplify \eqref{eq:WlessB-part2-1} and obtain
\begin{align}
   & P_r(W,T,t) \le 4 \left[ Kz + (\sqrt{P_t}+\delta)^2 \right]  \nonumber \\
   & \quad \left[ \frac{2\, W/B}{\sqrt{(\kappa/8) P_t}}
      \frac{\sqrt{\pi}}{2} {\rm erf}\left( \sqrt{\frac{P_t}{3 Kz}}\right)
      \right. \nonumber \\
   & \qquad \left. + \frac{25 \, W/B}{\gamma K z^2 } e^{- \frac{1}{\sqrt{18} K z} P_t }
       + 5 e^{- \sqrt{\gamma K z^2} - \sqrt{\frac{\gamma}{20 K}} P_t} \right] .
   \label{eq:WlessB-lemma2-app}
\end{align}

\subsection*{Proof of Lemma~\ref{lemma:WgreaterB-lemma}}
\label{subsec:bandlimited-receiver2-c}
Consider the case $W\ge B$ and $\gamma (K/2) z^2 \le 1$.
We again use the bound \eqref{eq:gTI-bound} and the
parameters \eqref{eq:sigma-y} to write
\begin{align}
   & P_r(W,T,t) \nonumber \\
   & \le 4 \left[ Kz + (\sqrt{P_t}+\delta)^2 \right]  
       \left[ \int_{0}^{\sigma/W} \frac{2W}{\sigma}  e^{-y^2} \, dy \right. \nonumber \\
   &  \left.  + \int_{1/W}^{1/B} e^{- \sigma^2 \tau^2} \, |b(\tau)| \, d\tau
       + 5 e^{- \sqrt{\gamma K z^2} - (\kappa/9) P_t} \right].
   \label{eq:WgreaterB-part1}
\end{align}
We now use $|b(t)|\le 2 (W (\pi t)^2)^{-1}$ to
upper bound the second integral in \eqref{eq:WgreaterB-part1} as follows:
\begin{align}
   \int_{1/W}^{1/B} e^{- \sigma^2 \tau^2}  \frac{2}{\pi^2 W \tau^2} \, d\tau  \le e^{- (\kappa/8) P_t (B/W)^2} \frac{W-B}{4W} .
   \label{eq:WgreaterB-part2}
\end{align}
Inserting into \eqref{eq:WgreaterB-part1}, we have
\begin{align}
   & P_r(W,T,t) \le 4 \left[ Kz + (\sqrt{P_t}+\delta)^2 \right] \nonumber \\
   & \left[ \frac{2\, W/B}{\sqrt{(\kappa/8) P_t}} 
   \frac{\sqrt{\pi}}{2} {\rm erf}\left( \sqrt{(\kappa/8) P_t} \, \frac{B}{W} \right) \right. \nonumber \\
   & \left. + \frac{1}{4}\left(1-\frac{B}{W}\right) e^{- (\kappa/8) P_t (B/W)^2}
      + 5 e^{- \sqrt{\gamma K z^2} - (\kappa/9) P_t} \right] .
     \label{eq:WgreaterB-lemma-app}
\end{align}

\subsection*{Proof of Lemma~\ref{lemma:WlessB-lemma3}}
\label{subsec:average-power-app1}
Observe that the RHS of \eqref{eq:WlessB-lemma1-app} includes the form
\begin{align}
   f(P_t) = (a + b \sqrt{P_t} + c P_t) \frac{\sqrt{\pi}}{2} \frac{{\rm erf}(\sqrt{s P_t})}{\sqrt{ sP_t}}
   \label{eq:form-app}
\end{align}
where $a=Kz + \delta^2$, $b=2 \delta$, $c=1$, and $s=\kappa/8$.
The results \eqref{eq:concave-deriv1}-\eqref{eq:concave-deriv2} derived in
Appendix~\ref{app:simple-bounds} state that \eqref{eq:form-app} is non-decreasing and
concave in $P_t$ if $P_t \ge 3(Kz + \delta^2)$.
Thus, if we replace $P_t$ with $P_t + P_o$, where the offset power is $P_o=3(Kz + \delta^2)$,
then \eqref{eq:form-app} is non-decreasing and concave for $P_t \ge 0$.

Next, by using \eqref{eq:xe-bound1}-\eqref{eq:xe-bound2} in Appendix~\ref{app:simple-bounds}, we have
\begin{align}
 20 \left[ Kz + (\sqrt{P_t}+\delta)^2 \right] e^{- \sqrt{\gamma K z^2} - (\kappa/9) P_t} \le c_1
\end{align}
where
\begin{align}
 c_1 & = 20 \left[ Kz + \delta^2 + \sqrt{\frac{18}{\kappa e}} \delta + \frac{9}{\kappa e} \, \right]
              e^{- \sqrt{\gamma K z^2}} .
              \label{eq:c1-app}
\end{align}
Observe that $c_1$ is independent of $P_t$ and $W$.

We now loosen \eqref{eq:WlessB-lemma1-app} to
\begin{align}
   & P_r(W,T,t) \le c_1 + 8 (W/B) f\left( P_t + P_o \right)
   \label{eq:WlessB-lemma1-app-bound}
\end{align}
where the RHS is non-decreasing and concave in $P_t$.
Jensen's inequality applied to the RHS of \eqref{eq:PrT-bound2a} thus gives the bound
\begin{align}
   & \bar{P}_r(W,T) \le c_1 + 8 (W/B) f\left( \bar{P}_T + P_o \right)
   \label{eq:PrT-bound3-app-a}
\end{align}
where we have replaced $P_t$ with $\bar{P}_T$, see \eqref{eq:P-constraint}.
Furthermore, we require $\bar{P}_T \le P$, so we have
\begin{align}
   & \bar{P}_r(W,T) \le c_1 + 8 (W/B) f\left( P + P_o \right).
   \label{eq:PrT-bound3-app}
\end{align}

We may simplify the bound further without changing the scaling behavior
that we are interested in. We use ${\rm erf}(y) \le 1$ and loosen
\eqref{eq:PrT-bound3-app} to
\begin{align}
   & \bar{P}_r(W,T) \le c_1 + \frac{8 W}{B}
   \frac{Kz + \left( \sqrt{P+ P_o}+\delta \right)^2}
   {\sqrt{(\kappa/8) \left( P+P_o \right)}}.
   \label{eq:PrT-bound3-app-1}
\end{align}
For example, the RHS of \eqref{eq:PrT-bound3-app-1} scales as
$\sqrt{P}$ for large $P$.

\subsection*{Proof of Lemma~\ref{lemma:WlessB-lemma4}}
\label{subsec:average-power-app2}
We repeat the above steps \eqref{eq:form-app}-\eqref{eq:PrT-bound3-app-1}
for \eqref{eq:WlessB-lemma2-app}. We now have $s=1/(3Kz)$ in
\eqref{eq:form-app}, and we use
\eqref{eq:xe-bound1}-\eqref{eq:xe-bound2} to compute
\begin{align}
 \frac{100}{\gamma K z^2 } \left[ Kz + (\sqrt{P_t}+\delta)^2 \right]
    e^{- \frac{1}{\sqrt{18} K z} P_t } & \le c_2 \nonumber \\
 20 \left[ Kz + (\sqrt{P_t}+\delta)^2 \right]
    e^{- \sqrt{\gamma K z^2} - \sqrt{\frac{\gamma}{20 K}} P_t} & \le c_3
\end{align}
where
\begin{align}
 c_2 & = \frac{100}{\gamma K z^2 } \left[ Kz + \delta^2 + \sqrt{\frac{6Kz}{e}} \delta
              +  \frac{\sqrt{18}Kz}{e} \, \right] \nonumber \\
 c_3 & = 20 \left[ Kz + \delta^2 + \left(\frac{80 K}{\gamma e^2}\right)^{1/4} \delta
              +  \sqrt{\frac{20K}{\gamma e^2}} \, \right] e^{- \sqrt{\gamma K z^2}} .
\end{align}
Observe that $c_2$ and $c_3$ are independent of $P_t$ and $W$.
We loosen \eqref{eq:WlessB-lemma2-app}
and use Jensen's inequality to write
\begin{align}
   \bar{P}_r(W,T) & \le \frac{W}{B} c_2 + c_3 + \frac{16(W/B)}{\gamma K z^2} f\left( \bar{P}_T + P_o \right)
   \label{eq:PrT-bound4-app-a}
\end{align}
which is the analog of \eqref{eq:PrT-bound3-app-a}. The RHS of
\eqref{eq:PrT-bound4-app-a} is increasing in $\bar{P}_T$, so we have
\begin{align}
   \bar{P}_r(W,T) & \le \frac{W}{B} c_2 + c_3 + \frac{16(W/B)}{\gamma K z^2} f\left( P + P_o \right) .
   \label{eq:PrT-bound4-app}
\end{align}

To study large $P$, we may simplify \eqref{eq:PrT-bound4-app}
by again using ${\rm erf}(y) \le 1$. We arrive at a similar bound as
\eqref{eq:PrT-bound3-app-1}, and $\bar{P}_r(W,T)$ again
scales at most as $\sqrt{P}$ for large $P$.

\appendix{Energy Bound for PAM with Rectangular Pulses and Ring Modulation}
\label{app:PAM-Rect}
PAM with rectangular pulses and ring modulation has a constant
envelope, so we can apply~\eqref{eq:ce-autocorr}.
Suppose $T_r=T_s$, so that for $mT_r \le t,t'< (m+1)T_r$,
we have  $\phi_\Delta(t,t')=0$ and
\begin{align}
   & \bar{A}(t,t') = A(t,t') \nonumber \\
   & = |S|^2 \left[ T_R K \rho + P
      \left( S_R^2 + \frac{S_I^2}{1-\rho^2} \left( 1-\rho \right)^2 \right) \right] \nonumber \\
   & \qquad \exp\left( - \gamma \frac{T_I}{\sqrt{1-\rho^2}} 2P (1 - \rho) \right).
   \label{eq:ce-autocorr-app}
\end{align}
Note that \eqref{eq:ce-autocorr-app} is real-valued.
Using \eqref{eq:Erbar-def} rather than \eqref{eq:PrT-bound}, we thus have
(cf.~\eqref{eq:autocorr-bound})
\begin{align}
   & \bar{E}_m(T_r) \le   \frac{1}{T_r} 
   \left[ K z + P \left (1 + \gamma^2 K^2 z^4 \right) \right] \nonumber \\
   & \quad \int_{0}^{T_r} \int_{0}^{T_r} \exp\left( - \gamma \frac{T_I}{\sqrt{1-\rho^2}} 2P (1 - \rho) \right)
   \, dt' \, dt .
   \label{eq:Tr-bound2}
\end{align}

Consider $T_r = 1/B$ and $\gamma (K/2) z^2 \le 1$. We
use \eqref{eq:gTI-bound} and \eqref{eq:sinc-bounds2}
to upper bound the exponential of \eqref{eq:Tr-bound2} with
\begin{align}
   \exp\left( - (\kappa/2) \, P \, B^2 (t-t')^2 \right) \label{eq:region-bounds2}
\end{align}
for the range of interest with $B|t-t'| \le 1$. We thus have
\begin{align}
   & \bar{E}_m(T_r) \le \frac{1}{T_r} \left[ K z + P \left (1 + \gamma^2 K^2 z^4 \right) \right] \nonumber \\
   & \int_{0}^{T_r} \frac{1}{\sigma} \frac{\sqrt{\pi}}{2} \left[  {\rm erf}\left( \sigma t \right)
       + {\rm erf}\left( \sigma (T_r-t) \right) \right] dt
   \label{eq:Tr-bound3}
\end{align}
where $\sigma=\sqrt{(\kappa/2) P} B$. Evaluating the integral gives
\begin{align}
   & \bar{E}_m(T_r)
   \le \left[ K z + P \left (1 + \gamma^2 K^2 z^4 \right) \right] 
   \frac{2 T_r}{\sqrt{(\kappa/2) P}} \frac{\sqrt{\pi}}{2} \nonumber \\
   & \left[ {\rm erf}\left( \sqrt{(\kappa/2) P} \right)
   - \frac{1}{\sqrt{(\kappa/2) P}} \frac{1}{\sqrt{\pi}} \left(1 - e^{-(\kappa/2) P} \right) \right].
   \label{eq:Tr-bound-app}
\end{align}
As $\gamma \rightarrow 0$ we have $\kappa \rightarrow 0$, and we find that
the RHS of \eqref{eq:Tr-bound-app} becomes $(Kz + P) T_r$, as expected for
a linear channel.

\appendix{Various Bounds}
\label{app:simple-bounds}

\vspace{-7mm}
\subsection*{Sinc Function}
We have the following bounds, see Fig.~\ref{fig:SimpleBounds}:
\begin{align}
|\sinc(y)| = \left| \frac{\sin(\pi y)}{\pi y} \right| & \le \left\{ \begin{array}{ll}
1 - y^2, & |y|\le 1 \\
1/4, & |y|>1
\end{array} \right. \label{eq:sinc-bounds1}  \\
\sinc(y)^2 & \le \left\{ \begin{array}{ll}
1 - y^2, & |y|\le 1 \\
1/20, & |y|>1
\end{array} \right. \label{eq:sinc-bounds2} \\
\sinc(y)^2 & \ge 1 - 4 y^2 . \label{eq:sinc-bounds2a}
\end{align}

\subsection*{Exponential Function}
We use two bounds on the exponential function with $a>0$:
\begin{align}
& y e^{-a y} \le \frac{1}{ae} \; \text{ with equality if } y=\frac{1}{a}
\label{eq:xe-bound1} \\
& y e^{-a y^2} \le \frac{1}{\sqrt{2ae}} \; \text{ with equality if } y=\frac{1}{\sqrt{2a}}.
\label{eq:xe-bound2} 
\end{align}

\begin{figure}[t!]
  \centerline{\includegraphics[scale=0.48]{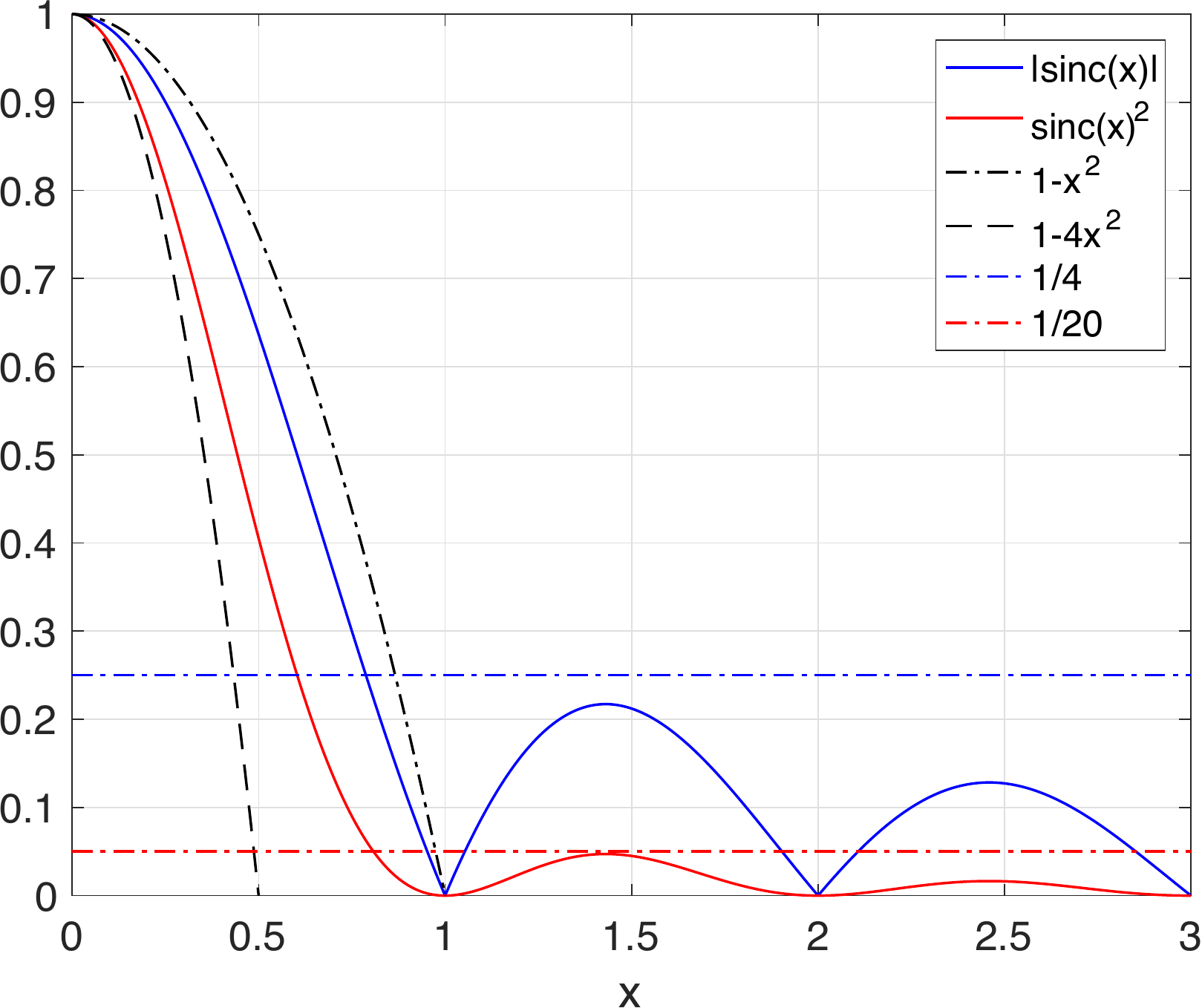}}
  \caption{Simple bounds on $|\sinc(x)|$ and $\sinc(x)^2$.}
  \label{fig:SimpleBounds}
\end{figure}

\subsection*{Concavity of a Special Function}
Consider the function
\begin{align}
   f(P) = \left( a+b\sqrt{P}+cP \right) \frac{\sqrt{\pi}}{2} \, \frac{{\rm erf}\left(\sqrt{s P} \right)}{\sqrt{s P}}
\label{eq:concave-function}
\end{align}
where $a$, $b$, and $c$ are non-negative constants and $s$ is a positive constant. We compute
\begin{align}
  \frac{df}{dP} =
  & \frac{e^{-sP}}{2 s P^{3/2}} \left[ (cP+a) s \sqrt{P} + b s P \right. \nonumber \\
  & \qquad \left. + \sqrt{s} \, (cP-a) \, e^{sP} \frac{\sqrt{\pi}}{2} {\rm erf}\left(\sqrt{sP} \right) \right]
  \label{eq:concave-deriv1} \\
  \frac{d^2f}{dP^2}  =
  & - \frac{e^{-sP}}{4 P^2 \sqrt{sP}} \left[ \sqrt{s}P \left\{ 2s \sqrt{P} (cP+a) + 2bsP + b \right\} \right. \nonumber \\
  & \left. + (cP-3a) \left\{ e^{sP} \frac{\sqrt{\pi}}{2} {\rm erf}\left(\sqrt{sP}\right) - \sqrt{sP} \right\} \right].
  \label{eq:concave-deriv2}
\end{align}
From \eqref{eq:concave-deriv1}, we see that $f(P)$ is non-decreasing if $cP\ge a$.
Similarly, for \eqref{eq:concave-deriv2} we use~\cite[8.253.1]{Gradshteyn-Ryzhik-91} to bound
\begin{align}
  e^{y^2} \cdot \frac{\sqrt{\pi}}{2} {\rm erf}(y) \ge y
  \label{eq:concave-lower}
\end{align}
for $y\ge 0$ and find that $f(P)$ is concave if $cP\ge 3a$. Thus,
$f(P+3a/c)$ is both non-decreasing and concave if $P\ge0$ and $c>0$.

\renewcommand{\arraystretch}{1.05}
\begin{table*}[th]
\begin{center}
  \caption{List of Acronyms, Symbols, and Notation}
  \label{table:notation}
  \begin{tabular}{| l | l | l | }
    \hline 
    \hline
    \multicolumn{2}{|l|}{\bf Acronyms} & {\bf Defined in:} \\
    \hline
    ASK, FSK, PSK & Amplitude-, frequency-, phase-shift keying & Sec.~\ref{subsec:receiver-noise} and~\ref{subsec:ring-modulation} \\ \hline
    AWGN & Additive white Gaussian noise & Sec.~\ref{sec:intro} \\ \hline
    EDFA  & Erbium-doped fiber amplifier & Sec.~\ref{subsec:OA-noise-model} \\ \hline
    LHS, RHS  & Left-hand side, right-hand side & Sec.~\ref{subsec:hyperbolic-functions} and~\ref{subsec:timelimited-receiver} \\ \hline
    NLSE & Nonlinear Schr\"odinger equation & Sec.~\ref{subsec:per-sample-vs-multi-sample} and~\ref{subsec:NLSE} \\ \hline
    OA & Optical amplification / optical amplifier & Sec.~\ref{sec:intro} \\ \hline
    PAM & Pulse amplitude modulation & Sec.~\ref{subsec:cyclostationary} \\ \hline
    PDD, PSD, PSDD & Power distance density, power spectral density, power spectral-distance density & Sec.~\ref{sec:intro} and~\ref{subsec:OA-noise-model} \\ \hline
    SNR & Signal-to-noise ratio & Sec.~\ref{subsec:AWGN} \\ \hline
    SPM, XPM & Self-phase modulation, cross-phase modulation & Sec.~\ref{sec:auto} and~\ref{subsec:autocorr-bounds} \\ \hline
    WDM & Wavelength division multiplexing & Sec.~\ref{sec:intro} \\ \hline
    \hline
    \multicolumn{3}{|l|}{\bf Fiber and Noise Parameters} \\ \hline
    $\alpha$, $\beta_1$, $\beta_2$ & Loss coefficient, group velocity, group velocity dispersion & Sec.~\ref{subsec:NLSE} \\ \hline
    $\gamma$ & Nonlinear Kerr coefficient & Sec.~\ref{subsec:NLSE} \\ \hline
    $k_B$, $T_e$ & Boltzmann's constant, temperature & Sec.~\ref{subsec:AWGN} \\ \hline
    $N_0=k_B T_e$ & receiver noise PSD & Sec.~\ref{subsec:AWGN} \\ \hline
    $N_A$ & OA noise PSDD & Sec.~\ref{subsec:OA-noise-model} \\ \hline
    $K=N_A B$ & OA noise PDD & Sec.~\ref{subsec:OA-noise-model} \\ \hline
    $\rho(t-t')$ & OA noise correlation coefficient & Sec.~\ref{subsec:OA-noise-model} \\ \hline
    \hline
    \multicolumn{3}{|l|}{\bf Signal Parameters and Variables} \\
    \hline
    $B$ & OA bandwidth & Sec.~\ref{subsec:per-sample-vs-multi-sample} and~\ref{subsec:OA-noise-model} \\ \hline
    $W$ & Signal or receiver bandwidth & Sec.~\ref{subsec:per-sample-vs-multi-sample},~\ref{subsec:bandlimited-receiver-defn} and~\ref{subsec:propagating-bandwidth} \\ \hline
    $f_0$  & Carrier frequency & Sec.~\ref{subsec:NLSE} \\ \hline
    $T_s$ & Time period of PAM & Sec.~\ref{subsec:cyclostationary} \\ \hline
    $P$, $P_t$, $P_o$ & Average launch power, instantaneous launch power, offset power & Sec.~\ref{sec:intro},~\ref{subsec:bandlimited-receiver} and~\ref{subsec:average-power} \\ \hline
    \rule{0pt}{3mm}$c$  & $-j \gamma (K/2) \sqrt{1-\rho^2}$ & Sec.~\ref{subsec:hyperbolic-functions} and \eqref{eq:c-def} \\ \hline
    \rule{0pt}{3mm}$\kappa$ & $2 \gamma^2 K z^3/3$ & Sec.~\ref{subsec:low-noise} and \eqref{eq:kappa} \\ \hline
    $\delta$ & Variable used for bounding & Sec.~\ref{subsec:autocorr-bounds} and \eqref{eq:STratio} \\ \hline
    $S(c)$, $S_R(c)$, $S_I(c)$ &  $\sech\left( \sqrt{2c} \, z\right)$, real part $\Re(S(c))$, imaginary part $\Im(S(c))$ & Sec.~\ref{subsec:hyperbolic-functions} \\ \hline
    $T(c)$, $T_R(c)$, $T_I(c)$ &  $\left. \tanh\left( \sqrt{2c} \, z\right)\right/\sqrt{2c}$, real part $\Re(T(c))$, imaginary part $\Im(T(c))$ & Sec.~\ref{subsec:hyperbolic-functions} \\ \hline
    \hline
    \multicolumn{3}{|l|}{\bf Signals} \\
    \hline
    $u(z,t)=u_z(t)=u(t)=u$ & Signal at distance $z$ and time $t$ & Sec.~\ref{subsec:notation} \\ \hline
    \rule{0pt}{3mm}$u(z,t')=u_z(t')=u(t')=u'$ & Signal at distance $z$ and time $t'$ & Sec.~\ref{subsec:notation} \\ \hline
    $U(z,t)=U_z(t)=U(t)=U$ & Random signal at distance $z$ and time $t$ & Sec.~\ref{subsec:notation} \\ \hline
    \rule{0pt}{3mm}$U(z,t')=U_z(t')=U(t')=U'$ & Random signal at distance $z$ and time $t'$ & Sec.~\ref{subsec:notation} \\ \hline
    \rule{0pt}{3mm}$\tilde{u}(z,f)$ &  Fourier transform of $u(z,t)$ at distance $z$ & Sec.~\ref{subsec:linear-model} \\ \hline
    $u_r(t)$ & Receiver signal with AWGN & Sec.~\ref{subsec:AWGN} \\ \hline
    $N_r(\cdot)$ and $n_r(\cdot)$ & Receiver AWGN process and its realization & Sec.~\ref{subsec:AWGN} \\ \hline
    $W(z,t)$, $W_R(z,t)$, $W_I(z,t)$ & Spatial Wiener process, real part $\Re(W(z,t))$, imaginary part $\Im(W(z,t))$ & Sec.~\ref{subsec:OA-noise-model} \\ \hline
    $\hat{w}(z,t)$ & Accumulated noise for the model~\eqref{eq:NLSE-dispersion-free} & Sec.~\ref{subsec:dispersion-free-model} and~\eqref{eq:dispersion-free-model-noise} \\ \hline
    \rule{0pt}{3mm}$\left\{\phi_{m}(\cdot)\right\}_{m=1}^{\infty}$ & Complete orthonormal basis & Sec.~\ref{subsec:receiver-noise} \\[1mm] \hline
    \rule{0pt}{3mm}$b(t)$ and $\tilde{b}(f)$ & $2W \, \sinc\left( W t \right)^2$ and its Fourier transform & Sec.~\ref{subsec:bandlimited-receiver-defn} \\ \hline
    \hline
    \multicolumn{3}{|l|}{\bf Autocorrelation Functions} \\
    \hline
    \rule{0pt}{3mm}$A_z(t,t')=A(t,t')$ & Autocorrelation function at distance $z$ conditioned on $U_0=u_0$ & Sec.~\ref{subsec:autocorr-psd} \\ \hline
    \rule{0pt}{3mm}$\bar{A}_z(t,t')=\bar{A}(t,t')$ & Average autocorrelation function at distance $z$ & Sec.~\ref{subsec:autocorr-psd} \\ \hline
    \rule{0pt}{3mm}$\bar{A}(\tau)$ & Time-averaged autocorrelation function at distance $z$ &Sec.~\ref{subsec:cyclostationary} \\ \hline
    \rule{0pt}{3mm}$\mathcal{A}(t,t')$ & Approximate autocorrelation function at distance $z$ conditioned on $U_0=u_0$ & Sec.~\ref{subsec:low-noise} \\ \hline
    \rule{0pt}{3mm}$\bar{\mathcal{A}}(t,t')$, $\bar{\mathcal{A}}(\tau)$ & Average approximate autocorrelation functions at distance $z$ & Sec.~\ref{subsec:ring-modulation} \\ \hline
    \hline
    \multicolumn{3}{|l|}{\bf PSD, Receiver Power, Receiver Energy} \\
    \hline
    \rule{0pt}{3mm}$\bar{\mathcal{P}}(f)$, $\bar{\mathcal{P}}(f,T)$ & PSDs at distance $z$ & Sec.~\ref{subsec:autocorr-psd} \\ \hline
    \rule{0pt}{3mm}$\bar{P}_r(W)$, $\bar{P}_r(W,T)$ & Average receiver powers in a band of bandwidth $W$ & Sec.~\ref{subsec:AWGN} and~\ref{subsec:bandlimited-receiver-defn} \\ \hline
    $P_r(W,T,t)$ & Upper bound on instantaneous receiver power conditioned on $U_0=u_0$ & Sec.~\ref{subsec:bandlimited-receiver} \\ \hline
    $E_m(T_r)$ & Energy at receiver in time interval $[mT_r,(m+1)T_r)$ conditioned on $U_0=u_0$ & Sec.~\ref{subsec:timelimited-receiver} \\ \hline
    \rule{0pt}{3mm}$\bar{E}_m(T_r)$ & Average energy at receiver in time interval $[mT_r,(m+1)T_r)$ & Sec.~\ref{subsec:timelimited-receiver} \\ \hline
    $E_m(T_r,t)$ & Upper bound on instantaneous receiver energy conditioned on $U_0=u_0$ & Sec.~\ref{subsec:time-resolution-limited} \\ \hline
    \hline
    \multicolumn{3}{|l|}{\bf Capacity} \\
    \hline
    $C(W)$ & Capacity with bandwidth $W$ & Sec.~\ref{subsec:AWGN} \\ \hline
    $\eta(W)$ & Spectral efficiency with bandwidth $W$ & Sec.~\ref{subsec:AWGN} \\ \hline
    \hline 
  \end{tabular}
\end{center}
\end{table*}

\section*{Acknowledgments}
\label{sec:acks}
The author wishes to thank R.-J. Essiambre, J. Garc\'ia, and P. Schulte for comments,
corrections, and suggestions. The author also wishes to thank the Associate Editor and
the reviewers for suggestions that improved the presentation.


\begin{IEEEbiographynophoto}{\bf Gerhard Kramer} (S'91-M'94-SM'08-F'10) received the Dr. sc. techn. degree from ETH Zurich in 1998. From 1998 to 2000, he was with Endora Tech AG in Basel, Switzerland, and from 2000 to 2008 he was with the Math Center at Bell Labs in Murray Hill, NJ, USA. He joined the University of Southern California, Los Angeles, CA, USA, as a Professor of Electrical Engineering in 2009. He joined the Technical University of Munich (TUM) in 2010, where he is currently Alexander von Humboldt Professor and Chair of Communications Engineering. His research interests include information theory and communications theory, with applications to wireless, copper, and optical fiber networks. Dr. Kramer served as the 2013 President of the IEEE Information Theory Society.
\end{IEEEbiographynophoto}

\end{document}